\newcommand{\ud}{\mathrm{d}}
\newcommand{\ii}{\mathrm{i}}
\newcommand{\ind}[1]{#1_{N_1,N_2}^{(k_1,k_2)}}
\newcommand{\bra}[1]{\langle #1|}
\newcommand{\ket}[1]{|#1\rangle}
\newcommand{\tr}{\mathrm{Tr}}
\newcommand{\de}{\partial}
\newcommand{\alf}[2]{\alpha^{(#1,#2)}}
\newtheorem{theorem}{Theorem}[section]
\newtheorem{lemma}{Lemma}[section]
\newtheorem{proposition}{Proposition}[section]
\newtheorem{corollary}{Corollary}[section]
\numberwithin{equation}{section}
\newcommand{\cH}{\mathcal{H}}
\begin{document}

\title{Mean-field quantum dynamics for a mixture of Bose-Einstein condensates%\thanks{This work is partially supported by the 2014-2017 MIUR-FIR grant ``\emph{Cond-Math: Condensed Matter and Mathematical Physics}'', code RBFR13WAET, and by a 2015 visiting research fellowship at the International Center for Mathematical Research CIRM, Trento}
}

%\titlerunning{Mean-field quantum dynamics for a mixture of Bose-Einstein condensates} % if too long for running head

\author{Alessandro Michelangeli \thanks{michel@math.lmu.de, alemiche@sissa.it}}
\affil{SISSA -- International School for Advanced Studies \\
				Via Bonomea 265, 34136 Trieste (Italy)}
\author{Alessandro Olgiati \thanks{aolgiati@sissa.it}}
\affil{   SISSA -- International School for Advanced Studies \\
				Via Bonomea 265, 34136 Trieste (Italy) \\
             }  

%\date{Received: date / Accepted: date}
% The correct dates will be entered by the editor

\maketitle

\begin{abstract}
We study the effective time evolution of a large quantum system consisting of a mixture of different species of identical bosons in interaction. If the system is initially prepared so as to exhibit condensation in each component, we prove that condensation persists at later times and we show quantitatively that the many-body Schr\"{o}dinger dynamics is effectively described by a system of coupled cubic non-linear Schr\"{o}dinger equations, one for each component.\vspace{0.5cm}

{\bf Keywords:} effective evolution equations, many-body quantum dynamics, mixture condensate, partial trace, reduced density matrix, mean-field scaling, Hartree equation, coupled non-linear Schr\"{o}dinger equations
% \PACS{PACS code1 \and PACS code2 \and more}
% \subclass{MSC code1 \and MSC code2 \and more}
\end{abstract}

%%%%%  SECTION 1

\section{Introduction: BEC mixtures and non-linear effective dynamics}\label{sec:intro}

Bose-Einstein condensation is the well-known quantum phenomenon that occurs in a large many-body system of identical bosons when a macroscopic number of particles occupy the same one-body state. It can be thought of as a crowding phenomenon where the many-body state is for a large part factorised into the product of identical copies of the same one-body orbital, the condensate wave-function.

The mathematics of the Bose gas and its condensation is an extraordinarily rich and active subject that dates back from the very first systematic treatment of Bogolubov in the mid 1940's (that is, some 20 years after the theoretical discovery of BEC), boosted recently by the first experimental realisations of condensates in the mid 1990's and the subsequent advances in the techniques for manipulating ultra-cold atoms. The literature is therefore huge: we refer to the monograph \cite{LSeSY-ober} for what concerns the ``static'' picture (emergence of BEC, characterisation of the ground state, etc.) and to the works we will be mentioning in the following and to the references therein for the ``dynamical'' picture (the condensate's evolution).

Let us only recall here a few main and well-known points in order to formulate our problem. Let us consider the Hilbert space 
\begin{equation}
\mathfrak{h}\;=\;L^2(\Lambda)\otimes \mathbb{C}^{2s+1}
\end{equation}
for a quantum particle of spin $s\in\frac{1}{2}\mathbb{N}_0$ confined in a domain $\Lambda\subset\mathbb{R}^d$ (possibly $\mathbb{R}^d$ itself) and, correspondingly, for a  $d$-dimensional (bosonic) system of $N$ such particles, let us consider the Hilbert spaces 
\begin{equation}\label{eq:Hilbert_N}
\cH_N\;=\;\mathfrak{h}^{\otimes N}\,,\qquad\cH_{N,\mathrm{sym}}\;=\;\mathfrak{h}^{\otimes N}_\mathrm{sym}\,,
\end{equation}
namely the $N$-fold and the  \emph{symmetric} $N$-fold tensor product of $\mathfrak{h}$. 
Let $\gamma_N$, a positive trace-class operator on $\cH_{N,\mathrm{sym}}$ with unit trace, be the density matrix describing a state of a given bosonic system. Consistently with the physical notion of ``occupation numbers'', the standard mathematical tool to express the occurrence of BEC when the system is in the state $\gamma_N$  is the so-called one-body marginal (or one-body reduced density matrix)
\begin{equation}\label{eq:partial_trace1}
\gamma_N^{(1)}\;=\;\mathrm{Tr}_{N-1}\,\gamma_N\,.
\end{equation}
Here, the map $\mathrm{Tr}_{N-1}:\mathcal{B}_1(\cH_{N,\mathrm{sym}})\to\mathcal{B}_1(\mathfrak{h})$ is the \emph{partial trace} from trace class operators on $\cH_N$ to trace class operators on $\mathfrak{h}$, defined by
\begin{equation}\label{eq:def_partial_trace_with_basis}
\langle \varphi,(\mathrm{Tr}_{N-1}\,T)\psi\rangle_\mathfrak{h}\;=\;\sum_{k}\langle\varphi\otimes\xi_k,T\,\psi\otimes\xi_k\rangle_{\cH_N}\qquad\forall\varphi,\psi\in\mathfrak{h},
\end{equation}
where $(\xi_k)_k$ is an orthonormal basis of $\cH_{N-1,\mathrm{sym}}$. Observe that \eqref{eq:def_partial_trace_with_basis} is in fact independent of the choice of the basis and is equivalent to
\begin{equation}\label{eq:def_partial_trace_without_basis}
 \mathrm{Tr}_\mathfrak{h}(A\cdot\mathrm{Tr}_{N-1}\,T)\;=\;\mathrm{Tr}_{\cH}(A\otimes\mathbbm{1}_{N-1})\cdot T))\qquad\forall A\in\mathcal{B}(\mathfrak{h})\,.
\end{equation}
Thus, $\gamma_N^{(1)}$ is obtained by ``tracing out'' $N-1$ degrees of freedom from $\gamma_N$: for example, for a system of $N$ spinless ($s=0$) bosons in the pure state $\Psi_{N,\mathrm{sym}}\in L^2(\mathbb{R}^{Nd},\ud x_1\cdots\ud x_N)$ ($L^2_{N,\mathrm{sim}}\equiv$ the wave-functions that are symmetric under permutation of any two variables), the corresponding one-body marginal $\gamma_N^{(1)}$ has kernel
\begin{equation}
\gamma_N^{(1)}(x,x')\;=\;\int_{\mathbb{R}^{(N-1)d}}\!\!\!\!\!\!\Psi_N(x,x_2,\dots,x_N)\,\overline{\Psi_N(x',x_2,\dots,x_N)}\,\ud x_2\cdots\ud x_N\,.
\end{equation}

Being a density matrix, the one-body marginal $\gamma_N^{(1)}$ has a complete set of real non-negative eigenvalues that sum up to 1, and being it the partial trace of a many-body state $\gamma_N$, it is natural to think of these eigenvalues as the occupation numbers in $\gamma_N$, that is, each eigenvalue of $\gamma_N^{(1)}$ can be interpreted as the fraction of the $N$ particles that are in the same one-body state given by eigenvector associated with the considered eigenvalue.

In a sense to be specified in the given context, one therefore says that the many-body state $\gamma_N$ exhibits \emph{condensation in the state} $\varphi\in\mathfrak{h}$ ($\|\varphi\|=1$) if $\varphi$ is an eigenvector of $\gamma_N^{(1)}$ that belongs to a non-degenerate eigenvalue that is by far larger than all other eigenvalues, i.e., it is almost 1 while all other eigenvalues are almost zero. In other words, $\gamma_N^{(1)}\approx|\varphi\rangle\langle\varphi|$, the rank-one projection onto $\varphi$. 
This notion of condensation becomes conceptually well-posed and mathematically rigorous in the limit $N\to\infty$ -- a genuine thermodynamic limit, or some simpler prescription on $N\to\infty$ that mimics the thermodynamic limit. For the present discussion we only consider the case of \emph{complete} condensation, namely
\begin{equation}\label{eq:def_100BEC}
\lim_{N\to\infty}\gamma_{N}^{(1)}\;=\;|\varphi\rangle\langle\varphi|\qquad(\textrm{complete BEC})\,.
\end{equation}
Even if a priori the limit in \eqref{eq:def_100BEC} can be stated in several inequivalent operator topologies, from the trace norm to the weak operator topology, the bounds
\begin{equation}\label{eq:equivalent-BEC-control-1compontent}
1-\langle\varphi,\gamma_N^{(1)}\varphi\rangle\;\leqslant\;\mathrm{Tr}\big|\,\gamma_{N}^{(1)}-|\varphi\rangle\langle\varphi|\,\big|\;\leqslant\;2\sqrt{1-\langle\varphi,\gamma_N^{(1)}\varphi\rangle}
\end{equation}
(see Lemma \ref{lemma:intertwined-controls} below)
show that the occurrence of the convergence $\gamma_{N}^{(1)}\to|\varphi\rangle\langle\varphi|$ can be monitored  equivalently in any of them.

In \eqref{eq:def_100BEC}  $\varphi$ is customarily referred to as the \emph{condensate wave-function} and in the presence of condensation the diagonal $\gamma_N^{(1)}(x,x)$ of the one-body marginal becomes, for large $N$, a good approximation of the condensate profile $|\varphi(x)|^2$. While the limit  \eqref{eq:def_100BEC} is naturally interpreted as if the many-body state was almost completely factorised as $\varphi^{\otimes N}$, the closeness $\gamma_N^{(1)}\approx|\varphi\rangle\langle\varphi|$ is obviously much weaker than the actual closeness  $\Psi_N\approx\varphi^{\otimes N}$ in the norm of $\cH_N$. It can instead be argued that $\gamma_N^{(1)}\approx|\varphi\rangle\langle\varphi|$ implies a factorisation of the many-body state when an amount $k=O(N)$ of particles are considered: for the precise meaning of this control, as well as for equivalent characterisation of complete BEC, we refer to \cite{am_equivalentBEC} and \cite[Section 2]{kp-2009-cmp2010}, as well as to the following considerations in Section \ref{sec:indicators_of_convergence}.

Within this formalism, the \emph{dynamical problem} for the time evolution of a condensate is the problem of the persistence of BEC, in the form of the asymptotic condition \eqref{eq:def_100BEC}, along the evolution
$\gamma_N\mapsto \gamma_{N,t}=e^{-\ii t H_N}\gamma_N e^{\ii t H_N}$ governed by a given many-body Hamiltonian $H_N$, and thus of the rigorous derivation of the law $\varphi\mapsto\varphi_t$ that gives the condensate wave-function at later times. Whereas the solution to the Schr\"{o}dinger equation for a system of $N$ interacting particles is obviously out of reach when $N$ is large, both analytically and numerically, the language of the reduced density matrix boils down the dynamical problem to the level of one-body states, a major simplification in which one renounces to the complete knowledge of $\gamma_{N,t}$. The price is the replacement of the many-body \emph{linear} dynamics for $\gamma_{N,t}$ with an effective \emph{non-linear} dynamics for $\varphi_t$, the non-linearity being due to the inter-particle interaction and emerging in the form of a non-linear self-interaction term (a cubic non-linearity, for typical two-body interactions) in the non-linear Schr\"{o}dinger equation for $\varphi_t$. In shorts, the dynamical problem consists of the completion of the following diagram, assuming that the first line holds at time $t=0$:
\begin{equation}\label{scheme_for_marginals}
\begin{CD}
\Psi_N @>\scriptsize\textrm{partial trace}>>\gamma_N^{(1)} @>N\to\infty>> |\varphi\rangle\langle\varphi| \\
@ V\scriptsize\begin{array}{c} \textrm{many-body} \\ \textrm{\textbf{linear} dynamics}  \end{array} VV @V  VV               @VV\scriptsize\begin{array}{c}\textrm{\textbf{nonlinear}} \\ \textrm{Schr\"{o}dinger eq.} \end{array}V    \\
\Psi_{\! N,t} @>\scriptsize\textrm{\qquad\qquad\qquad\;}>>\gamma_{\! N,t}^{(1)} @>N\to\infty>> |\varphi_t\rangle\langle\varphi_t|
\end{CD}
\end{equation}

There is a vast literature on the rigorous derivation of non-linear Schr\"{o}dinger equations as the effective equations for the dynamics of a many-body Bose gas that at time $t=0$ displays the asymptotic factorisation \eqref{eq:def_100BEC}. It covers different space dimensions ($d=1,2,3$), a wide range of local singularities and long-distance decays for the inter-particle interactions, and various types of scaling limits in the many-body Hamiltonian $H_N$. We refer to the reviews \cite{S-2007,S-2008,Benedikter-Porta-Schlein-2015} for a comprehensive outlook, remarking  that this  problem  has involved a variety of approaches and techniques from analysis, operator theory, kinetic theory, and probability.

Let us emphasize that in the lack (so far) of a rigorous control of the asymptotics \eqref{eq:def_100BEC} in a genuine thermodynamic limit, it is customary to investigate and  reproduce it in an ad hoc scaling limit in which the $N$-body Hamiltonian is suitably re-scaled with $N$, thus making the actual inter-particle interaction $N$-dependent. This artifact on the one hand allows for an explicit determination of the limit $N\to\infty$ in $\gamma_N^{(1)}$, while on the other hand it preserves at any $N$ an amount of relevant physical features of the system and of its Hamiltonian, among which the property that kinetic and potential energy remain of the same order so that the interaction is still visible in the limit, as well as certain dilution properties of the system and short-range features of the interaction. Typical relevant scaling limits are those in which the two-body potential $V$ that models the interaction among particles is replaced by a $N$-dependent two-body potential
\begin{equation}\label{eq:scaling_limits}
V_N(x)\;=\;N^{3\beta-1}V(N^\beta x)\,,\qquad \beta\in[0,1]
\end{equation}
(here $x=x_i-x_j$ is the relative coordinate between particle $i$ and particle $j$). The regime $\beta=0$ is the mean-field regime, whereas $\beta=1$ gives the so-called Gross-Pitaevskii scaling regime. We refer to \cite{am_GPlim} for an extended discussion.

Coming now to the problem we intend to study, let us mention that a large amount of theoretical and experimental studies in the dynamics of Bose gases involve the interaction among two or more samples, each of which is in a condensate (see \cite[Section 12.11]{ps2003} and the references therein). These configurations are usually realised using atoms occupying different hyperfine states \cite{MBGCW-1997,Stamper-Kurn_Ketterke_et_al_PRL-1998}
or also different atoms \cite{Modugno-PRL-2002}.
In either case the particles of the two samples must be considered as different species and one then refers to such a system as a \emph{mixture} of condensates. For mixtures one has the additional possibility of producing transitions between the two components, typically by applying an external oscillating magnetic field tuned close to the hyperfine splitting, which gives rise to a new variety of physical phenomena.
In suitable circumstances condensation is robust enough to be preserved in time, each condensate keeping its own individuality, even in the presence of a significant transfer of atoms from one state to the other.

Physical arguments corroborated by experimental data show that to a very good approximation the effective dynamics of the mixture is described by a system of two coupled non-linear Schr\"{o}dinger equations, the coupling among them accounting for how each of the two condensates affect the evolution of the other. The rich variety of phases exhibited by the mixture is also well described in terms of the sign and the magnitude of the effective couplings appearing in the non-linear system \cite[Section 12.11]{ps2003}. In analogy to the rigorous derivation of the effective evolution equation from the quantum dynamics of a condensate, we are therefore interested in reproducing the \emph{effective evolution equations for the mixture dynamics}.

To this aim, we shall study a large three-dimensional system consisting of two distinguishable populations of identical bosons, with two-body interactions among particles of the same species and of different species. We also allow for the possibility of an external electro-magnetic field coupled with the particles, by means of a confining potential in case of a mixture in a trap or an external magnetic potential. We will assume that at time $t=0$ condensation occurs in both components and we will consider the quantum evolution of this mixture. 
As we shall discuss in Sections \ref{sec:model}, \ref{sec:indicators_of_convergence}, and \ref{sec:mean-field-scaling}, where the model will be set up, the appropriate language to formulate the assumption of condensation and to conveniently monitor the many-body dynamics is the generalisation of the notion of reduced density matrix for a Bose gas with two distinguishable components. The effective dynamics will then emerge in a scheme that doubles \eqref{scheme_for_marginals} in a suitable sense: this is our main result, presented in Section \ref{sec:model}.

On a more technical level, we find that several among the many alternative techniques developed so far for the (one-component) dynamical problem  \eqref{scheme_for_marginals} can be conveniently adapted to approach the multiple-component setting, with an amount of non-trivial modifications that depend on the considered techniques. In this work we employ a particularly robust and versatile method, invented and refined in a recent series of papers by P.~Pickl \cite{Pickl-JSP-2010,Pickl-LMP-2011,Pickl-RMP-2015}, with the contribution of A.~Knowles \cite{kp-2009-cmp2010}, which monitors how the displacement $\gamma_{N,t}^{(1)}-|\varphi_t\rangle\langle\varphi_t|$ changes with time by means of an ad hoc ``counting'' of the amount of particles in the many-body state $\gamma_{N,t}$ that occupy the one-body state $\varphi_t$. A number of tools and estimates that are crucial for this approach are reviewed in Appendix \ref{sec:particle-counting_method}. Such a counting method  has the virtue of being based on a few key algebraic steps that only involve the potential part of the Hamiltonian, and the estimates that then follow do not concern the differential part of the operator.
%%%%%%%%%%%%%%%%%
In Section  \ref{sec:main_proof}, where we prove our main result, we discuss the non-trivial adaptation that we developed for Pickl's method in the multi-component setting.

In order to access complementary aspects of the present analysis, in particular the very much relevant control of the fluctuations around the emergent effective dynamics, a parallel study by one of us in collaboration with G.~De Oliveira \cite{DeOliveira-Michelangeli-2016} is being developed within the Fock space approach \cite[Chapters 3 and 4]{Benedikter-Porta-Schlein-2015}, so that the two works are in fact part of a unified project.

One last comment is about our treatment of the problem within the \emph{mean-field} approximation. As our primary goal is to give evidence of the mechanism by which the many-body Schr\"{o}dinger dynamics of an initial mixture of condensates give rise to an effective dynamics of coupled non-linear equations for the persistence of condensation in each component, we found it instructive to place our discussion in the technically easiest framework, the mean field. As discussed in Section \ref{sec:mean-field-scaling}, in order to derive the  effective dynamics actually expected in a regime of almost zero temperature and high dilution one has to adopt  the more realistic Gross-Pitaevskii scaling. Given the high versatility of Pickl's method, that has been proved to be successful also for the Gross-Pitaevskii scaling \cite{Pickl-JSP-2010,Pickl-RMP-2015}, the results of our multi-component analysis too can be proved to hold  also for  in such a (technically more difficult) scheme.

%On the other hand, Pickl's method is somewhat rigid in the sense that it only applies to a very specific indicator of condensation, namely the quantity $\alpha_{N_1,N_2}^{(k_1,k_2)}$ -- see \eqref{eq:indicator-alpha} below. It does not give access to other relevant information such as the control of the fluctuations around the emergent effective dynamics: we point out that this complementary aspect of the present analysis is the object of a parallel study by one of us in collaboration with G.~De Oliveira \cite{DeOliveira-Michelangeli-2016}, so that the two works are in fact part of a unified project.

\textbf{Notation.} Essentially all the notation adopted here is standard, and we defer to Subsection \ref{subsec:notation_for_the_proof} the introduction of ad hoc extra notation for the proof of our main result. Let us only emphasize the following. As customary, by ``$\lesssim$'' we shall mean inequalities with a universal constant as an overall pre-factor, otherwise the dependence of the constants on other quantities of interest will be declared. Scalar products and norms in the considered Hilbert spaces will only be explicitly indexed whenever the underlying Hilbert space is not evident from the context or when we want to emphasize it. In the same spirit we shall deal with the identity operator $\mathbbm{1}$. We shall also adopt the customary convention to distinguish the \emph{operator} domain and the \emph{form} domain of any given self-adjoint operator $H$ by means of the notation $\mathcal{D}(H)$ vs $\mathcal{D}[H]$. When taking the trace of an operator, suitable subscripts will be occasionally added, such as $\mathrm{Tr}_{\mathfrak{h}}$ or $\mathrm{Tr}_{\mathfrak{h}^{\otimes 2}}$, in order to clarify the underlying Hilbert space, while unambiguously symbols like $\mathrm{Tr}_{N-k}$ will denote the partial trace map.

\section{Model and main results}\label{sec:model}

\subsection{Many-body and one-body picture and condensate mixture}\label{subsec:manybd-onebd-condensate-mixture}

For $N_1,N_2\in\mathbb{N}$ we consider the Hilbert spaces
\begin{equation}
\begin{split}
\cH_{N_1,N_2}\;&:=\;\cH_{N_1}\otimes\cH_{N_2} \\
&=\;L^2(\mathbb{R}^{3N_1},\ud x_1\cdots \ud x_{N_1})\otimes L^2(\mathbb{R}^{3N_2},\ud y_1\cdots \ud y_{N_2})\,,
\end{split}
\end{equation}
\begin{equation}
\begin{split}
\cH_{N_1,N_2,\mathrm{sym}}\;&:=\;\cH_{N_1,\mathrm{sym}}\otimes\cH_{N_2,\mathrm{sym}} \\
&=\;L^2_{\mathrm{sym}}(\mathbb{R}^{3N_1},\ud x_1\cdots \ud x_{N_1})\otimes L^2_{\mathrm{sym}}(\mathbb{R}^{3N_2},\ud y_1\cdots \ud y_{N_2})
\end{split}
\end{equation}
where $\cH_{N_1,N_2}$ (resp., $\cH_{N,\mathrm{sym}}$) is the $N$-body (resp. $N$-body bosonic) Hilbert space introduced in \eqref{eq:Hilbert_N} built upon the one-body Hilbert space $\mathfrak{h}=L^2(\mathbb{R}^3)$. An element $\Psi$ of $\cH_{N_1,N_2,\mathrm{sym}}$ is a square-integrable function with two distinguishable sets of variables and
\[
\Psi(x_1,\dots,x_{N_1};y_1,\dots,y_{N_2})
\]
is invariant under exchange of any two $x$-variables or any two $y$-variables, with no overall permutation symmetry among the two sets of variables. The same double bosonic symmetry is induced on the density matrices acting on $\cH_{N_1,N_2,\mathrm{sym}}$.

The states of $\cH_{N_1,N_2,\mathrm{sym}}$ describe systems consisting of $N_1$ identical bosons of the species A  and $N_2$ identical bosons of the (different) species B. We want to focus on those states where condensation occurs in both species. For concreteness one may think of the special case of a density matrix $\gamma_{N_1}\otimes\gamma_{N_2}$ on $\cH_{N_1,N_2,\mathrm{sym}}=\cH_{N_1,\mathrm{sym}}\otimes\cH_{N_2,\mathrm{sym}}$ where for $j=1,2$ $\gamma_{N_j,\mathrm{sym}}$ is a bosonic density matrix on $\cH_{N_j}$ that exhibits complete condensation in the asymptotic sense \eqref{eq:def_100BEC}. For a generic density matrix $\gamma_{N_1,N_2}$ on $\cH_{N_1,N_2,\mathrm{sym}}$ the assumption of double condensation has a natural formulation in terms of the double partial trace realised by doubling (i.e., tensoring) the map $\mathrm{Tr}_{N-1}$ discussed in \eqref{eq:partial_trace1}-\eqref{eq:def_partial_trace_without_basis}. This leads us to introduce the \emph{``double'' reduced density matrix}
\begin{equation}\label{eq:def_double_partial_trace-ABSTRACT}
\gamma_{N_1,N_2}^{(1,1)}\;=\;\mathrm{Tr}_{N_1-1}\otimes\mathrm{Tr}_{N_2-1}\:\gamma_{N_1,N_2}
\end{equation}
associated with $\gamma_{N_1,N_2}$, clearly a density matrix on $\mathfrak{h}\otimes\mathfrak{h}=L^2(\mathbb{R}^3,\ud x)\otimes L^2(\mathbb{R}^3,\ud y)$. The operation in \eqref{eq:def_double_partial_trace-ABSTRACT} amounts to tracing out $N_1-1$ degrees of freedom of type A and $N_2-1$ degrees of freedom of type B from $\gamma_{N_1,N_2}$.
Explicitly, for a pure state (i.e, a normalised function) $\Psi_{N_1,N_2}\in\cH_{N_1,N_2,\mathrm{sym}}$ the associated $\gamma_{N_1,N_2}^{(1,1)}$ has integral kernel
\begin{equation}\label{eq:def_double_partial_trace-KERNEL}
\begin{split}
\gamma_{N_1,N_2}^{(1,1)}(x,x';y,y')\;=\;&\int_{\mathbb{R}^{3(N_1-1)}} \int_{\mathbb{R}^{3(N_2-1)}} \ud x_2\cdots\ud x_{N_1}\ud y_2\cdots\ud y_{N_2} \\
& \qquad\times \Psi_{N_1,N_2}(x,x_2,\dots,x_{N_1};y,y_2,\dots,y_{N_2}) \\
& \qquad\times \overline{\Psi_{N_1,N_2}}(x',x_2,\dots,x_{N_1};y',y_2,\dots,y_{N_2})\,.
\end{split}
\end{equation}
In general $\gamma_{N_1,N_2}^{(1,1)}$ is neither factorised as a product of two density matrices on $\mathfrak{h}\otimes\mathfrak{h}$ nor with rank one.

Instead of averaging out all but one particles for each species, one could also think of controlling a small portion of each component of the system that corresponds to an arbitrary number $k_1\leqslant N_1$ of A-particles and $k_2\leqslant N_2$ of B-particles, thus tracing out $N_j-k_j$ degrees of freedom, $j=1,2$, from $\gamma_{N_1,N_2}$. Definitions \eqref{eq:def_double_partial_trace-ABSTRACT}-\eqref{eq:def_double_partial_trace-KERNEL} are then modified straightforwardly so as to define the $(k_1,k_2)$-reduced density matrix $\gamma_{N_1,N_2}^{(k_1,k_2)}$ on the space $\cH_{k_1,k_2}$. This is the appropriate marginal to study  the particle correlations. Another relevant indicator is obtained by tracing out from the many-body state all the degrees of freedom of one component, and all but one of the other component, thus ending up with  $\gamma_{N_1,N_2}^{(1,0)}$ and $\gamma_{N_1,N_2}^{(0,1)}$, that are density matrices on the one-body space $\mathfrak{h}$. In Section \ref{sec:indicators_of_convergence} we develop a more systematic discussion on the marginals $\gamma_{N_1,N_2}^{(k_1,k_2)}$, their algebra, and an amount of useful bounds for them.

In full analogy with the corresponding definition for a one-component condensate, one says that the state $\gamma_{N_1,N_2}$ on  $\cH_{N_1,N_2,\mathrm{sym}}$ exhibits condensation for both species of particles (that is, a two-component mixture BEC), with condensate functions $u$ and $v$ (two normalised one-body wave-functions), if for the associated reduced density matrix one has
\begin{equation}\label{eq:def_100BEC-2component}
\lim_{\substack{N_1\to\infty \\ N_2\to\infty}}\gamma_{N_1,N_2}^{(1,1)}\;=\;|u\otimes v\rangle\langle u\otimes v|\;=\;|u\rangle\langle u|\otimes |v\rangle\langle v|\,.
\end{equation}
As for \eqref{eq:def_100BEC}, the limit in \eqref{eq:def_100BEC-2component} is of thermodynamic type. It expresses, in the interpretation of occupation numbers discussed in Section \ref{sec:intro}, the idea that the actual many-body state has the double-condensate form $u^{\otimes N_1}\otimes v^{\otimes N_2}$, although the vanishing of $\gamma_{N_1,N_2}^{(1,1)}-|u\otimes v\rangle\langle u\otimes v|$ is \emph{much weaker} than the actual vanishing of $\|\Psi_{N_1,N_2}-u^{\otimes N_1}\otimes v^{\otimes N_2}\|$.

Observe that the distinguishability of the two species results in a precise ordering in the product $u\otimes v\in\mathfrak{h}\otimes\mathfrak{h}$ of the two condensate functions. Thus, even in the case $u=v$ (as elements in $L^2(\mathbb{R}^3)$), the double condensation $\gamma_{N_1,N_2}^{(1,1)}\approx|u\otimes u\rangle\langle u\otimes u|$ expresses the fact that each component undergoes BEC with \emph{the same spatial profile} of the condensate: the two condensates then sit on top of each other, while the two species remain distinguishable.

While $\gamma_{N_1,N_2}^{(1,1)}$ allows for a simultaneous control of BEC for each species, the reduced density matrices $\gamma_{N_1,N_2}^{(1,0)}$ and $\gamma_{N_1,N_2}^{(0,1)}$ monitor the occurrence of condensation in one component, irrespectively of the other. 
In Section \ref{sec:indicators_of_convergence} (Lemma \ref{lemma:controllo}), we establish the bound
\begin{equation}\label{eq:bound_g11_g10_g01}
\begin{split}
\max\big\{1-\langle u,&\gamma_{N_1,N_2}^{(1,0)} u\rangle\,,\,1-\langle v,\gamma_{N_1,N_2}^{(0,1)} v\rangle\big\}\;\leqslant\;1-\langle u\otimes v,\gamma_{N_1,N_2}^{(1,1)} u\otimes v\rangle \\
&\leqslant\;(1-\langle u,\gamma_{N_1,N_2}^{(1,0)} u\rangle) + (1-\langle v,\gamma_{N_1,N_2}^{(0,1)} v\rangle)
\end{split}
\end{equation}
which shows that $\gamma_{N_1,N_2}^{(1,1)}\to |u\otimes v\rangle\langle u\otimes v|$ is equivalent to $\gamma_{N_1,N_2}^{(1,0)}\to|u\rangle\langle u|$ \emph{and} $\gamma_{N_1,N_2}^{(0,1)}\to|v\rangle\langle v|$.

\subsection{Double-component Hamiltonian}

Let us now come to the dynamical model we intend to study. First of all, we introduce two one-particle Hamiltonians $h_1$ and $h_2$ on $\mathfrak{h}$, one for each species. We have in mind for concreteness two non-relativistic Schr\"{o}dinger operators with given external magnetic and electric potentials, namely
\begin{equation}
h_1\;=\;-(\nabla_x-\ii A_1(x))^2+U_1(x)\,,\quad h_2\;=\;-(\nabla_y-\ii A_2(y))^2+U_2(y)\,,
\end{equation}
for suitable  measurable functions $A_j:\mathbb{R}^3\to\mathbb{R}^3$ (magnetic potentials)  and  $U_j:\mathbb{R}^3\to\mathbb{R}$ (trapping potentials), $j=1,2$, or their pseudo-relativistic version
\[
h_j\;=\;\sqrt{1-(\nabla_x-\ii A_j(x))^2}\,+\,U_j(x)\,,\qquad j=1,2\,.
\]
We then consider a model for $N_1$ identical bosons of one species and $N_2$ identical bosons of another species, in which each particle of the first (resp., of the second) species is subject to the one-body Hamiltonian $h_1$ (resp., $h_2$) and is coupled with the other particles of the same species via a two-body potential $V_1:\mathbb{R}^3\to\mathbb{R}$ (resp., $V_2:\mathbb{R}^3\to\mathbb{R}$), plus an additional inter-species two-body potential $V_{12}:\mathbb{R}^3\to\mathbb{R}$ that couples the two components of the mixture.

We  consider  the mean-field Hamiltonian for this system, namely the operator
\begin{equation}\label{eq:HN1N2}
\begin{split}
H_{N_1,N_2}\;=\;&\sum_{i=1}^{N_1}(h_1)_i^A+\frac{1}{N_1}\sum_{i<j}^{N_1}V_1(x_i-x_j) \\
+\;&\sum_{r=1}^{N_2}(h_2)_r^B+\frac{1}{N_2}\sum_{r<s}^{N_2}V_2(y_r-y_s) \\
&\quad +\frac{1}{N_1+N_2}\sum_{i=1}^{N_1}\sum_{r=1}^{N_2}V_{12}(x_i-y_r)
\end{split}
\end{equation}
acting on $\cH_{N_1,N_2}$. In \eqref{eq:HN1N2} and throughout this work, we adopt the following compact notation: if $T$ is an operator acting only on one of the two factors of $\cH_{N_1,N_2}$ and we need to consider it as an operator on the whole $\cH_{N_1,N_2}$ with trivial action on the other factor, we shall denote $T\otimes\mathbbm{1}$ (resp., $\mathbbm{1}\otimes T$) as $T^A$ (resp., $T^B$). Clearly $T^A$ and $S^B$ commute for any single-sector operators $T$ and $S$. Thus, if $V_{12}\equiv 0$, then $H_{N_1,N_2}$ consists of the sum of two (commuting) Hamiltonians, one for each species.

As we intend to study the quantum evolution governed by $H_{N_1,N_2}$ in the limit of very large $N_1$ and $N_2$, keeping the ratio $N_1/N_2$ (asymptotically) constant and non-zero, it is easily seen that the mean-field pre-factors $N_1^{-1}$, $N_2^{-1}$, and $(N_1+N_2)^{-1}$ inserted in front of the potential terms ensure that in this limit the kinetic and the potential part of the Hamiltonian remain comparable. Indeed, there are $N_1+N_2$ kinetic terms and $\frac{1}{2}(N_1+N_2)(N_1+N_2-1)$ potential terms in $H_{N_1,N_2}$, however the mean-field pre-factors reduce the order of the potential energy to
\[
\frac{1}{N_1}\cdot\frac{N_1(N_1-1)}{2}+\frac{1}{N_2}\cdot\frac{N_2(N_2-1)}{2}+\frac{1}{N_1+N_2}\cdot N_1\, N_2 \;=\;O(N_1+N_2)\,.
\]
There are of course \emph{other choices} that would preserve the mean-field character of the Hamiltonian, for example a common pre-factor $(N_1+N_2)^{-1}$ in front of all potential terms, or a pre-factor $(N_1N_2)^{-1/2}$ in front of the mixed interaction term. In Section \ref{sec:mean-field-scaling} we will show that our choice in  \eqref{eq:HN1N2} is \emph{the} physically meaningful one, for it yields the physically correct effective dynamics.

We also remark that $H_{N_1,N_2}$, albeit not factorised (unless $V_{12}\equiv 0$), maps $\cH_{N_1,N_2,\mathrm{sym}}$ into itself, because of its permutation symmetry  separately in each set of variables. Therefore, given a density matrix $\gamma_{N_1,N_2}$ on $\cH_{N_1,N_2,\mathrm{sym}}$, i.e., a many-body state that is bosonic for each species of particles, its evolution $e^{-\ii t H_{N_1,N_2}}\gamma_{N_1,N_2}e^{\ii t H_{N_1,N_2}}$ along the dynamics generated by $H_{N_1,N_2}$ remains a density matrix on $\cH_{N_1,N_2,\mathrm{sym}}$. In this evolution the number of particles of each kind is constant, there is an inter-species interaction ($V_{12}$) but no transfer of particles among the two populations.

\subsection{Asymptotic limit and effective dynamics}

As discussed in the Introduction, if the system is prepared at time $t=0$ in a state $\gamma_{N_1,N_2}$ where (complete) BEC occurs in both components, one expects that
\begin{itemize}
 \item condensation persists also at later times for the quantum dynamics generated by $H_{N_1,N_2}$, 
 \item and the evolution of the two condensate functions is governed by a system of two coupled non-linear Schr\"{o}dinger equations.
\end{itemize}
 In this work we prove these facts in the sense of the reduced density matrices.

What equations have to be expected can be seen by means of several heuristic arguments. We discuss them in detail in Section \ref{sec:mean-field-scaling}. To give a sketch here of the  `formal' derivation, let us fix first of all the precise sense of the double limit $N_1\to\infty$, $N_2\to\infty$. We want the two population numbers to remain comparable when they become arbitrarily large, so we assume that there are two constants $c_1,c_2>0$ such that
\begin{equation}\label{eq:N1N2ratio}
\lim_{\substack{N_1\to\infty \\ N_2\to\infty}}\,\frac{N_j}{N_1+N_2}\;=\;c_j \qquad (j=1,2)\,,
\end{equation}
that is, the ratio $N_1/N_2$ is assumed to be asymptotically constant. From now on, by $\lim_{N_1,N_2\to\infty}$ we shall mean a limit under the constraint \eqref{eq:N1N2ratio}.

To see formally the emergence of the effective dynamics it is enough to make the Ansatz
\[
%\Psi_{N_1,N_2,t}\;=\;
\lim_{N_1,N_2\to\infty}\,e^{-\ii t H_{N_1,N_2}}(u_0^{\otimes N_1}\otimes v_0^{\otimes N_2})\;=\;u_t^{\otimes N_1}\otimes v_t^{\otimes N_2}
\]
for some time-dependent functions $u_t$ and $v_t$ in $\mathfrak{h}$, given $u_0$ and $v_0$ at time $t=0$. This, of course, cannot be true at finite $N_1$, $N_2$, owing to the presence of the mixed interaction potential $V_{12}$ in $H_{N_1,N_2}$, still, one can compute the limit energy-per-particle functional
\[
\mathcal{E}[u_t,v_t]\;=\;\lim_{N_1,N_2\to\infty}\Big\langle u_t^{\otimes N_1}\otimes v_t^{\otimes N_2}\,,\,\frac{H_{N_1,N_2}}{N_1+N_2}\,u_t^{\otimes N_1}\otimes v_t^{\otimes N_2} \Big\rangle
\]
and then determine, by taking the variations of $\mathcal{E}[u_t,v_t]$, the equations that $u_t$ and $v_t$ must satisfy in order to preserve the value of this energy in time. Alternatively, one can re-write the initial value problem for many-body Schr\"{o}dinger equation
\[
\begin{cases}
\;\ii\partial_t\Psi_{N_1,N_2,t}\;=\;H_{N_1,N_2}\Psi_{N_1,N_2,t} \\
\;\Psi_{N_1,N_2,t}\big|_{t=0}\;=\;u_0^{\otimes N_1}\otimes v_0^{\otimes N_2}
\end{cases}
\]
in terms of the corresponding initial value problem for the finite hierarchy of equations for the marginals $\gamma_{N_1,N_2,t}^{(k_1,k_2)}$ (the BBGKY hierarchy) and close formally the limit hierarchy as $N_1,N_2\to\infty$ by plugging in the Ansatz above, re-written in the reduced density matrix form
\[
\lim_{N_1,N_2\to\infty}\,\gamma_{N_1,N_2,t}^{(k_1,k_2)}\;=\;|u_t^{\otimes k_1}\rangle\langle u_t^{\otimes k_1}|\otimes|v_t^{\otimes k_2}\rangle\langle v_t^{\otimes k_2}|\,;
\]
this decouples the equation for $\gamma^{(1,1)}_{\infty,\infty}\equiv\lim_{N_1,N_2\to\infty}\gamma_{N_1,N_2,t}^{(1,1)}$, which yields the corresponding equations for $u_t$ and $v_t$. We work out these formal calculations  in Section \ref{sec:mean-field-scaling}.

What one finds is the system
\begin{equation}\label{eq:Hartree_system}
\begin{split}
\ii\partial_t u_t\;&=\;h_1 u_t + (V_1*|u_t|^2) u_t + c_2 (V_{12}*|v_t|^2) u_t \\
\ii\partial_t v_t\;&=\;h_2 v_t + (V_2*|v_t|^2) v_t + c_1 (V_{12}*|u_t|^2) v_t
\end{split}
\end{equation}
of two coupled non-linear Schr\"{o}dinger equations of Hartree type, in the two unknowns $u_t$ and $v_t$, with initial condition $u_{t=0}=u_0$ and $v_{t=0}=v_0$. Observe in the first equation of \eqref{eq:Hartree_system} the two self-interaction terms: the first, $(V_1*|u_t|^2) u_t$, accounts for the interaction of an A-particle with the effective potential due to the presence of the other particles of the same species around it, the second, $c_2 (V_{12}*|v_t|^2) u_t$ accounts for the analog effective interaction of the A-particle with the B-particles. Same considerations for the second equation in \eqref{eq:Hartree_system}. The weights $c_1$ and $c_2$ adjust the magnitude of each inter-species self-interaction term with respect to the the relative ratio of each population in terms of the total number of particles.

It can be argued that, within the mean-field scheme, \eqref{eq:Hartree_system} is the correct system of evolution equations for the effective dynamics of a condensate mixture. Indeed, it is the mean-field version of the system of non-linear Schr\"{o}dinger equations that physical theoretical heuristics produce, in extraordinary agreement with the experimental data, when the mixture is modelled with the more realistic assumption of strong short-scale interactions and high dilution. This is another point that we discuss in detail in in Section \ref{sec:mean-field-scaling}.

\subsection{Assumptions and main theorem}

We shall work under the following set of assumptions.

\begin{itemize}
 \item[(A1)] The one-particle Hamiltonians $h_1$ and $h_2$ are self-adjoint and semi-bounded below on $\mathfrak{h}$. It is not restrictive to assume both of them positive. This implies that for any $N_1,N_2\in\mathbb{N}$ the free (kinetic) part in the Hamiltonian \eqref{eq:HN1N2}, namely
 \begin{equation}
H_{N_1,N_2}^{(0)}\;:=\;\sum_{i=1}^{N_1}(h_1)_i^A + \sum_{r=1}^{N_2}(h_2)_r^B\,,
 \end{equation}
 is self-adjoint and positive on $\cH_{N_1,N_2}$, and that the corresponding form domain $\mathcal{D}[H_{N_1,N_2}^{(0)}]$ is a Hilbert space w.r.t.~the scalar product
 \begin{equation}
 \langle \Psi,\Phi\rangle_{\mathcal{D}[H_{N_1,N_2}^{(0)}]}\;=\;\langle (\mathbbm{1}+H_{N_1,N_2}^{(0)})^{1/2}\Psi,(\mathbbm{1}+H_{N_1,N_2}^{(0)})^{1/2}\Phi\rangle\,.
 \end{equation}
  \item[(A2)] The potentials $V_1$, $V_2$, and $V_{12}$ are real-valued even functions satisfying
 \begin{equation}
 \begin{array}{c}
V_\alpha\in L^{r_\alpha}(\mathbb{R}^3)+L^{s_\alpha}(\mathbb{R}^3) \\
\textrm{for some }2\leqslant r_\alpha\leqslant s_\alpha\leqslant +\infty
 \end{array}\,,\qquad \alpha\in\{1,2,12\}\,.
 \end{equation}
 \item[(A3)] For all $N_1,N_2\in\mathbb{N}$ the Hamiltonian $H_{N_1,N_2}$ \eqref{eq:HN1N2} is self-adjoint and semi-bounded below on $\cH_{N_1,N_2}$, and $\mathcal{D}[H_{N_1,N_2}]\subset\mathcal{D}[H_{N_1,N_2}^{(0)}]$.
 \item[(A4)] The initial value problem consisting of the system \eqref{eq:Hartree_system} with initial condition $u(0)=u_0$ and $v(0)=v_0$ for given functions $u_0\in\mathcal{D}[h_1]$ and $v_0\in\mathcal{D}[h_2]$
  has a unique global-in-time solution
 \begin{equation}\label{soluzioni}
 (u,v)\;\in\;C(\mathbb{R},X)\cap C^1(\mathbb{R},\mathcal{D}[h_1]^*\oplus\mathcal{D}[h_2]^*)
 \end{equation}
 where
 \begin{equation}\label{eq:energy_space_1body}
 X\;:=\;(\mathcal{D}[h_1]\cap L^{\max\{\widehat{r}_1,\widehat{r}_{12}\}}(\mathbb{R}^3))\oplus(\mathcal{D}[h_2]\cap L^{\max\{\widehat{r}_2,\widehat{r}_{12}\}}(\mathbb{R}^3))
 \end{equation}
 and
 \begin{equation}
 \frac{1}{r_\alpha}+\frac{1}{\:\widehat{r}_\alpha}\;=\;\frac{1}{2}\,,\qquad\frac{1}{s_\alpha}+\frac{1}{\:\widehat{s}_\alpha}\;=\;\frac{1}{2}\,,\qquad\alpha\in\{1,2,12\}\,.
 \end{equation}
\end{itemize}

We remark that assumptions (A1)-(A4) above are cast in an ``operational'' form that is immediately ready to be exploited in our proofs, whereas the precise constraints that they impose on the potentials $A_1, A_2, U_1, U_2, V_1, V_2, V_{12}$ are left in a somewhat implicit form -- observe, for instance that a priori conditions (A3) and (A4)  select a sub-class of potentials from condition (A2). It is however easy to recognise that (A1)-(A4) cover a wide range of practically relevant cases (analogously to what observed already in \cite[Section 3.2]{kp-2009-cmp2010}), including for example the inter-particle Coulomb interactions $V_\alpha(x)=c_\alpha |x|^{-1}$, $\alpha\in\{1,2,12\}$ for ordinary one-body Hamiltonians $h_1=h_2=-\Delta$.

In particular, concerning the non-emptiness of assumption (A4), the global-in-time well-posedness of the non-linear Cauchy problem  associated with \eqref{eq:Hartree_system} holds for generic (i.e., not too singular) potentials irrespective of the sign of the interaction: this is due to the fact that the cubic non-linearity is \emph{non-local} (i.e., of convolution form $V*|\varphi|^2$) and hence energy sub-critical, in full analogy to what happens with the usual one-component non-linear Schr\"{o}dinger equation \cite[Corollary 6.1.2]{cazenave}. For local non-linearities as in the system \eqref{eq:mathematical_GP_system} below, which are expected to emerge from the more realistic scaling \eqref{eq:GP_scaling_for_potentials}, finite time blow-up phenomena may instead occur, depending on whether the interactions are attractive or repulsive -- see, e.g., \cite{Ma-Zhao-JMP2008_coupledNLS,Li-Wu-Lai-JPhysA2010-sharp_blowup_thresh_coupledNLS,Jungel-Weishaupl2013_2compNLS_blowup}: in this case one still has existence and uniqueness \emph{locally} in time, but the analog of Theorem \ref{main} below would then make only sense at any fixed time in the interval of local well-posedness of the non-linear problem.

%\cite{Lin-Wei-2006PhysD-coupledNLS,Fanelli-Montefusco_JPhysA_2007_coupledNLS,Ma-Zhao-JMP2008_coupledNLS,Chen-Guo-JMP2009_blowup-profile_2coupledNLS,Li-Wu-Lai-JPhysA2010-sharp_blowup_thresh_coupledNLS,Jungel-Weishaupl2013_2compNLS_blowup}.

We can now state our main result.

\begin{theorem} \label{main}
Consider a two-species bosonic system under assumptions (A1)-(A4) above.
Suppose, at time $t=0$, $\Psi_{N_1,N_2}\in\mathcal{D}[H_{N_1,N_2}]\cap \cH_{N_1,N_2,\mathrm{sym}}$ with $\|\Psi_{N_1,N_2}\|_2=1$, and $(u_0,v_0)\in X$ (the space defined in \eqref{eq:energy_space_1body}) with $\|u_0\|_2=\|v_0\|_2=1$. Correspondingly, for $t\in\mathbb{R}$ let $\Psi_{N_1,N_2}(t):=e^{-\ii t H_{N_1,N_2}}\Psi_{N_1,N_2}$ be the unique solution in $C(\mathbb{R},\mathcal{D}[H_{N_1,N_2}]\cap \cH_{N_1,N_2,\mathrm{sym}})$ to the many-body Schr\"{o}dinger equation
\begin{equation}\label{eq:manybody-Schr}
\ii\partial_t\Psi_{N_1,N_2}(t)\;=\;H_{N_1,N_2}\Psi_{N_1,N_2}(t)\,,\qquad \Psi_{N_1,N_2}(0)=\Psi_{N_1,N_2}\,,
\end{equation}
and let $(u_t,v_t)$ be the unique solution 
in $C(\mathbb{R},X)$ (the space \eqref{soluzioni}-\eqref{eq:energy_space_1body} of assumption (A4))
to the initial value problem consisting of the Hartree system \eqref{eq:Hartree_system} 
with initial condition $(u_0, v_0)$ at $t=0$.
Let  $\gamma^{(1,1)}_{N_1,N_2}(t)$ be the double reduced density matrix associated with $\Psi_{N_1,N_2}(t)$, given by \eqref{eq:def_double_partial_trace-KERNEL}, and define
\begin{equation}
\alpha^{(1,1)}_{N_1,N_2}(t)\;:=\;1-\big\langle u_t\otimes v_t\:,\:\gamma_{N_1,N_2}^{(1,1)}(t)\;u_t\otimes v_t\big\rangle\,.
\end{equation}
Assume further that in the limit $N_1,N_2\to\infty$ the two populations have given asymptotic ratios $c_1,c_2>0$, according to \eqref{eq:N1N2ratio}.
Then there exists a constant $\kappa=\kappa(c_1,c_2)>0$, such that 
\begin{equation}\label{tesi}
\alpha^{(1,1)}_{N_1,N_2}(t)\;\leqslant\;\left(\alpha_{N_1,N_2}^{(1,1)}(0)+\frac{1}{N_1+N_2}\right)e^{\,\kappa f(t)},
\end{equation}
where
\[
\begin{split}
f(t)\;&:=\;\|V_1\|_{L^{r_1}+L^{s_1}}\int_0^t\,\ud\tau\,\left(\|u_\tau\|_{\widehat{r}_1}+\|u_\tau\|_{\widehat{s}_1}\right)\\
&+\|V_2\|_{L^{r_2}+L^{s_2}}\int_0^t\,\ud\tau\,\left(\|v_\tau\|_{\widehat{r}_2}+\|v_\tau\|_{\widehat{s}_2}\right)\\
&+\|V_{12}\|_{L^{r_{12}}+L^{s_{12}}}\int_0^t\,\ud\tau\,\left(\|u_\tau\|_{\widehat{r}_{12}}+\|u_\tau\|_{\widehat{s}_{12}}+\|v_\tau\|_{\widehat{r}_{12}}+\|v_\tau\|_{\widehat{s}_{12}}\right)\,.
\end{split}
\]
\end{theorem}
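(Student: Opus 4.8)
The plan is to adapt P.~Pickl's particle-counting method, reviewed in Appendix~\ref{sec:particle-counting_method}, to the two-species setting. By the equivalence \eqref{eq:bound_g11_g10_g01} it suffices to control the two single-species counting functionals
\[
\alpha_A(t):=1-\langle u_t,\gamma^{(1,0)}_{N_1,N_2}(t)u_t\rangle,\qquad \alpha_B(t):=1-\langle v_t,\gamma^{(0,1)}_{N_1,N_2}(t)v_t\rangle,
\]
since \eqref{eq:bound_g11_g10_g01} gives $\alpha^{(1,1)}_{N_1,N_2}\le \alpha_A+\alpha_B$ and $\max\{\alpha_A,\alpha_B\}\le \alpha^{(1,1)}_{N_1,N_2}$, so that closing a Gr\"onwall inequality for $\alpha_A+\alpha_B$ yields the claim. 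Introducing the one-body projections $p^u_i=|u_t\rangle\langle u_t|$ on the $i$-th A-particle, $q^u_i=\mathbbm{1}-p^u_i$, and analogously $p^v_r,q^v_r$ on the B-particles, the double-bosonic symmetry of $\Psi_{N_1,N_2}(t)$ gives $\alpha_A(t)=\langle\Psi_t,\widehat n_A\Psi_t\rangle$ and $\alpha_B(t)=\langle\Psi_t,\widehat n_B\Psi_t\rangle$ with the counting operators $\widehat n_A=\frac1{N_1}\sum_{i=1}^{N_1}q^u_i$, $\widehat n_B=\frac1{N_2}\sum_{r=1}^{N_2}q^v_r$. Because $r_\alpha\ge 2$ here (mean-field regularity), the plain linear weight suffices and one need not pass to the refined square-root weights required in the Gross--Pitaevskii scaling.

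The first genuine step is to differentiate $\alpha_A(t)+\alpha_B(t)$. Writing the Hartree generators $h^{MF}_A(t)=h_1+V_1*|u_t|^2+c_2\,V_{12}*|v_t|^2$ and $h^{MF}_B(t)=h_2+V_2*|v_t|^2+c_1\,V_{12}*|u_t|^2$, the system \eqref{eq:Hartree_system} gives $\partial_t p^u_i=-\ii[(h^{MF}_A)_i,p^u_i]$, whence $\partial_t\widehat n_A=-\ii[\sum_i(h^{MF}_A)_i,\widehat n_A]$ (only diagonal terms survive). Combining this with $\ii\partial_t\Psi_t=H_{N_1,N_2}\Psi_t$ yields
\[
\frac{d}{dt}\alpha_A(t)\;=\;\ii\,\big\langle\Psi_t,[\,H_{N_1,N_2}-\textstyle\sum_i(h^{MF}_A)_i\,,\,\widehat n_A\,]\,\Psi_t\big\rangle .
\]
In this difference the A-kinetic terms cancel exactly, while the B-kinetic term, the B--B interaction and $\widehat n_A$ commute and drop out, leaving the commutator of $\widehat n_A$ with the fluctuation potential $\mathcal W_A=\frac1{N_1}\sum_{i<j}V_1(x_i-x_j)+\frac1{N_1+N_2}\sum_{i,r}V_{12}(x_i-y_r)-\sum_i(V_1*|u_t|^2)(x_i)-c_2\sum_i(V_{12}*|v_t|^2)(x_i)$, and symmetrically for $\frac{d}{dt}\alpha_B$.

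I would then estimate $\langle\Psi_t,[\mathcal W_A,\widehat n_A]\Psi_t\rangle$ by the counting machinery of Appendix~\ref{sec:particle-counting_method}: insert $\mathbbm 1=p+q$ around each two-body factor, use the shift relations between $\widehat n_A$ and the projections, and classify the resulting terms by how many particles they move out of the condensate. The same-species contribution from $V_1$ is handled exactly as in the one-component problem: the ``good--good'' piece reconstructs $\frac{N_1-1}{N_1}(V_1*|u_t|^2)$, matching the mean-field subtraction up to a coefficient $1-O((N_1+N_2)^{-1})$, and the off-diagonal pieces are bounded, via H\"older and the splitting $V_1\in L^{r_1}+L^{s_1}$ together with $\|u_t\|_{\widehat r_1}+\|u_t\|_{\widehat s_1}$, by $\kappa\,\|V_1\|_{L^{r_1}+L^{s_1}}(\|u_t\|_{\widehat r_1}+\|u_t\|_{\widehat s_1})(\alpha_A+(N_1+N_2)^{-1})$; this is the first line of $f(t)$, and the $V_2$ part of $\mathcal W_B$ gives the second.

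The crux, and the genuinely new difficulty relative to the single-species method, is the inter-species term $\frac1{N_1+N_2}\sum_{i,r}V_{12}(x_i-y_r)-c_2\sum_i(V_{12}*|v_t|^2)(x_i)$, whose commutator with $\widehat n_A$ couples the two counting functionals. Inserting $\mathbbm 1=p^v_r+q^v_r$ on the B-particle, the good--good piece $p^v_rV_{12}p^v_r=(V_{12}*|v_t|^2)(x_i)\,p^v_r$ reproduces $\frac{N_2}{N_1+N_2}(V_{12}*|v_t|^2)(x_i)$ together with a remainder proportional to $\widehat n_B$ (controlled by $\alpha_B$), while the off-diagonal pieces $p^vV_{12}q^v$ and $q^vV_{12}q^v$ produce fluctuations bounded by $\alpha_A+\alpha_B+(N_1+N_2)^{-1}$ through the $V_{12}\in L^{r_{12}}+L^{s_{12}}$ bounds against both $u_t$ and $v_t$ --- this gives the third line of $f(t)$ and is exactly why $\alpha_B$ enters the estimate for $\frac{d}{dt}\alpha_A$, forcing one to close the inequality on the sum $\alpha_A+\alpha_B$. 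The delicate point is the coefficient matching: the good--good part carries $\frac{N_2}{N_1+N_2}$ whereas the Hartree subtraction carries $c_2$, leaving a residual $(\frac{N_2}{N_1+N_2}-c_2)\sum_i(V_{12}*|v_t|^2)(x_i)$ whose commutator with $\widehat n_A$ is of size $|\frac{N_2}{N_1+N_2}-c_2|\,\|V_{12}\|\,\|v_t\|^2\,\alpha_A^{1/2}$. I expect this to be the main obstacle: under the asymptotic-ratio hypothesis \eqref{eq:N1N2ratio} (with $|\frac{N_2}{N_1+N_2}-c_2|=O((N_1+N_2)^{-1})$ along admissible sequences) it is absorbed into the $(N_1+N_2)^{-1}$ error, and it is precisely this residual that makes $\kappa$ depend on $c_1,c_2$. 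Collecting all contributions gives $\frac{d}{dt}(\alpha_A+\alpha_B)\le \kappa\,f'(t)\,(\alpha_A+\alpha_B+(N_1+N_2)^{-1})$, and Gr\"onwall together with \eqref{eq:bound_g11_g10_g01} yields \eqref{tesi}.
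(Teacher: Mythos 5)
Your overall architecture is viable and is genuinely different from the paper's: you run Gr\"onwall on the two single-species counters $\alpha_A=\langle\Psi_t,\widehat{m}^A\Psi_t\rangle$, $\alpha_B=\langle\Psi_t,\widehat{m}^B\Psi_t\rangle$ (your $\widehat{n}_A,\widehat{n}_B$ are the paper's linear-weight operators $\widehat{m}$ of \eqref{eq:def_mhat}), close the inequality on the sum because the $V_{12}$-commutator couples them, and transfer to $\alpha^{(1,1)}$ at the end via Lemma \ref{lemma:controllo}; the paper instead runs the argument directly on the collective quantity $\alpha^{(1,1)}=\langle\Psi,(\mathbbm{1}-p_1^A p_1^B)\Psi\rangle$, symmetrized as $\sum_{k,\ell}(\mathbbm{1}-p_k^A p_\ell^B)/(N_1N_2)$ (your component-wise route is essentially the one the paper attributes to Heil). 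Your derivative formula, the kinetic cancellations, the identification of $\mathcal{W}_A,\mathcal{W}_B$, and the coefficient matching for the good--good $V_{12}$ block are all correct; the replacement $N_j(N_1+N_2)^{-1}\sim c_j$ is the same one the paper performs in \eqref{eq:CV12-simplified}, though your stipulation $|N_2(N_1+N_2)^{-1}-c_2|=O((N_1+N_2)^{-1})$ is strictly stronger than \eqref{eq:N1N2ratio}. Note also that your final transfer yields $\alpha^{(1,1)}(t)\leqslant 2\,(\alpha^{(1,1)}(0)+(N_1+N_2)^{-1})\,e^{\kappa f(t)}$: the factor $2$ cannot be absorbed into $e^{\kappa f(t)}$ near $t=0$, so what you prove is \eqref{tesi} up to a (harmless) constant prefactor.

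The genuine gap is your claim that ``the plain linear weight suffices and one need not pass to the refined square-root weights required in the Gross--Pitaevskii scaling''. This is false already in the mean-field regime; the square-root weights are not a feature of the Gross--Pitaevskii scaling but of the counting method itself. The step that fails is exactly the one you gloss over: the terms with two $q$'s on one side of the potential and none on the other ($(pp,qq)$ in the $V_1,V_2$ estimates, $(pp,qq)_1,(pp,qq)_2$ in the $V_{12}$ estimate). With the linear weight alone, Cauchy--Schwarz gives
\begin{equation*}
\big|\big\langle\Psi,p_1^Ap_2^A(V_1)_{12}^A\,q_1^Aq_2^A\Psi\big\rangle\big|\;\leqslant\;\sqrt{\|V_1^2*|u|^2\|_\infty}\;\|q_1^Aq_2^A\Psi\|\;\leqslant\;\sqrt{\|V_1^2*|u|^2\|_\infty}\,\sqrt{2\,\langle\Psi,(\widehat{m}^2)^A\Psi\rangle}
\end{equation*}
by \eqref{eq:commutazione-m2}, and since $\langle\Psi,\widehat{m}^2\Psi\rangle$ can only be bounded by $\langle\Psi,\widehat{m}\Psi\rangle=\alpha_A$ (Jensen's inequality goes the wrong way), this produces $\sqrt{\alpha_A}$ rather than $\alpha_A+O((N_1+N_2)^{-1})$. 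A Gr\"onwall input of the form $\dot{\alpha}\leqslant C(t)\sqrt{\alpha}$ only yields $\alpha(t)\leqslant 2\alpha(0)+\frac{1}{2}\big(\int_0^t C\big)^2$, whose second term is $N$-independent: persistence of condensation with the rate $(N_1+N_2)^{-1}$, i.e.\ \eqref{tesi}, does not follow. This is precisely why the paper inserts $\widehat{n}\,\widehat{n}^{-1}$ with $n(k)=\sqrt{k/N}$ (see \eqref{eq:def_m_n}) into these terms, shifts $\widehat{n}$ through the potential by Lemma \ref{lemma:exchange-hat} in the form \eqref{eq:hat-business-for-2}, and invokes Lemma \ref{commutazione-partial-symmetry} (needed because, with spectator projections such as $p_1^B$ attached, the state is only partially symmetric) to obtain the two factors $\|(\widehat{n}^{-1})^Aq_1^Aq_2^A\cdots\|\lesssim\sqrt{\alpha}$ and $\|\widehat{\tau_2 n}^A\cdots\|\lesssim\sqrt{\alpha+2/N_1}$, whose product closes to $\alpha+O((N_1+N_2)^{-1})$, as in Propositions \ref{prop:CV1} and \ref{prop:CV12}. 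Your plan must incorporate this device --- it adapts verbatim to your component-wise counters --- and with it the rest of your sketch goes through.
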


We are clearly interested in applying Theorem \ref{main} to the case where the initial many-body state of the mixture displays double condensation in the orbitals $u_0$ and $v_0$, in the sense of the discussion of Subsection \ref{subsec:manybd-onebd-condensate-mixture} and of the asymptotics \eqref{eq:def_100BEC-2component} therein. This, together with the estimates of Section \ref{sec:indicators_of_convergence} for the indicators of convergence, leads to the following:

\begin{corollary}\label{cor}
If, in addition to the hypothesis of Theorem \ref{main}, the sequence of initial data $(\Psi_{N_1,N_2})_{N_1,N_2}$ satisfies
\begin{equation}\label{eq:double_BEC_at_tzero}
\alf{1}{1}_{N_1,N_2}(0)\;\leqslant\;K\,\frac{1}{N_1+N_2}\,,
\end{equation}
for some constant $K$ that depends only on the population fractions $c_1$ and $c_2$, 
then $\forall t\in\mathbb{R}$ one has
\begin{equation}\label{eq:persistence_of_BEC_alphasense}
\alpha^{(1,1)}_{N_1,N_2}(t)\;\leqslant\;(K+1)\,\frac{1}{N_1+N_2}\,e^{\,\kappa f(t)}
\end{equation}
and also
\begin{equation}\label{eq:persistence_of_BEC_traceasense}
\mathrm{Tr}\:\big|\,\gamma_{N_1,N_2}^{(1,1)}(t)-|u_t\otimes v_t\rangle\langle u_t\otimes v_t| \,\big|\;\lesssim\;(K+1)\,\frac{1}{\sqrt{N_1+N_2}}\,e^{\,\kappa f(t)/2}\,.
\end{equation}
\end{corollary}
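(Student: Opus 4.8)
The plan is to obtain both conclusions as direct consequences of Theorem \ref{main} together with the abstract intertwining estimate \eqref{eq:equivalent-BEC-control-1compontent} (Lemma \ref{lemma:intertwined-controls}), so that the corollary reduces to a short deduction rather than any new analysis.

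For \eqref{eq:persistence_of_BEC_alphasense} I would simply feed the hypothesis \eqref{eq:double_BEC_at_tzero} into the conclusion \eqref{tesi} of Theorem \ref{main}. Since $\alpha^{(1,1)}_{N_1,N_2}(0)\leqslant K/(N_1+N_2)$ by assumption, the quantity in parentheses in \eqref{tesi} is bounded above by $(K+1)/(N_1+N_2)$, and \eqref{eq:persistence_of_BEC_alphasense} follows at once, with the same constant $\kappa=\kappa(c_1,c_2)$ and the same $f(t)$ produced by the theorem.

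For the trace-norm control \eqref{eq:persistence_of_BEC_traceasense} the crucial remark is that $\gamma^{(1,1)}_{N_1,N_2}(t)$ is a genuine density matrix (positive, trace class, of unit trace) on the one-body space $\mathfrak{h}\otimes\mathfrak{h}$, while $u_t\otimes v_t$ is a normalised vector therein. Hence the intertwining controls \eqref{eq:equivalent-BEC-control-1compontent}, although stated in Section \ref{sec:intro} for a single-component condensate, apply verbatim upon replacing $\gamma_N^{(1)}$ by $\gamma^{(1,1)}_{N_1,N_2}(t)$ and $\varphi$ by $u_t\otimes v_t$. Recalling the definition $\alpha^{(1,1)}_{N_1,N_2}(t)=1-\langle u_t\otimes v_t,\gamma^{(1,1)}_{N_1,N_2}(t)\,u_t\otimes v_t\rangle$, the right-hand inequality of \eqref{eq:equivalent-BEC-control-1compontent} reads
\[
\mathrm{Tr}\,\big|\,\gamma^{(1,1)}_{N_1,N_2}(t)-|u_t\otimes v_t\rangle\langle u_t\otimes v_t|\,\big|\;\leqslant\;2\sqrt{\alpha^{(1,1)}_{N_1,N_2}(t)}\,.
\]
Substituting the bound \eqref{eq:persistence_of_BEC_alphasense} just established under the square root gives
\[
\mathrm{Tr}\,\big|\,\gamma^{(1,1)}_{N_1,N_2}(t)-|u_t\otimes v_t\rangle\langle u_t\otimes v_t|\,\big|\;\leqslant\;2\sqrt{K+1}\;\frac{1}{\sqrt{N_1+N_2}}\;e^{\,\kappa f(t)/2}\,,
\]
and since $2\sqrt{K+1}\leqslant 2(K+1)$, absorbing the universal prefactor into the symbol $\lesssim$ yields precisely \eqref{eq:persistence_of_BEC_traceasense}.

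The deduction is essentially a two-line computation and presents no analytic difficulty once Theorem \ref{main} is in hand; the main work has already been done in establishing that theorem. The single point meriting attention is the legitimacy of invoking Lemma \ref{lemma:intertwined-controls} in the two-component setting: its proof relies only on the spectral decomposition of a unit-trace positive operator and on $\|\varphi\|=1$, making no use of the tensor structure of the ambient Hilbert space, so the estimate transfers without modification to $\gamma^{(1,1)}_{N_1,N_2}(t)$ on $\mathfrak{h}\otimes\mathfrak{h}$ paired with the product orbital $u_t\otimes v_t$.
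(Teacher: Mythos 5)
Your proposal is correct and is precisely the argument the paper intends: feed the hypothesis \eqref{eq:double_BEC_at_tzero} into the bound \eqref{tesi} of Theorem \ref{main}, then convert the $\alpha$-control into the trace-norm control via the inequality $R^{(1,1)}\leqslant 2\sqrt{\alpha^{(1,1)}}$ of Lemma \ref{lemma:intertwined-controls}, whose validity in the two-component setting rests, exactly as you observe, on the fact that it involves only a density matrix and a rank-one projection with no reference to the tensor structure. Your final absorption step $2\sqrt{K+1}\leqslant 2(K+1)$ (legitimate since $K\geqslant 0$) matches the constant bookkeeping implicit in the paper's use of ``$\lesssim$'' in \eqref{eq:persistence_of_BEC_traceasense}.
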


Corollary \ref{cor} provides therefore a quantitative proof of the persistence of the double condensation in the mixture at any finite time. Observe that no matter how faster than $(N_1+N_2)^{-1}$ is the asymptotic BEC \eqref{eq:double_BEC_at_tzero} at $t=0$, the bound \eqref{tesi} always gives at later times a rate of convergence of magnitude  $(N_1+N_2)^{-1}$ in \eqref{eq:persistence_of_BEC_alphasense}. We emphasize also that the exponential deterioration in time of all the above controls \eqref{tesi}, \eqref{eq:persistence_of_BEC_alphasense}, \eqref{eq:persistence_of_BEC_traceasense} of BEC along the time evolution is certainly non optimal and is rather a consequence of the Gr\"{o}nwall-type estimate at the basis of the proof of Theorem \ref{main}.

\subsection{Further consequences and remarks}

\emph{Controls in the energy space.} By replacing assumptions (A2) and (A4) above with
\begin{itemize}
 \item[(A2')] The potentials $V_\alpha$, $\alpha\in\{1,2,12\}$ are real-valued, even, and such that
 \begin{equation}\label{eq:assumptions_on_V_alternative}
 \begin{split}
 \|V_j^2*|\phi_j|^2\|_\infty\;&\lesssim\;\|\phi_j\|^2_{\mathcal{D}[h_j]}\qquad\forall\phi_j\in\mathcal{D}[h_j]\qquad j=1,2 \\
 \|V_{12}^2*|\phi_j|^2\|_\infty\;&\lesssim\;\|\phi_j\|^2_{\mathcal{D}[h_j]}\qquad\forall\phi_j\in\mathcal{D}[h_j]\qquad j=1,2
 \end{split}
 \end{equation}
 \item[(A4')] The initial value problem consisting of the system \eqref{eq:Hartree_system} with initial condition $u(0)=u_0$ and $v(0)=v_0$ for given functions $u_0\in\mathcal{D}[h_1]$ and $v_0\in\mathcal{D}[h_2]$
  has a unique global-in-time solution
 \begin{equation}\label{soluzioni-alternative}
 (u,v)\;\in\;C(\mathbb{R},\mathcal{D}[h_1]\oplus\mathcal{D}[h_2])\cap C^1(\mathbb{R},\mathcal{D}[h_1]^*\oplus\mathcal{D}[h_2]^*)
 \end{equation}
\end{itemize}
then Theorem \ref{main} and Corollary \ref{cor} hold with
\[
\begin{split}
f(t)\;&:=\;\int_0^t\,\ud\tau\,\Big(\|u_\tau\|^2_{\mathcal{D}[h_1]}+\|v_\tau\|^2_{\mathcal{D}[h_2]}\Big)\,.
\end{split}
\]
Indeed, assumptions \eqref{eq:assumptions_on_V_alternative} take the role of Lemma \ref{stime} below and our proof remains virtually unchanged.

\emph{Persistence of condensation for a fraction of the two populations.} It follows straightforwardly from the properties of the
indicators of condensation discussed in Lemma \ref{lemma:intertwined-controls} of Section \ref{sec:indicators_of_convergence} below that the conclusion of Corollary \ref{cor} implies also
\begin{equation}\label{eq:persistence_of_BEC_alphasense-level-k}
\alpha^{(k_1,k_2)}_{N_1,N_2}(t)\;\lesssim\;\frac{\:\max\{k_1,k_2\}\:}{N_1+N_2}\,e^{\,\kappa f(t)}\qquad\forall t\in\mathbb{R}\,,
\end{equation}
which is interpreted as the control of the persistence in time of condensation for $o(N_j)$ particles of the $j$-th species, $j=1,2$.

\emph{More singular potentials.}  Although we are not interested in discussing in full generality the class of interaction potentials that can be dealt with by the present method, it is worth remarking that with a moderate additional effort one can adapt the proof of Theorem \ref{main} (in the spirit of \cite[Section 5]{kp-2009-cmp2010}) so as to include potentials with stronger singularities than those admitted by Assumptions (A1)-(A4) above.

\emph{Control of condensation separately in each component.} In a similar setting to the one analysed here, T.~Heil \cite{Heil-2012}
has discussed the large $N_1,N_2$ asymptotics separately for each one-component reduced density matrix, $\gamma_{N_1,N_2}^{(1,0)}(t)$ and $\gamma_{N_1,N_2}^{(0,1)}(t)$ in our notation. As remarked already with our bound \eqref{eq:bound_g11_g10_g01}, such a control is covered by our collective indicator $\gamma_{N_1,N_2}^{(1,1)}(t)$.

\section{Partial marginals and indicators of condensation}\label{sec:indicators_of_convergence}

We have already introduced in Subsection \ref{subsec:manybd-onebd-condensate-mixture} the notion of double reduced density matrix $\gamma_{N_1,N_2}^{(k_1,k_2)}$ associated with a given state $\gamma_{N_1,N_2}$ for a mixture of $N_1+N_2$ particles and relative to a choice of $k_j\leqslant N_j$ particles of the $j$-th species, $j=1,2$, see equations \eqref{eq:def_double_partial_trace-ABSTRACT}-\eqref{eq:def_double_partial_trace-KERNEL} and the discussion thereafter. In this Section we elaborate further on these indicators and on equivalent quantitative characterisations of asymptotic BEC.

Explicitly, for a pure state (a normalised function) $\Psi_{N_1,N_2}\in\cH_{N_1,N_2,\mathrm{sym}}$ the associated marginal
\begin{equation}\label{eq:def_double_partial_trace-k1-k2}
\gamma_{N_1,N_2}^{(k_1,k_2)}\;=\;\mathrm{Tr}_{N_1-k_1}\otimes\mathrm{Tr}_{N_2-k_2}\:|\Psi_{N_1,N_2}\rangle\langle\Psi_{N_1,N_2}| 
\end{equation}
has integral kernel
\begin{equation}\label{eq:def_double_partial_trace-KERNEL-k1-k2}
\begin{split}
& \gamma_{N_1,N_2}^{(k_1,k_1)}(x_1,\dots,x_{k_1},x_1',\dots, x_{k_1}';y_1,\dots,y_{k_2},y_1',\dots,y_{k_2}')\;=\; \\
&=\int_{\mathbb{R}^{3(N_1-k_1)}} \int_{\mathbb{R}^{3(N_2-k_2)}} \ud x_{k_1+1}\cdots\ud x_{N_1}\ud y_{k_2+1}\cdots\ud y_{N_2} \\
& \;\;\times \Psi_{N_1,N_2}(x_1,\dots,x_{k_1},x_{k_1+1},\dots,x_{N_1};y_1,\dots,y_{k_2},y_{k_2+1},\dots,y_{N_2}) \\
& \;\;\times \overline{\Psi_{N_1,N_2}}(x_1',\dots, x_{k_1}',x_{k_1+1},\dots,x_{N_1};y_1',\dots,y_{k_2}',y_{k_2+1},\dots,y_{N_2})\,.
\end{split}
\end{equation}
As long as $k_1,k_2\geqslant 1$,  $\gamma_{N_1,N_2}^{(k_1,k_2)}$ is a density matrix acting on $\cH_{k_1,k_2,\mathrm{sym}}$. In the extremal cases, \eqref{eq:def_double_partial_trace-k1-k2}-\eqref{eq:def_double_partial_trace-KERNEL-k1-k2} above define reduced density matrices $\gamma_{N_1,N_2}^{(k_1,0)}$ and $\gamma_{N_1,N_2}^{(0,k_2)}$ acting, respectively, on the single-species bosonic spaces $\cH_{k_1}$ and $\cH_{k_2}$, that is, all the degrees of freedom of one of the two components are traced out.

For given one-body orbitals $u,v\in\mathfrak{h}$ with $\|u\|=\|v\|=1$, which are going to play the role of condensate functions for the two component of the mixture, and for given $k_j\leqslant N_j$, $j=1,2$, we define
\begin{equation}\label{eq:indicator-alpha}
\alpha^{(k_1,k_2)}_{N_1,N_2}\;:=\;1-\big\langle u^{\otimes k_1}\otimes v^{\otimes k_1}\,,\,\gamma_{N_1,N_2}^{(k_1,k_2)}\:u^{\otimes k_1}\otimes v^{\otimes k_2}\big\rangle
\end{equation}
and
\begin{equation}\label{eq:indicator-trace}
\ind{R}\;:=\;\tr\,\big|\,\gamma_{N_1,N_2}^{(k_1,k_1)}-|u^{\otimes k_1}\otimes v^{\otimes k_2}\rangle\langle u^{\otimes k_1}\otimes v^{\otimes k_2}|\,\big|\,.
\end{equation}
The scalar product in \eqref{eq:indicator-alpha} and the trace in \eqref{eq:indicator-trace} are taken in the Hilbert space $\cH_{k_1,k_2}$.

Both these indicators measure a displacement of the marginal $\gamma_{N_1,N_2}^{(k_1,k_1)}$ from $|u^{\otimes k_1}\otimes v^{\otimes k_2}\rangle\langle u^{\otimes k_1}\otimes v^{\otimes k_2}|$, which is the $(k_1,k_2)$-reduced density matrix relative to the purely factorised $(N_1+N_2)$-body state $u^{\otimes N_1}\otimes v^{\otimes N_2}$. As already discussed in Subsection \ref{subsec:manybd-onebd-condensate-mixture}, their vanishing for large $N_1,N_2$ has the natural interpretation of occurrence of double (simultaneous) condensation for the two components of the mixture. This is in complete analogy to the indicators of condensation adopted for the one-component case (see equations \eqref{eq:def_100BEC}-\eqref{eq:equivalent-BEC-control-1compontent} and the following comments in the Introduction), which in the present notation are precisely $\alpha^{(k_1,0)}_{N_1,N_2}$ and $R_{N_1,N_2}^{(k_1,0)}$ relative to the first component (and the analogs for the second component).

We discuss the properties of such indicators in these  two  Lemmas and in the following observations. For notational convenience we omit the subscripts $N_1,N_2$.

\begin{lemma}\label{lemma:controllo} For the quantities defined in \eqref{eq:indicator-alpha} for $k_1,k_2\in\{0,1\}$, 
one has
\begin{equation}\label{control}
\alpha^{(1,0)}\;\leqslant\;\alpha^{(1,1)}\,,\qquad\alpha^{(0,1)}\;\leqslant\;\alpha^{(1,1)}
\end{equation}
and
\begin{equation}\label{control-from-above}
\alpha^{(1,1)}\;\leqslant\;\alpha^{(1,0)}+\alpha^{(0,1)}\,.
\end{equation}
\end{lemma}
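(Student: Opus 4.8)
The plan is to reduce all three inequalities to elementary positivity statements about traces of products of positive operators on $\mathfrak{h}\otimes\mathfrak{h}$. Write $p:=|u\rangle\langle u|$ and $q:=|v\rangle\langle v|$ for the rank-one orthogonal projections onto the two condensate orbitals. The first key observation is that $\gamma^{(1,0)}$ and $\gamma^{(0,1)}$ are themselves partial traces of $\gamma^{(1,1)}$: tracing out all $N_2$ (resp.\ $N_1$) degrees of freedom of one species equals first tracing out $N_2-1$ (resp.\ $N_1-1$) of them and then the single remaining one, and since the partial traces over the two disjoint sets of variables commute, one gets $\gamma^{(1,0)}=\mathrm{Tr}_{\mathfrak{h}}^{(B)}\gamma^{(1,1)}$ and $\gamma^{(0,1)}=\mathrm{Tr}_{\mathfrak{h}}^{(A)}\gamma^{(1,1)}$, where $\mathrm{Tr}_{\mathfrak{h}}^{(B)}$ (resp.\ $\mathrm{Tr}_{\mathfrak{h}}^{(A)}$) denotes the partial trace over the second (resp.\ first) factor of $\mathfrak{h}\otimes\mathfrak{h}$. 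Using the basis-free characterisation \eqref{eq:def_partial_trace_without_basis} of the partial trace, this rewrites all three indicators as traces of the single density matrix $\gamma^{(1,1)}$ against tensor products of projections:
\[
\begin{gathered}
\alpha^{(1,0)}=1-\mathrm{Tr}\big[(p\otimes\mathbbm{1})\,\gamma^{(1,1)}\big],\qquad
\alpha^{(0,1)}=1-\mathrm{Tr}\big[(\mathbbm{1}\otimes q)\,\gamma^{(1,1)}\big],\\
\alpha^{(1,1)}=1-\mathrm{Tr}\big[(p\otimes q)\,\gamma^{(1,1)}\big].
\end{gathered}
\]

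With this common form the two lower bounds \eqref{control} are immediate. Indeed $\alpha^{(1,0)}\leqslant\alpha^{(1,1)}$ is equivalent to $\mathrm{Tr}[(p\otimes(\mathbbm{1}-q))\,\gamma^{(1,1)}]\geqslant 0$, and the operator $p\otimes(\mathbbm{1}-q)$ is a tensor product of two positive operators, hence positive; since $\gamma^{(1,1)}\geqslant 0$ and the trace of a product of two positive operators is non-negative (write $\mathrm{Tr}[AB]=\mathrm{Tr}[A^{1/2}BA^{1/2}]\geqslant 0$ for $A,B\geqslant 0$), the claim follows. The bound $\alpha^{(0,1)}\leqslant\alpha^{(1,1)}$ is proved identically with $(\mathbbm{1}-p)\otimes q$ in place of $p\otimes(\mathbbm{1}-q)$.

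For the upper bound \eqref{control-from-above} the only additional ingredient is the algebraic identity
\[
\mathbbm{1}-p\otimes\mathbbm{1}-\mathbbm{1}\otimes q+p\otimes q\;=\;(\mathbbm{1}-p)\otimes(\mathbbm{1}-q)\;\geqslant\;0\,.
\]
Taking the trace against $\gamma^{(1,1)}$ and using $\mathrm{Tr}\,\gamma^{(1,1)}=1$, the left-hand side equals $1-\mathrm{Tr}[(p\otimes\mathbbm{1})\gamma^{(1,1)}]-\mathrm{Tr}[(\mathbbm{1}\otimes q)\gamma^{(1,1)}]+\mathrm{Tr}[(p\otimes q)\gamma^{(1,1)}]=\alpha^{(1,0)}+\alpha^{(0,1)}-\alpha^{(1,1)}$, while positivity of the factorised operator forces this to be non-negative, which is exactly $\alpha^{(1,1)}\leqslant\alpha^{(1,0)}+\alpha^{(0,1)}$.

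There is no genuine obstacle here: the entire lemma rests on the positivity of $\gamma^{(1,1)}$ together with the factorisation $(\mathbbm{1}-p)\otimes(\mathbbm{1}-q)$ of the relevant operator. The only step requiring care, rather than representing a difficulty, is the reduction of $\gamma^{(1,0)},\gamma^{(0,1)}$ to partial traces of $\gamma^{(1,1)}$ in the correct tensor factor and the justification that these iterated partial traces commute; once this bookkeeping is in place the three inequalities are purely algebraic.
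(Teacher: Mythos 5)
Your proof is correct and follows essentially the same route as the paper's: the upper bound \eqref{control-from-above} is obtained exactly as in the paper, by tracing $\gamma^{(1,1)}$ against the positive operator $(\mathbbm{1}-|u\rangle\langle u|)\otimes(\mathbbm{1}-|v\rangle\langle v|)$ and expanding. For the lower bounds \eqref{control} the paper expands $\mathrm{Tr}\big[(|u\rangle\langle u|\otimes(\mathbbm{1}-|v\rangle\langle v|))\,\gamma^{(1,1)}\big]\geqslant 0$ in an orthonormal basis $(v_n)_n$ with $v_1=v$, which is just the basis-dependent form of your basis-free trace-positivity argument, both resting on the same two facts you isolate: positivity of $\gamma^{(1,1)}$ and the consistency $\gamma^{(1,0)}=\mathrm{Tr}_{\mathfrak{h}}^{(B)}\gamma^{(1,1)}$, $\gamma^{(0,1)}=\mathrm{Tr}_{\mathfrak{h}}^{(A)}\gamma^{(1,1)}$.
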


\begin{lemma}\label{lemma:intertwined-controls} For the quantities defined in \eqref{eq:indicator-alpha}-\eqref{eq:indicator-trace} for $k_j\in\{1,\dots,N_j\}$, $j=1,2$, 
one has
\begin{eqnarray}
 \alpha^{(k_1,k_2)}\; &\leqslant& \;R^{(k_1,k_2)}\;\leqslant\;2\sqrt{\alpha^{(k_1,k_2)} } \label{eq:intertwined-alpha-R} \\ 
 \alpha^{(k_1,k_2)}\; &\leqslant& \;\max\{k_1,k_2\}\cdot\alpha^{(1,1)}\,. \label{eq:tower-alpha-control}
\end{eqnarray}
\end{lemma}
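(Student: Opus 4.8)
The statement splits into the two inequalities \eqref{eq:intertwined-alpha-R} and \eqref{eq:tower-alpha-control}, which are of quite different natures, so I would treat them separately.

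For \eqref{eq:intertwined-alpha-R} the plan is to recognise it as the $(k_1,k_2)$-lift of the one-component bound \eqref{eq:equivalent-BEC-control-1compontent}: set $\mathcal{K}:=\cH_{k_1,k_2}$, $\Phi:=u^{\otimes k_1}\otimes v^{\otimes k_2}$ (a unit vector), $\rho:=\gamma^{(k_1,k_2)}$ (a density matrix on $\mathcal{K}$, hence $0\leqslant\rho\leqslant\Id$), and $P:=|\Phi\rangle\langle\Phi|$, so that $\alpha^{(k_1,k_2)}=1-\langle\Phi,\rho\,\Phi\rangle$ and $R^{(k_1,k_2)}=\tr|\rho-P|$. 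The lower bound is immediate: since $\langle\Phi,P\Phi\rangle=1$, $\alpha^{(k_1,k_2)}=\langle\Phi,(P-\rho)\Phi\rangle\leqslant\|P-\rho\|_{\mathrm{op}}\leqslant\tr|\rho-P|$. For the upper bound I would argue in three steps. First, $\rho-P$ has \emph{at most one} negative eigenvalue: if it had two orthonormal negative eigenvectors spanning a plane, the functional $\xi\mapsto\langle\Phi,\xi\rangle$ would vanish on some nonzero $\xi$ in that plane, forcing $\langle\xi,\rho\,\xi\rangle<\langle\xi,P\xi\rangle=0$ and contradicting $\rho\geqslant0$. Since also $\tr(\rho-P)=0$, the trace norm equals twice the modulus of the single (possibly absent) negative eigenvalue, $\tr|\rho-P|=2|\mu_0|$ with $\mu_0=\min_{\|\chi\|=1}\big(\langle\chi,\rho\chi\rangle-|\langle\Phi,\chi\rangle|^2\big)$. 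Third, I would show $\mu_0\geqslant-\sqrt{\alpha^{(k_1,k_2)}}$: writing $b=\langle\Phi,\chi\rangle$, $r=\langle\chi,\rho\chi\rangle$, and $\alpha=\alpha^{(k_1,k_2)}$, the Cauchy--Schwarz inequalities for the positive forms $\rho$ and $\Id-\rho$ give $|\langle\Phi,\rho\chi\rangle|\leqslant\sqrt{(1-\alpha)r}$ and $|\langle\Phi,(\Id-\rho)\chi\rangle|\leqslant\sqrt{\alpha(1-r)}$, whence $|b|\leqslant\sqrt{(1-\alpha)r}+\sqrt{\alpha(1-r)}$ and therefore $|b|^2-r\leqslant\alpha(1-2r)+2\sqrt{\alpha(1-\alpha)\,r(1-r)}$; substituting $r=\sin^2\theta$ this is $\alpha\cos2\theta+\sqrt{\alpha(1-\alpha)}\,\sin2\theta$, of amplitude $\sqrt{\alpha^2+\alpha(1-\alpha)}=\sqrt{\alpha}$, so $|b|^2-r\leqslant\sqrt\alpha$ for every $\chi$. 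Combining the three steps yields $R^{(k_1,k_2)}=2|\mu_0|\leqslant2\sqrt{\alpha^{(k_1,k_2)}}$.

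For \eqref{eq:tower-alpha-control} the plan is combinatorial. Using the partial-trace duality \eqref{eq:def_partial_trace_without_basis} I would first rewrite the indicator on the full space $\cH_{N_1,N_2,\mathrm{sym}}$ as
\[
1-\alpha^{(k_1,k_2)}\;=\;\Big\langle\Psi,\Big(\textstyle\prod_{i=1}^{k_1}P_i^u\Big)\Big(\prod_{r=1}^{k_2}P_r^v\Big)\Psi\Big\rangle,
\]
where $P_i^u$ (resp.\ $P_r^v$) is the projection onto $u$ (resp.\ $v$) acting on the $i$-th $x$-variable (resp.\ $r$-th $y$-variable) and trivially elsewhere; all these projections commute. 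Assuming without loss of generality $k_1=\max\{k_1,k_2\}$ (otherwise swap the roles of the two species), I would pair them: put $\Pi_j:=P_j^u P_j^v$ for $j=1,\dots,k_2$ and $\Pi_j:=P_j^u$ for $j=k_2+1,\dots,k_1$, so that $\prod_{j=1}^{k_1}\Pi_j$ is exactly the product above. For commuting projections the telescoping identity $\Id-\prod_{j=1}^{k_1}\Pi_j=\sum_{j=1}^{k_1}\Pi_1\cdots\Pi_{j-1}(\Id-\Pi_j)$ gives the operator bound $\Id-\prod_j\Pi_j\leqslant\sum_j(\Id-\Pi_j)$. Taking the $\Psi$-expectation and using the separate permutation symmetry of $\Psi$ in the $x$'s and in the $y$'s — which makes $\langle\Psi,(\Id-P_j^uP_j^v)\Psi\rangle=\alpha^{(1,1)}$ for $j\leqslant k_2$ and $\langle\Psi,(\Id-P_j^u)\Psi\rangle=\alpha^{(1,0)}$ for $j>k_2$ — I obtain $\alpha^{(k_1,k_2)}\leqslant k_2\,\alpha^{(1,1)}+(k_1-k_2)\,\alpha^{(1,0)}$. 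Finally, invoking $\alpha^{(1,0)}\leqslant\alpha^{(1,1)}$ from \eqref{control} of Lemma \ref{lemma:controllo} collapses the right-hand side to $k_1\,\alpha^{(1,1)}=\max\{k_1,k_2\}\,\alpha^{(1,1)}$.

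The routine ingredients (the telescoping operator identity, the symmetry identifications, and the partial-trace rewriting) are standard. I expect the genuine work to lie in two places. For \eqref{eq:tower-alpha-control} the delicate point is obtaining $\max\{k_1,k_2\}$ rather than the easy $k_1+k_2$: this is precisely what the pairing of the first $\min\{k_1,k_2\}$ slots into \emph{joint} projections $P_j^uP_j^v$ achieves, after which Lemma \ref{lemma:controllo} converts the leftover single-species terms into $\alpha^{(1,1)}$-terms at no extra cost. For \eqref{eq:intertwined-alpha-R} the main obstacle is the \emph{sharp} constant $2$: a naive triangle-inequality splitting of $\rho-P$ into diagonal and off-diagonal blocks only yields $2\alpha+2\sqrt{\alpha(1-\alpha)}$, which is worse than $2\sqrt\alpha$, so one must exploit both the rank-one structure (at most one negative eigenvalue, giving $\tr|\rho-P|=2|\mu_0|$) and the full positivity $0\leqslant\rho\leqslant\Id$ through the two complementary Cauchy--Schwarz estimates and the amplitude computation above.
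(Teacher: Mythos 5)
Your proof is correct, and it differs from the paper's mainly in being self-contained where the paper delegates to the one-component literature. For \eqref{eq:intertwined-alpha-R} the paper gives essentially no argument: it observes that the statement involves only one rank-one projection and one density matrix on $\cH_{k_1,k_2}$, with no reference to the two-component structure, and cites \cite[Lemma 2.3]{kp-2009-cmp2010}, whose block-decomposition argument yields the constant $2\sqrt{2}$; the improvement to $2$ is merely asserted to be ``easy''. What you supply is exactly that improvement: the observation that $\gamma^{(k_1,k_2)}-|\Phi\rangle\langle\Phi|$ has at most one negative eigenvalue (hence trace norm $2|\mu_0|$, using that the difference is traceless), combined with the two complementary Cauchy--Schwarz estimates for the forms $\rho$ and $\mathbbm{1}-\rho$ and the amplitude computation giving $|\mu_0|\leqslant\sqrt{\alpha^{(k_1,k_2)}}$. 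This is a complete proof of the sharp constant that the paper leaves to the reader, and your side remark correctly identifies why the naive splitting cannot do better than $2\sqrt{2}$. For \eqref{eq:tower-alpha-control} your route is structurally the same as the paper's but unrolled: the paper iterates the component-wise inequality $\alpha^{(k_1,k_2)}\leqslant\alpha^{(k_1-1,k_2-1)}+\alpha^{(1,1)}$ (imported from \cite[Lemma 2.1]{kp-2009-cmp2010}) down to $\alpha^{(k_1,k_2)}\leqslant\min\{k_1,k_2\}\,\alpha^{(1,1)}+\alpha^{(0,|k_1-k_2|)}$, then bounds the leftover one-component marginal by $|k_1-k_2|\,\alpha^{(0,1)}$ and invokes Lemma \ref{lemma:controllo}; your pairing $\Pi_j=P_j^uP_j^v$ for $j\leqslant\min\{k_1,k_2\}$ followed by the telescoping inequality for commuting projections performs exactly this recursion in one shot, and you close with the same appeal to \eqref{control}. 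In both treatments the decisive ideas are identical --- pair an A-particle with a B-particle to get $\max\{k_1,k_2\}$ instead of $k_1+k_2$, and use $\alpha^{(1,0)},\alpha^{(0,1)}\leqslant\alpha^{(1,1)}$ to absorb the surplus single-species terms --- so what your version buys is independence from external references, at the cost of re-proving one-component facts the paper treats as known.
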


In this work we are mainly concerned with the $\alpha$-indicators, that measure the displacement between the maximal possible value  1 and the expectation of the reduced density matrix on the factorised state. The apparently stronger trace norm displacement (the $R$-indicators) turns out to be equivalent to the former, in view of \eqref{eq:intertwined-alpha-R}. Lemma \ref{lemma:controllo} shows that for the control of the double condensation, simultaneously in each component, one can equivalently monitor the vanishing of $\alpha^{(1,1)}$ or the vanishing of $\alpha^{(1,0)}$ \emph{and} $\alpha^{(0,1)}$. Lemma \ref{lemma:intertwined-controls} shows in addition that the vanishing of $\alpha^{(1,1)}$ or of higher order $\alpha$-indicators are also equivalent, with a deterioration (the factor $\max\{k_1,k_2\}$ in \eqref{eq:tower-alpha-control}) that depends on the size of the subsystem of particles in each component on which one monitors the absence of correlation, that is, the pure factorisation and hence the occurrence of condensation.

\begin{proof}[Proof of Lemma \ref{lemma:controllo}]
If $(v_n)_{n=1}^\infty$ is an orthonormal basis of $\mathfrak{h}=L^2(\mathbb{R}^3)$ such that $v_1=v$, one has
\[
\langle u\otimes v\,,\gamma^{(1,1)}\,u\otimes v\rangle_{\mathfrak{h}^{\otimes 2}}\;\leqslant\;\sum_{n=1}^\infty\langle u\otimes v_n\,,\gamma^{(1,1)}\,u\otimes v_n\rangle_{\mathfrak{h}^{\otimes 2}}\;=\;\langle u\,,\gamma^{(1,0)}u\rangle_{\mathfrak{h}}
\]
where the inequality is due to the positivity of $\gamma^{(1,1)}$ and in the following identity we used the definition of partial trace. This shows that $\alpha^{(1,0)}\leqslant \alpha^{(1,1)}$, and analogously $\alpha^{(0,1)}\leqslant \alpha^{(1,1)}$. To prove \eqref{control-from-above}, we exploit the positivity of the projections $\mathbbm{1}-|u\rangle\langle u|$ and $\mathbbm{1}-|v\rangle\langle v|$, and hence of their tensor product: one finds
\[
\begin{split}
 0\;&\leqslant\;\tr_{\mathfrak{h}^{\otimes 2}}\big[\,\gamma^{(1,1)}(\mathbbm{1}_{\mathfrak{h}}-|u\rangle\langle u|)\otimes(\mathbbm{1}_{\mathfrak{h}}-|v\rangle\langle v|)\,\big] \\
 &=\;\tr_{\mathfrak{h}^{\otimes 2}}\big[\,\,\gamma^{(1,1)}(\mathbbm{1}_{\mathfrak{h}}-|u\rangle\langle u|)\otimes\mathbbm{1}_{\mathfrak{h}}\,\big]
+ \tr_{\mathfrak{h}^{\otimes 2}}\big[\,\,\gamma^{(1,1)} \,\mathbbm{1}_{\mathfrak{h}}\otimes(\mathbbm{1}_{\mathfrak{h}}-|v\rangle\langle v|)\,\big] \\
&\qquad -\tr_{\mathfrak{h}^{\otimes 2}}\big[\, \gamma^{(1,1)} (\mathbbm{1}_{\mathfrak{h}^{\otimes 2}}-|u\rangle\langle u|\otimes|v\rangle\langle v|)  \,\big] \\
& = \; \alpha^{(1,0)}+\alpha^{(0,1)}-\alpha^{(1,1)}
\end{split}
\]
and the conclusion follows.
\end{proof}

\begin{proof}[Proof of Lemma \ref{lemma:intertwined-controls}] Inequalities \eqref{eq:intertwined-alpha-R} are established precisely as in the one-component case, since here one only deals with the rank-one projection $|u^{\otimes k_1}\otimes v^{\otimes k_2}\rangle\langle u^{\otimes k_1}\otimes v^{\otimes k_2}|$ and the density matrix $\gamma^{(k_1,k_2)}$ on the $(k_1+k_2)$-body space $\cH_{k_1,k_2}$, with no reference to the two-component structure or to the numbers $k_1,k_2$ -- see, e.g., \cite[Lemma 2.3]{kp-2009-cmp2010} (the constant $2$ in the second inequality in \eqref{eq:intertwined-alpha-R} is an easy improvement of the constant $2\sqrt{2}$ obtained in \cite[Lemma 2.3]{kp-2009-cmp2010}). As for \eqref{eq:tower-alpha-control}, one first repeats component-wise the very same argument that allows for a control of the $k$-th marginal in terms of the $(k-1)$-th marginal precisely as in the one-component case \cite[Lemma 2.1]{kp-2009-cmp2010}, thus obtaining
\[
\alpha^{(k_1,k_2)}\;\leqslant\;\alpha^{(k_1-1,k_2-1)}+\alpha^{(1,1)}\,.
\]
By iteration, and supposing for example $k_1<k_2$, one gets
\[
\alpha^{(k_1,k_2)}\;\leqslant\;k_1\alpha^{(1,1)}+\alpha^{(0,k_2-k_1)}\,.
\]
In turn, for the one-component $(k_2-k_1)$-marginal $\alpha^{(0,k_2-k_1)}$ the standard one-component argument \cite[Lemma 2.1]{kp-2009-cmp2010} yields
\[
\alpha^{(0,k_2-k_1)}\;\leqslant\;(k_2-k_1)\,\alpha^{(0,1)}
\]
and by Lemma \ref{lemma:controllo} $\alpha^{(0,1)}\leqslant\alpha^{(1,1)}$.  By combining these inequalities,
\[
\alpha^{(k_1,k_2)}\;\leqslant\;k_1\alpha^{(1,1)}+(k_2-k_1)\,\alpha^{(0,1)}\;\leqslant\;k_2\,\alpha^{(1,1)}
\]
which proves \eqref{eq:tower-alpha-control}.
\end{proof}

\section{Mean-field scaling for a two-component mixture}\label{sec:mean-field-scaling}

The purpose of this Section is to justify the mean-field scaling factors chosen in \eqref{eq:HN1N2} for the many-body Hamiltonian $H_{N_1,N_2}$. It was already remarked after \eqref{eq:HN1N2}  that there is large freedom in finding $N_1,N_2$-dependent pre-factors that ensure that the kinetic and the potential part of $H_{N_1,N_2}$ remain of the same order as $N_1,N_2\to\infty$. One then has to make a choice driven by physical considerations, given the asymptotic fractions $c_1$ and $c_2$ of the two populations of particles. %For the sake of simplicity, with no loss of generality throughout this Section we shall assume that $N_j/(N_1+N_2)=c_j$, $j=1,2$, at every finite $N_1,N_2$, and not only asymptotically.

We focus first on the expected system \eqref{eq:Hartree_system} of non-linear Schr\"{o}dinger equations for the effective dynamics of the mixture. 
It is a well familiar fact in  the quest for effective evolution equations of many-body quantum dynamics that performing a mean-field scaling in the Hamiltonian produces at the effective level non-linear Schr\"{o}dinger equations of Hartree type, that is, with ``non-local'' non-linearities of the form $(V*|u|^2)u$ -- see, e.g., \cite[Chapter 2]{Benedikter-Porta-Schlein-2015}. 
Realistic interaction potentials have strong magnitude on a short range and as opposite to the mean-field picture each particle interacts with the neighbouring others on a short scale only: the non-linearity expected in the effective dynamics is then ``local'', namely of the form $g|u|^2 u$. For the two-component condensate mixture under consideration, an amount of physical heuristics and experimental observations \cite[Section 12.11]{ps2003} show that the effective dynamics is governed by the system
\begin{equation}\label{eq:physical_GP_system}
\begin{split}
\ii\partial_t u_t\;&=\;h_1 u_t + g_1 N_1|u_t|^2 u_t + g_{12} N_2 |v_t|^2 u_t \\
\ii\partial_t v_t\;&=\;h_2 v_t + g_2 N_2|v_t|^2 v_t + g_{12} N_1 |u_t|^2 v_t
\end{split}
\end{equation}
as an accurate approximation in the regime of ultra-low temperature, high dilution, and large $N_1,N_2$. In \eqref{eq:physical_GP_system} the couplings $g_1,g_2,g_{12}$ are, in suitable units, the two-body scattering lengths of the corresponding particle-particle interactions and the functions $\sqrt{N_1}\,u$, $\sqrt{N_2}\,v$, with $\|u\|_2=\|v\|_2=1$, represent the so-called ``order parameters'' of the two components of the mixture.

For a mathematical derivation of \eqref{eq:physical_GP_system} one goes through a suitable scaling of the interaction potentials, and hence of the couplings $g_1(N_1,N_2)$, $g_2(N_1,N_2)$, $g_{12}(N_1,N_2)$, as $N_1,N_2\to\infty$, so as to have asymptotically constant couplings
\begin{equation}\label{eq:renormalising_couplings}
\begin{split}
g_1(N_1,N_2) N_1\to \gamma_1\,,&\qquad g_2(N_1,N_2) N_1\to \gamma_2 \\
g_{12}(N_1,N_2)N_2\to D_1\,,&\qquad g_{12}(N_1,N_2)N_1\to D_2
\end{split}
\end{equation}
in a limiting, $N_1,N_2$-independent version of \eqref{eq:physical_GP_system}.
%In fact in the resulting laws $g_1=F_1/N_1$, $g_2=F_2/N_2$ it is not restrictive to set $F_1=F_2=g_\circ$ (this amounts to a re-definition of the interaction potentials for species 1 and 2).
It is then immediate to see that \eqref{eq:renormalising_couplings}, under the constraint $N_1/N_2\to c_1/c_2$, imposes $D_1=\gamma_{12}c_2$ and $D_2=\gamma_{12}c_1$ for some constant $\gamma_{12}>0$ and hence the scalings
\begin{equation}\label{eq:GP_scaling_in_couplings}
g_1\;=\;\frac{\gamma_1}{N_1}\,,\qquad\!\! g_2\;=\;\frac{\gamma_2}{N_2}\,,\qquad\!\! g_{12}\;=\;\gamma_{12}\frac{c_1}{N_1}\;=\;\gamma_{12}\frac{c_2}{N_2}\;=\;\frac{\gamma_{12}}{N_1+N_2}\,.
\end{equation}
Based on  the definition of scattering length for short-range potentials \cite[Appendix C]{LSeSY-ober}, it is easy to see that the re-scaling \eqref{eq:GP_scaling_in_couplings} of the scattering lengths $g_1$, $g_2$, $g_{12}$ is reproduced by scaling the interaction potentials as
\begin{equation}\label{eq:GP_scaling_for_potentials}
\begin{split}
V_1(x)\;&=\;N_1^2 \mathcal{V}_1(N_1x)\,,\qquad V_2(x)\;=\;N_2^2 \mathcal{V}_2(N_2x)\,, \\
V_{12}(x)\;&=\;(N_1+N_2)^2 \mathcal{V}_{12}((N_1+N_2)x)\,,
\end{split}
\end{equation}
for fixed potentials $ \mathcal{V}_1$, $ \mathcal{V}_2$, $ \mathcal{V}_{12}$ with scattering length, respectively, $\gamma_1$, $\gamma_2$, $\gamma_{12}$: this is precisely the Gross-Pitaevskii scaling, namely the case $\beta=1$ in \eqref{eq:scaling_limits}. In this scaling, the non-linear effective system to be derived takes therefore the form
\begin{equation}\label{eq:mathematical_GP_system}
\begin{split}
\ii\partial_t u_t\;&=\;h_1 u_t + \gamma_1 |u_t|^2 u_t + c_2\,\gamma_{12} |v_t|^2 u_t \\
\ii\partial_t v_t\;&=\;h_2 v_t +\gamma_2|v_t|^2 v_t + c_1\,\gamma_{12} |u_t|^2 v_t\,.
\end{split}
\end{equation}

As in the present work we put emphasis on the mechanism for the emergence of a non-linear effective description and for this we content ourselves to discuss the \emph{mean-field} problem, \eqref{eq:mathematical_GP_system} has to be converted to its mean-field form: using that the Born approximation for the scattering length is $\gamma_{\alpha}=\int V_\alpha$, $\alpha\in\{1,2,12\}$, we see that for a formal short-range potential $V_\alpha(x)=\gamma_\alpha\delta(x)$ the counterpart of the non-linearities $\gamma_\alpha |f|^2 g$ has the form $(V_\alpha*|f|^2)g$. The mean-field version of \eqref{eq:mathematical_GP_system}
is therefore the Hartree system \eqref{eq:Hartree_system}. This also indicates that the mean-field version of the Gross-Pitaevskii scaling \eqref{eq:GP_scaling_for_potentials} for the interaction potentials is, in view of \eqref{eq:scaling_limits},
\begin{equation}\label{eq:MF_scaling_for_potentials}
\begin{split}
V_1(x)\;&=\;N_1^{-1} \mathcal{V}_1(x)\,,\qquad V_2(x)\;=\;N_2^{-1} \mathcal{V}_2(x)\,, \\
V_{12}(x)\;&=\;(N_1+N_2)^{-1} \mathcal{V}_{12}(x)\,,
\end{split}
\end{equation}
which accounts for the actual mean-field pre-factors chosen in \eqref{eq:HN1N2}.

We find it instructive to present two alternative arguments for the emergence of the choice \eqref{eq:MF_scaling_for_potentials} in the many-body Hamiltonian. In the first one we allow for generic mean-field pre-factors
\begin{equation}\label{eq:generic_Valpha_MFscaling}
V_\alpha(x)\;=\;m_\alpha(N_1,N_2)\,\mathcal{V}(x)\,,\qquad\alpha\in\{1,2,12\}\,,
\end{equation}
in the expression \eqref{eq:HN1N2} for the many-body Hamiltonian $H_{N_1,N_2}$ and we compute the asymptotics of the energy per particle for the purely double condensate state with orbitals $u$ and $v$, thus finding
\begin{equation}
\begin{split}
 \mathscr{E}[u,v]\;&=\;\lim_{N_1,N_2\to\infty}\frac{\langle u^{\otimes N_1}\otimes v^{\otimes N_2}\,,\,H_{N_1,N_2}\,u^{\otimes N_1}\otimes v^{\otimes N_2}\rangle}{N_1+N_2} \\
 &=\;c_1\langle u,h_1 u\rangle +c_2\langle v,h_2 v\rangle  +\frac{c_1^2 k_1}{2}\,\langle u, V_1*|u|^2 u\rangle  \\
 &\qquad +\frac{c_2^2 k_2}{2}\,\langle v, V_2*|v|^2 v\rangle + c_1 c_2 k_{12} \langle u, V_{12}*|v|^2 u\rangle\,,
\end{split}
\end{equation}
where we set
\begin{equation}\label{eq:def_k}
k_\alpha\;:=\;\lim_{N_1,N_2\to\infty} (N_1+N_2)\cdot m_\alpha\,,\qquad\alpha\in\{1,2,12\}\,.
\end{equation}
The energy $\mathscr{E}$ has to be conserved along the time evolution generated by $H_{N_1,N_2}$: making the Ansatz that the pure factorisation of the initial state is preserved in time, an easy computation shows that for smooth enough solutions $(u_t,v_t)$ to the system
\begin{equation}\label{eq:Hartree_system_generic_MF_scalings}
\begin{split}
\ii\partial_t u_t\;&=\;h_1 u_t + c_1 k_1(V_1*|u_t|^2) u_t + c_2 k_{12} (V_{12}*|v_t|^2) u_t \\
\ii\partial_t v_t\;&=\;h_2 v_t + c_2 k_2(V_2*|v_t|^2) v_t + c_1 k_{12} (V_{12}*|u_t|^2) v_t
\end{split}
\end{equation}
one has
\begin{equation}
\frac{\ud}{\ud t}\mathscr{E}[u_t,v_t]\;=\;0\,,
\end{equation}
which shows that \eqref{eq:Hartree_system_generic_MF_scalings} is the correct effective description of the many-body dynamics under the formal factorisation Ansatz. The comparison between \eqref{eq:Hartree_system_generic_MF_scalings} and the expected \eqref{eq:Hartree_system} gives
\[
c_1 k_1\;=\;1\,,\qquad c_2 k_2 \;=\;1\,,\qquad c_2 k_{12}\;=\;c_2\,,\qquad c_1 k_{12}\;=\;c_1\,,
\]
whence, owing to \eqref{eq:def_k},
\begin{equation}\label{eq:here_are_the_scalings}
m_1\;=\;N_1^{-1}\,,\qquad m_2\;=\;N_2^{-1}\,,\qquad m_{12}\;=\;(N_1+N_2)^{-1}\,.
\end{equation}
With the choice \eqref{eq:here_are_the_scalings}, \eqref{eq:generic_Valpha_MFscaling} reproduces precisely \eqref{eq:MF_scaling_for_potentials}.

Alternatively, one can check the correctness of the chosen mean-field pre-factors by plugging the formal Ansatz of complete factorisation into  the hierarchy of coupled PDE's that must be satisfied by the reduced marginals $\gamma_{N_1,N_2}^{(k_1,k_2)}(t)$ associated with the solution $\Psi_{N_1,N_2}(t)$ to the many-body Schr\"{o}dinger equation $\ii\partial_t\Psi_{N_1,N_2}(t)=H_{N_1,N_2}\Psi_{N_1,N_2}(t)$. This is the so-called BBGKY hierarchy \cite[Chapter 2]{Benedikter-Porta-Schlein-2015}, that consists of coupled equations for the $(k_1,k_2)$-marginals for each $k_1$ and $k_2$ in $\{1,\dots,N_j\}$, the first of which is
\begin{equation}\label{schifo}
\begin{split}
\ii\de_t\gamma^{(1,1)}_{N_1,N_2}(t)
\;=&\;\,\big[\,(h_1)_1^A,\gamma^{(1,1)}_{N_1,N_2}\,\big]+\big[\,(h_2)^B_1,\gamma^{(1,1)}_{N_1,N_2}\,\big]+\\
&+\dfrac{N_1-1}{N_1}\,\text{Tr}_{x_2}\big[\,(V_1(x_1-x_2))^A,\gamma^{(2,1)}_{N_1,N_2}\,\big]+\\
&+\dfrac{N_2-1}{N_2}\,\text{Tr}_{y_2}\big[\,(V_2(y_1-y_2))^B,\gamma^{(1,2)}_{N_1,N_2}\,\big]+\\
&+\dfrac{1}{N_1+N_2}\,\text{Tr}_{x_2,y_2}\big[\,(N_2-1)V_{12}(x_1-y_2) \\
& +(N_1-1)V_{12}(y_1-x_2)+V_{12}(x_1-y_1),\gamma^{(2,2)}_{N_1,N_2}\,\big]\,.
\end{split}
\end{equation}
In \eqref{schifo} and in the following  the notation $\tr_{x_2}$ and its analogs denotes the partial trace over the degrees of freedom of the second particle of the first species, etc. Taking formally $\gamma_{N_1,N_2}^{(k_1,k_2)}(t)\to\gamma_{\infty,\infty}^{(k_1,k_2)}(t)$ as $N_1,N_2\to\infty$ yields the limiting \emph{infinite} BBGKY hierarchy for which \eqref{schifo} takes the limiting form
\begin{equation}\label{inf}
\begin{split}
&\ii\de_t\gamma^{(1,1)}_{\infty,\infty,t}\;=\;\,\big[\,(h_1)^A_1,\gamma^{(1,1)}_{\infty,\infty,t}\,\big]+\big[\,(h_2)^B_1,\gamma^{(1,1)}_{\infty,\infty,t}\,\big]+\\
&+\text{Tr}_{x_2}\big[\,(V_1(x_1-x_2))^A,\gamma^{(2,1)}_{\infty,\infty,t}\,\big]+\text{Tr}_{y_2}\big[\,(V_2(y_1-y_2))^B,\gamma^{(1,2)}_{\infty,\infty,t}\,\big]\\
&+\text{Tr}_{x_2,y_2}\big[\,c_2\,V_{12}(x_1-y_2)+c_1\,V_{12}(y_1-x_2),\gamma^{(2,2)}_{\infty,\infty,t}\,\big] \,.
\end{split}
\end{equation}
With a direct computation one then checks that the formal Ansatz $\gamma^{(k_1,k_2)}_{\infty,\infty}(t)=\left(\ket{u_t}\bra{u_t}\right)^{\otimes k_1}\otimes\left(\ket{v_t}\bra{v_t}\right)^{\otimes k_2}$ in the limit of infinitely many particles, where $(u_t,v_t)$ is a solution to the Hartree system \eqref{eq:Hartree_system}, produces a solution to \eqref{inf} and in fact to all the other equations of the infinite BBGKY hierarchy.

\section{Proof of Theorem \ref{main}}\label{sec:main_proof}

In this Section we present the proof of our main result, Theorem \ref{main}, and of Corollary \ref{cor}. We shall discuss it in various steps.

The proof goes through a suitable modification of Pickl's counting method for the dynamics of a single-component condensate
\cite{kp-2009-cmp2010,Pickl-JSP-2010,Pickl-LMP-2011,Pickl-RMP-2015}, in order to deal with the inter-species interaction terms and the new mean-field coupling factors.
This method is specifically tailored for the quantity $\alpha_{N_1,N_2}^{(1,1)}(t)$ and it is designed to control it in terms of its value at $t=0$. In fact, what in an appropriate sense is actually ``counted'' is, informally speaking, the number of ``bad'' particles of each species in the many-body state $\Psi_{N_1,N_2}(t)$ which are not of the type $u_t$ or $v_t$, more precisely which are described by a one-body orbital orthogonal to $u_t$ or $v_t$. The quantity of interest, according to this interpretation, is therefore the expectation of the single-orbital observable $|u_t\otimes v_t\rangle\langle u_t\otimes v_t|$ on $\mathfrak{h}\otimes\mathfrak{h}$ on the state $\Psi_{N_1,N_2}(t)\in\cH_{N_1,N_2,\mathrm{sym}}$, and hence the quantity
\begin{equation}\label{eq:quantity_to_monitor_alpha}
\begin{split}
\big\langle\,\Psi_{N_1,N_2}(t),(\mathbbm{1}&-|u_t\otimes v_t\rangle\langle u_t\otimes v_t|\,)\,\Psi_{N_1,N_2}(t)\,\big\rangle\;= \\
&=\;1-\big\langle u_t\otimes v_t\:,\:\gamma_{N_1,N_2}^{(1,1)}(t)\;u_t\otimes v_t\big\rangle\;=\;\alpha^{(1,1)}_{N_1,N_2}(t)\,.
\end{split}
\end{equation}

In order to obtain the bound \eqref{tesi} in Theorem \ref{main} we shall establish the following estimate on the time derivative of  $\alpha^{(1,1)}_{N_1,N_2}(t)$:
\begin{equation}\label{eq:differential_Gronwall}
\frac{\ud}{\ud t}\alpha^{(1,1)}_{N_1,N_2}(t)\;\leqslant\; B(t)\,\alpha^{(1,1)}_{N_1,N_2}(t) + \frac{B(t)}{N_1+N_2}\,,\quad t\in\mathbb{R}\,,
\end{equation}
for some function $B(t)$ that is given in terms of certain norms of the interaction potentials $V_1$, $V_2$, $V_{12}$ and of the solutions $u_t$, $v_t$ to the Hartree system, and \emph{independent} of the number of particles.
Explicitly,
\begin{equation}\label{eq:Bt}
\begin{split}
&B(t)=\kappa\Big(\|V_1\|_{L^{r_1}+L^{s_1}}\,\big(\|u\|_{\widehat{r}_1}+\|u\|_{\widehat{s}_1}\big)+\|V_2\|_{L^{r_2}+L^{s_2}}\,\big(\|v\|_{\widehat{r}_2}+\|v\|_{\widehat{s}_2}\big)  \\
&\qquad\qquad +\|V_{12}\|_{L^{r_{12}}+L^{s_{12}}}\big(\|u\|_{\widehat{r}_{12}}+\|u\|_{\widehat{s}_{12}}+\|v\|_{\widehat{r}_{12}}+\|v\|_{\widehat{s}_{12}}\big)\Big)
\end{split}
\end{equation}
for some constant $\kappa$ that depends only on the population fractions $c_1$ and $c_2$. Comparing \eqref{eq:alpha-C-splitting-final} with \eqref{eq:differential_Gronwall}, we obtain \eqref{eq:eq-after-Gronwall} and hence the thesis \eqref{tesi}.

After an integration in time \eqref{eq:differential_Gronwall} gives
\begin{equation}
\alpha^{(1,1)}_{N_1,N_2}(t)\;\leqslant\;\alpha^{(1,1)}_{N_1,N_2}(0)+\frac{1}{N_1+N_2}\int_0^t B(s)\,\ud s+\int_0^t B(s)\,\alpha^{(1,1)}_{N_1,N_2}(s)\,\ud s
\end{equation}
which is of the form
\[
\alpha(t)\;\leqslant\;\beta(t)+\int_0^t\gamma(s)\,\alpha(s)\ud s
\]
for $\beta(t)\equiv\alpha^{(1,1)}_{N_1,N_2}(0)+(N_1+N_2)^{-1}\!\int_0^t B(s)\,\ud s$ and $\gamma(t)\equiv B(t)$
and hence implies the Gr\"{o}nwall-like estimate \cite[Theorem 1.3.2]{Pachpatte-ineq}
\[
\alpha(t)\;\leqslant\;\beta(t)+\int_0^t\beta(s)\,\gamma(s)\,e^{\int_s^t\gamma(r)\,\ud r}\ud s\,.
\]
By further integrations by parts we finally conclude that \eqref{eq:differential_Gronwall} implies
\begin{equation}\label{eq:eq-after-Gronwall}
\alpha^{(1,1)}_{N_1,N_2}(t)\;\leqslant\;\Big(\alpha^{(1,1)}_{N_1,N_2}(0)+\frac{1}{N_1+N_2}\Big)\,e^{\int_0^t B(s)\ud s}\,.
\end{equation}
The bound \eqref{eq:eq-after-Gronwall} above, together with \eqref{eq:Bt}, leads to \eqref{tesi}.

\subsection{Additional notation}\label{subsec:notation_for_the_proof}

An amount of simplified notation will be useful from now on.

We shall drop the $t$-variable and $N$-subscripts, thus setting $\Psi\equiv\Psi_{N_1,N_2}(t)$ for the solution to the many-body Schr\"{o}dinger equation, $u\equiv u_t$ and $v\equiv v_t$ for the solutions to the Hartree system \eqref{eq:Hartree_system}, and $\alpha^{(1,1)}\equiv\alpha^{(1,1)}_{N_1,N_2}(t)$ for the corresponding quantity \eqref{eq:quantity_to_monitor_alpha}. Also, we shall denote the time derivative as $\dot{\alpha}^{(1,1)}$.

We shall keep the convention $T^A$ (resp., $T^B$) for $T\otimes\mathbbm{1}$ (resp., $\mathbbm{1}\otimes T$) where $T$ is an operator  that acts only on one of the two factors of $\cH_{N_1,N_2}$ and we need to consider it as an operator on the whole $\cH_{N_1,N_2}$ with trivial action on the other factor. When, in particular, $T$ is a one-particle operator (that is, $T$ acts on $\mathfrak{h}$), the notation $T^A_j$ for some $j\in\{1,\dots,N_1\}$ or  $T^B_\ell$ for some $\ell\in\{1,\dots,N_2\}$ indicates that we are considering $T$ as acting non-trivially on the one-body space of the $j$-th particle of type $A$ or the $\ell$-th particle of type $B$. In analogy to this convention, we shall write $T_{ij}$ when a \emph{two}-body operator $T$ (i.e., an operator on $\mathfrak{h}\otimes\mathfrak{h}$, meant to be the one-body spaces of each component) acts on $\cH_{N_1,N_2}$ non-trivially only in the variables $x_i$ and $y_j$ of the wave-function $\Psi(x_1,\dots,x_{N_1};y_1,\dots,y_{N_2})$. Observe that this is clearly not to be confused with the symbol $V_{12}$ for the inter-species potential: when needed, according to the convention above we shall rather write $(V_{12})_{ij}$ as a multiplication operator.

Henceforth we shall also omit the explicit tensor product notation $\otimes$: this will leave product expressions that it will be straightforward to interpret as tensor products based on the context.

A special notation for a number of relevant one-particle operators will be convenient. By $h^u$ and $h^v$ we shall denote the two ``one-body non-linear Hamiltonians''
\begin{equation}\label{eq:onebody_nonlin_Hamilt}
\begin{split}
h^u\;&:=\;h_1+V_1*|u_t|^2+c_2 V_{12}*|v_t|^2 \\
h^v\;&:=\;h_2+V_2*|v_t|^2+c_1 V_{12}*|u_t|^2
\end{split}
\end{equation}
and by $p^A$, $p^B$ and $q^A$, $q^B$ we shall denote the orthogonal projections
\begin{equation}
\begin{split}
p^A\;:=\;|u_t\rangle\langle u_t|\,,&\qquad q^A\;:=\;\mathbbm{1}-|u_t\rangle\langle u_t| \\
p^B\;:=\;|v_t\rangle\langle v_t|\,,\,&\qquad q^B\;:=\;\mathbbm{1}-|v_t\rangle\langle v_t|\,.
\end{split}
\end{equation}

Furthermore, we shall make use of the shorthands
\begin{equation}\label{eq:shorthnd_Vu_etc}
\begin{split}
V_1^u\;:=\;V_1*|u_t|^2\,,&\qquad V_2^v\;:=\;V_2*|v_t|^2 \\
V_{12}^u\;:=\;V_{12}*|u_t|^2\,,&\qquad V_{12}^v\;:=\;V_{12}*|v_t|^2\,.
\end{split}
\end{equation}
Observe that according to our convention $(V_1^u)^A_i$ denotes the multiplication operator by the function $(V_1*|u_t|^2)(x_i)$ in the $i$-th of the variables for the species A, and so on.

If $f^\varphi$ is any of the shorthands \eqref{eq:shorthnd_Vu_etc} for some functions $f$ and $\varphi$, then   in terms of the above conventions one has
\begin{equation}\label{eq:trick_f-phi}
p_2^A f_{12}^A\,p_2^A\;=\;p_2^A (f^\varphi)_1^A\;=\;p_2^A (f^\varphi)_1^A p_2^A
\end{equation}
as an identity of two-body operators acting on the A-sector of the many-body Hilbert space -- here $f^A_{12}$ is the function $f(x_1-x_2)$ -- and the same holds for the B-sector. Analogously,
\begin{equation}\label{eq:trick_f-phi-distinct-components}
\begin{split}
p_1^A f_{11}\,p_1^A\;&=\;p_1^A (f^\varphi)_1^B\;=\;p_1^A (f^\varphi)_1^B p_1^A \\
p_1^B f_{11}\,p_1^B\;&=\;p_1^B (f^\varphi)_1^A\;=\;p_1^B (f^\varphi)_1^A p_1^B
\end{split}
\end{equation}
as an identity of mixed-component two-body operators -- here $f_{11}$ is the function $f(x_1-y_1)$.

\subsection{Estimates on convolutions}\label{sec:estimates}

As we will systematically need to bound the $L^\infty$-norm of functions of the form $V*|\phi|^2$ or $V^2*|\phi|^2$, where $V=V_1,V_2,V_{12}$ and $\phi=u_t,v_t$, we cast two standard estimates in the following Lemma.

\begin{lemma} \label{stime}
For given $r,s$ such that $2\leqslant r\leqslant s \leqslant \infty$ let $\widehat{r}$ and $\widehat{s}$ be defined by
\[
\frac{1}{r}+\frac{1}{\widehat{r}}\;=\;\frac{1}{2}\,,\qquad\frac{1}{s}+\frac{1}{\widehat{s}}\;=\;\frac{1}{2}\,.
\]
Then, for $V\in L^r(\mathbb{R}^d)+L^s(\mathbb{R}^d)$ and $\phi\in L^2(\mathbb{R}^d)\cap L^{\widehat{r}}(\mathbb{R}^d)$ with $\|\phi\|_2=1$ one has $\phi\in L^{\widehat{s}}(\mathbb{R}^d)$ and moreover
\begin{equation}\label{convoluno}
\big\|V*|\phi|^2\big\|_\infty\;\leqslant\; \|V\|_{L^{r}+L^{s}}\big(\|\phi\|_{\widehat{r}}+\|\phi\|_{\widehat{s}}\big)
\end{equation}
and
\begin{equation}\label{convoldue}
\big\|V^2*|\phi|^2\big\|_\infty\;\leqslant\; 2\,\|V\|^2_{L^{r}+L^{s}}\big(\|\phi\|_{\widehat{r}}+\|\phi\|_{\widehat{s}}\big)^2\,.
\end{equation}
\end{lemma}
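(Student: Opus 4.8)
The plan is to split the potential as $V=V_r+V_s$ with $V_r\in L^{r}(\mathbb{R}^d)$ and $V_s\in L^{s}(\mathbb{R}^d)$, to estimate each of the two resulting convolutions by a Hölder/Young inequality with a suitably matched triple of conjugate exponents, and finally to optimise over the decomposition so as to recover the norm $\|V\|_{L^{r}+L^{s}}$. First I would record the elementary relations among the exponents: from $\frac1r+\frac1{\widehat r}=\frac12$, $\frac1s+\frac1{\widehat s}=\frac12$ and $2\leqslant r\leqslant s\leqslant\infty$ one reads off $2\leqslant\widehat s\leqslant\widehat r\leqslant\infty$. Since $\widehat s$ then lies between $2$ and $\widehat r$, the log-convexity (interpolation) inequality gives $\|\phi\|_{\widehat s}\leqslant\|\phi\|_2^{1-\theta}\|\phi\|_{\widehat r}^{\theta}$ for the appropriate $\theta\in[0,1]$; using $\|\phi\|_2=1$ this yields $\|\phi\|_{\widehat s}\leqslant\|\phi\|_{\widehat r}^{\theta}<\infty$, which establishes the claim $\phi\in L^{\widehat s}(\mathbb{R}^d)$.

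For the bound \eqref{convoluno} I would estimate pointwise $\big|(V_r*|\phi|^2)(x)\big|\leqslant\int|V_r(x-y)|\,|\phi(y)|\,|\phi(y)|\,\ud y$ and apply the three-factor Hölder inequality with exponents $(r,2,\widehat r)$, whose reciprocals sum to $\frac1r+\frac12+\frac1{\widehat r}=1$ precisely by the definition of $\widehat r$. Translation invariance of Lebesgue measure together with $\|\phi\|_2=1$ then gives $\|V_r*|\phi|^2\|_\infty\leqslant\|V_r\|_{r}\,\|\phi\|_{\widehat r}$, and symmetrically $\|V_s*|\phi|^2\|_\infty\leqslant\|V_s\|_{s}\,\|\phi\|_{\widehat s}$. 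Adding the two, bounding each $\phi$-norm by $\|\phi\|_{\widehat r}+\|\phi\|_{\widehat s}$, factoring, and taking the infimum over all admissible decompositions $V=V_r+V_s$ produces exactly \eqref{convoluno}.

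For \eqref{convoldue} the extra ingredient is the pointwise inequality $V^2=(V_r+V_s)^2\leqslant 2\big(V_r^2+V_s^2\big)$, which, since $|\phi|^2\geqslant 0$, yields $V^2*|\phi|^2\leqslant 2\big(V_r^2*|\phi|^2+V_s^2*|\phi|^2\big)$; this is the source of the prefactor $2$. Because $V_r\in L^{r}$ with $r\geqslant 2$, one has $V_r^2\in L^{r/2}$ with $r/2\geqslant 1$ and $\|V_r^2\|_{r/2}=\|V_r\|_{r}^2$, so Young's inequality $\|f*g\|_\infty\leqslant\|f\|_{p}\|g\|_{p'}$ with $p=r/2$ applies; noting that the conjugate index satisfies $2(r/2)'=\widehat r$, hence $\big\||\phi|^2\big\|_{(r/2)'}=\|\phi\|_{\widehat r}^2$, one gets $\|V_r^2*|\phi|^2\|_\infty\leqslant\|V_r\|_{r}^2\,\|\phi\|_{\widehat r}^2$, and likewise for the $s$-piece. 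Summing, using $a^2c^2+b^2d^2\leqslant(a+b)^2(c+d)^2$ for nonnegative reals, and optimising over the decomposition yields \eqref{convoldue}. The calculation is routine once the exponents are matched; the only points genuinely requiring care are the bookkeeping of the conjugate indices, the degenerate endpoint cases $r=2$ and $s=\infty$ (where $\widehat r=\infty$, $\widehat s=2$, and the Hölder/Young inequalities reduce to their trivial forms), and the use of the normalisation $\|\phi\|_2=1$ to convert the quadratic dependence on $\phi$ on the left-hand side into the linear combination of $\phi$-norms on the right of \eqref{convoluno}.
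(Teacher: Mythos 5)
Your proof is correct, and its skeleton coincides with the paper's: the same splitting $V=V_r+V_s$, the same pointwise bound $V^2\leqslant 2(V_r^2+V_s^2)$ followed by Young's inequality at exponent $r/2$ (resp.\ $s/2$) for \eqref{convoldue}, and the same final infimum over decompositions to produce $\|V\|_{L^r+L^s}$. The one place where you genuinely deviate is \eqref{convoluno}: the paper first applies Young's inequality, obtaining $\|V^{(r)}\|_r\,\|\phi\|^2_{2r/(r-1)}$, and then removes the square by interpolating $L^{2r/(r-1)}$ between $L^2$ and $L^{\widehat{r}}$ (the interpolation exponent there is exactly $1/2$, and $\|\phi\|_2=1$ is what converts the quadratic $\phi$-dependence into a linear one); you instead write $|\phi|^2=|\phi|\cdot|\phi|$ and apply the three-factor H\"{o}lder inequality with exponents $(r,2,\widehat{r}\,)$, which performs Young and the midpoint interpolation in a single stroke. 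The two arguments are equivalent in substance, but yours is marginally more direct, while the paper's version isolates the interpolation step (and the elementary observation that $\frac{2r}{r-1}\in[2,\widehat{r}\,]$) explicitly. A small additional merit of your write-up is that you actually prove the claim $\phi\in L^{\widehat{s}}(\mathbb{R}^d)$ by interpolating between $L^2$ and $L^{\widehat{r}}$, a point the paper's proof leaves implicit.
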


\begin{proof}
By assumption one can split $V=V^{(r)}+V^{(s)}$ with $V^{(r)}\in L^r(\mathbb{R}^d)$ and $V^{(s)}\in L^s(\mathbb{R}^d)$, Then
\[
\begin{split}
\big\|V*|\phi|^2\big\|_\infty \;&\leqslant\; \|V^{(r)}*|\phi|^2\|_\infty+\|V^{(s)}*|\phi|^2\|_\infty\\
&\leqslant\; \|V^{(r)}\|_{r}\|\phi\|^2_{\frac{2r}{r-1}}+\|V^{(s)}\|_{s}\|\phi\|^2_{\frac{2s}{s-1}}\\
&\leqslant\; \big(\,\|V^{(r)}\|_{r}+\|V^{(s)}\|_{s}\,\big)\,\big(\,\|\phi\|^2_{\frac{2r}{r-1}}+\|\phi\|^2_{\frac{2s}{s-1}}\big) \\
& \leqslant\;\big(\,\|V^{(r)}\|_{r}+\|V^{(s)}\|_{s}\,\big)\,\big(\,\|\phi\|_{\widehat{r}}+\|\phi\|_{\widehat{s}}\big)
\end{split}
\]
where the second step follows by Young's inequality and the last step by interpolation on  $\frac{2r}{r-1}\in[2,\widehat{r}]$ and on $\frac{2s}{s-1}\in[2,\widehat{s}]$, using also the fact that $\|\phi\|_2=1$. By taking the infimum over all decompositions of $V$ one obtains
\[
\big\|V*|\phi|^2\big\|_\infty \;\leqslant\; \|V\|_{L^{r}+L^{s}}\,\big(\|\phi\|_{\widehat{r}}+\|\phi\|_{\widehat{s}}\big)
\]
which proves \eqref{convoluno}. Analogously one finds
\[
\begin{split}
\left\|V^2*|\phi|^2\right\|_\infty\;&\leqslant\; 2\,\big\|\big(V^{(r)}\big)^2*|\phi|^2\big\|_\infty + 2\,\big\|\big(V^{(s)}\big)^2*|\phi|^2\big\|_\infty\\
&\leqslant\; 2\,\|V^{(r)}\|^2_{r}\,\|\phi\|^2_{\frac{2r}{r-2}} +  2\,\|V^{(s)}\|^2_{s}\,\|\phi\|^2_{\frac{2s}{s-2}} \\
&\leqslant\; 2\,\big(\|V^{(r)}\|_{r}+\|V^{(s)}\|_{s}\big)^2\left(\|\phi\|_{\widehat{r}}+\|\phi\|_{\widehat{s}}\right)^2,
\end{split}
\]
using again Young's inequality in the second step. By taking the infimum over all decompositions of $V$ one obtains
\begin{equation*}
\big\|V^2*|\phi|^2\big\|_\infty\;\leqslant\; 2\,\|V\|^2_{L^{r}+L^{s}}\big(\|\phi\|_{\widehat{r}}+\|\phi\|_{\widehat{s}}\big)^2\,.
\end{equation*}
which proves \eqref{convoldue}.
\end{proof}

\subsection{Time derivative of $\alpha^{(1,1)}_{N_1,N_2}(t)$ and cancellation of the kinetic terms}

We intend to differentiate in time the quantity $\alpha^{(1,1)}$ written in the form \eqref{eq:quantity_to_monitor_alpha}, that is,
\begin{equation}
\alpha^{(1,1)}\;=\;\langle\Psi,(\mathbbm{1}-p_1^A p_1^B)\Psi\rangle\,.
\end{equation}

When the time derivative hits the $\Psi$'s, this produces a commutator term $[H_{N_1,N_2},p_1^A p_1^B]$ owing to
\eqref{eq:manybody-Schr}, and this term is well defined because assumptions (A2) and (A4) imply $p_1^A p_1^B\Psi\in\mathcal{D}[H_{N_1,N_2}]$. When instead the time derivative hits $p_1^Ap_1^B$, this produces a commutator term $[(h_1^u)^A+(h_1^v)^B,p_1^Ap_1^B]$ owing to \eqref{eq:Hartree_system}, where $h^u$ and $h^v$ are the operators \eqref{eq:onebody_nonlin_Hamilt}. This term too is well defined: indeed, on the one hand Lemma \ref{stime} together with assumptions (A2) and (A4) implies that the multiplicative parts of $h^u$ and $h^v$  (i.e., the functions $V_1*|u|^2$, $V_2*|v|^2$, $V_{12}*|u|^2$, and $V_{12}*|v|^2$) are all bounded, which in turn implies the boundedness of $h^u$ and $h^v$  as operators $h^{u}:\mathcal{D}[h_1]\to\mathcal{D}[h_1]^*$, $h^{v}:\mathcal{D}[h_2]\to\mathcal{D}[h_2]^*$; on the other hand $p_1^A p_1^B\Psi\in\mathcal{D}[H_{N_1,N_2}]$ as already observed, and $\mathcal{D}[H_{N_1,N_2}]\subset\mathcal{D}[H^{(0)}_{N_1,N_2}]$ owing to assumptions (A3), which makes the expectation $\langle\Psi,[(h_1^u)^A+(h_1^v)^B,p_1^Ap_1^B]\Psi\rangle$ well defined. The conclusion is therefore that $\alpha^{(1,1)}$ is differentiable in time and
\begin{equation}\label{eq:first_derivative_of_alpha}
\dot{\alpha}^{(1,1)}\;=\;\ii\,\langle\Psi,[H_{N_1,N_2}-(h_1^u)^A-(h_1^v)^B,\mathbbm{1}-p_1^A p_1^B]\,\Psi\rangle\,.
\end{equation}

In the r.h.s.~of \eqref{eq:first_derivative_of_alpha} the insertion of further terms $(h_j^u)^A$ and $(h_j^v)^B$ with $j\geqslant 2$ does not produce any effect, since their commutator with $\mathbbm{1}-p_1^A\,p_1^B$ vanishes. This gives
\begin{equation}\label{eq:first_derivative_of_alpha-II}
\dot{\alpha}^{(1,1)}\;=\;\ii\,\langle\Psi,[H_{N_1,N_2}-(H^u)^A-(H^v)^B,\mathbbm{1}-p_1^A p_1^B]\,\Psi\rangle\,.
\end{equation}
where
\begin{equation}
H^u\;:=\;\sum_{k=1}^{N_1}h^u_k\,,\qquad H^v:=\sum_{\ell=1}^{N_2}h^v_\ell\,.
\end{equation}
Further, one can re-write the r.h.s.~of \eqref{eq:first_derivative_of_alpha-II} as the expectation of 
an operator that is completely symmetric in each component, namely
\begin{equation}\label{eq:first_derivative_of_alpha-III}
\dot{\alpha}^{(1,1)}\;=\;\ii\,\Big\langle\Psi\,,\Big[H_{N_1,N_2}-(H^u)^A-(H^v)^B\,,\,\sum_{k=1}^{N_1}\sum_{\ell=1}^{N_2}\dfrac{\mathbbm{1}-p_k^A\,p_\ell^B}{N_1N_2}\,\Big]\,\Psi\Big\rangle\,.
\end{equation}
Observe that when passing from \eqref{eq:first_derivative_of_alpha} to \eqref{eq:first_derivative_of_alpha-II} one obtains a complete \emph{cancellation of the kinetic terms} and they will play no role henceforth. Thus,
\begin{equation}
\begin{split}
\dot{\alpha}^{(1,1)}\;&=\;\ii\,\Big\langle\Psi\,,\Big[\,\dfrac{1}{N_1}\sum_{i<j}^{N_1}(V_1(x_i-x_j))^A+\dfrac{1}{N_2}\sum_{r<s}^{N_2}(V_2(y_r-y_s)^B\\
&\qquad +\dfrac{1}{N_1+N_2}\sum_{i=1}^{N_1}\sum_{r=1}^{N_2}V_{12}(x_i-y_r) \\
&\qquad -\sum_{i=1}^{N_1}(V_1^u)_i^A -c_2\sum_{i=1}^{N_1}(V_{12}^v)_i^A \\
& \qquad -\sum_{r=1}^{N_2}(V_2^v)_r^B-c_1\sum_{r=1}^{N_2}(V_{12}^u)_r^B \,,\,\sum_{k=1}^{N_1}\sum_{\ell=1}^{N_2}\dfrac{\mathbbm{1}-p_k^A\,p_\ell^B}{N_1N_2}\,\Big]\,\Psi\Big\rangle\
\end{split}
\end{equation}
where we used the shorthands  \eqref{eq:shorthnd_Vu_etc}.

We separate the contributions  given to $\dot{\alpha}^{(1,1)}$ by each potential $V_1$, $V_2$, $V_{12}$ and write
\begin{equation}\label{eq:alpha-C-splitting}
\dot{\alpha}^{(1,1)}\;=\;\ii\,(C_{V_1}+C_{V_2}+C_{V_{12}})
\end{equation}
with
\begin{equation}\label{eq:C-V1}
C_{V_1}:=\Big\langle\Psi,\Big[\Big(\dfrac{1}{N_1}\sum_{i<j}^{N_1}V_1(x_i-x_j)-\sum_{i=1}^{N_1}(V_1^u)_i\Big)^A ,\sum_{k=1}^{N_1}\sum_{\ell=1}^{N_2}\dfrac{\mathbbm{1}-p_k^A\,p_\ell^B}{N_1N_2}\Big]\Psi\Big\rangle,
\end{equation}
\begin{equation}\label{eq:C-V2}
C_{V_2}:=\Big\langle\Psi,\Big[\Big(\dfrac{1}{N_2}\sum_{r<s}^{N_2}V_2(y_r-y_s)-\sum_{r=1}^{N_2}(V_2^v)_r\Big)^B ,\sum_{k=1}^{N_1}\sum_{\ell=1}^{N_2}\dfrac{\mathbbm{1}-p_k^A\,p_\ell^B}{N_1N_2}\Big]\Psi\Big\rangle,
\end{equation}
\begin{equation}\label{eq:C-V12}
\begin{split}
C_{V_{12}}&=\Big\langle\Psi,\Big[\dfrac{1}{N_1+N_2}\sum_{i=1}^{N_1}\sum_{r=1}^{N_2}V_{12}(x_i-y_r)-c_2\sum_{i=1}^{N_1}(V_{12}^v)_i^A \\
&\qquad\qquad -c_1\sum_{r=1}^{N_2}(V_{12}^u)_r^B ,\sum_{k=1}^{N_1}\sum_{\ell=1}^{N_2}\dfrac{\mathbbm{1}-p_k^A\,p_\ell^B}{N_1N_2}\Big]\Psi\Big\rangle\,.
\end{split}
\end{equation}

In the following Subsections we shall estimate separately these three terms, see Propositions \ref{prop:CV1} and \ref{prop:CV12} below. The final result, obtained by plugging \eqref{eq:CV1-control} and \eqref{eq:CV12-control} into \eqref{eq:alpha-C-splitting}, is
\begin{equation}\label{eq:alpha-C-splitting-final}
\begin{split}
&\dot{\alpha}^{(1,1)}\;\leqslant\;\kappa\,\Big(\alpha^{(1,1)}+\frac{1}{N_1+N_2}\Big)\,\times \\
&\quad \times\Big(\|V_1\|_{L^{r_1}+L^{s_1}}\,\big(\|u\|_{\widehat{r}_1}+\|u\|_{\widehat{s}_1}\big)+\|V_2\|_{L^{r_2}+L^{s_2}}\,\big(\|v\|_{\widehat{r}_2}+\|v\|_{\widehat{s}_2}\big)  \\
&\qquad\qquad +\|V_{12}\|_{L^{r_{12}}+L^{s_{12}}}\big(\|u\|_{\widehat{r}_{12}}+\|u\|_{\widehat{s}_{12}}+\|v\|_{\widehat{r}_{12}}+\|v\|_{\widehat{s}_{12}}\big)\Big)
\end{split}
\end{equation}
for some constant $\kappa$ that depends only on the population fractions $c_1$ and $c_2$. Comparing \eqref{eq:alpha-C-splitting-final} with \eqref{eq:differential_Gronwall}, we obtain \eqref{eq:eq-after-Gronwall} and hence the thesis \eqref{tesi}.

\subsection{Terms containing $V_1$ and $V_2$}\label{subsec:V1_and_V2_terms}

By means of straightforward commutation properties we re-write \eqref{eq:C-V1} as
\begin{equation}\label{eq:CV1-redone}
\begin{split}
C_{V_1}\;&=\;\Big\langle\Psi,\Big[\Big(\dfrac{1}{N_1}\sum_{i<j}^{N_1}V_1(x_i-x_j)-\sum_{i=1}^{N_1}(V_1^u)_i\Big)^A ,\sum_{k=1}^{N_1}\sum_{\ell=1}^{N_2}\dfrac{-p_k^A\,p_\ell^B}{N_1N_2}\Big]\Psi\Big\rangle \\
&=\;\Big\langle\Psi,\Big[\dfrac{1}{N_1}\sum_{i<j}^{N_1}V_1(x_i-x_j)-\sum_{i=1}^{N_1}(V_1^u)_i,\sum_{k=1}^{N_1}\frac{-p_k}{N_1}\Big]^A p_1^B\,\Psi\Big\rangle \\
&=\;\Big\langle\Psi,\Big[\dfrac{1}{N_1}\sum_{i<j}^{N_1}V_1(x_i-x_j)-\sum_{i=1}^{N_1}(V_1^u)_i\,,\,\widehat{m}\,\Big]^A p_1^B\,\Psi\Big\rangle \\
&=\;\frac{1}{2}\big\langle\Psi,\big[(N_1-1)(V_1)_{12}-N_1(V_1^u)_1-N_1(V_1^u)_2\,,\,\widehat{m}\,\big]^A p_1^B\,\Psi\big\rangle
\end{split}
\end{equation}
where $\widehat{m}$ is the auxiliary operator defined in \eqref{eq:def_f-hat} and \eqref{eq:def_mhat}, and where in the third step we applied property  \eqref{eq:property-m-q} for $\widehat{m}$ and in the last step we exploited the symmetry of $\Psi$. Analogously, from  \eqref{eq:C-V2},
\begin{equation}
\begin{split}
C_{V_2}\;&=\;\frac{1}{2}\big\langle\Psi,\big[(N_2-1)(V_2)_{12}-N_2(V_2^v)_1-N_2(V_2^v)_2\,,\,\widehat{m}\,\big]^B p_1^A\,\Psi\big\rangle\,.
\end{split}
\end{equation}

\begin{proposition}\label{prop:CV1} Under the hypotheses of Theorem \ref{main},
\begin{equation}\label{eq:CV1-control}
\begin{split}
|C_{V_1}|\;&\leqslant\;\kappa_1\Big(\alpha^{(1,1)}+\frac{1}{N_1+N_2}\Big)\,\|V_1\|_{L^{r_1}+L^{s_1}}\,\big(\|u\|_{\widehat{r}_1}+\|u\|_{\widehat{s}_1}\big) \\
|C_{V_2}|\;&\leqslant\;\kappa_2\Big(\alpha^{(1,1)}+\frac{1}{N_1+N_2}\Big)\,\|V_2\|_{L^{r_2}+L^{s_2}}\,\big(\|v\|_{\widehat{r}_2}+\|v\|_{\widehat{s}_2}\big) \,.
\end{split}
\end{equation}
For both $j=1,2$ the constant $\kappa_j$ depends only on the population fraction $c_j$.
\end{proposition}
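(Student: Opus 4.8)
The plan is to carry out the single-species Pickl counting estimate on the $A$-sector, the two-component structure entering only through the spectator projection $p_1^B$ and through the comparison $\alpha^{(1,0)}\le\alpha^{(1,1)}$. I treat $C_{V_1}$; the bound on $C_{V_2}$ follows verbatim under the exchange of roles $A\leftrightarrow B$, $u\leftrightarrow v$, $c_1\leftrightarrow c_2$. Starting from the reduced form in the last line of \eqref{eq:CV1-redone}, namely $C_{V_1}=\frac{1}{2}\langle\Psi,[\,(N_1-1)(V_1)_{12}-N_1(V_1^u)_1-N_1(V_1^u)_2\,,\,\widehat m\,]^A p_1^B\Psi\rangle$, the first step is to insert $\mathbbm{1}=(p^A+q^A)$ on particles $1$ and $2$ on both sides of the operator inside the commutator, so that the two-body potential is decomposed into its sixteen $P_1 P_2(\cdot)P_1' P_2'$ sectors with $P\in\{p^A,q^A\}$.

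Since $\widehat m$ is a function $m(\widehat n)$ of the counting operator $\widehat n=\sum_k q_k^A$ (see Appendix \ref{sec:particle-counting_method}), every sector that preserves the number of bad particles among $\{1,2\}$ commutes with $\widehat m$ and drops out of the commutator; in particular the fully projected $p_1 p_2(\cdot)p_1 p_2$ term contributes nothing. The surviving sectors are those changing the $q$-count, on which $\widehat m$ produces a discrete weight difference $m(\widehat n)-m(\widehat n\pm d)$. The role of the mean-field subtraction is now visible: by the averaging identity \eqref{eq:trick_f-phi}, $p_2^A(V_1)_{12}p_2^A=p_2^A(V_1^u)_1$, so in each single-flip sector (one particle passing from $p$ to $q$ with the other projected onto $p$ on both sides) the $O(N_1)$ contribution of $(N_1-1)(V_1)_{12}$ is cancelled against $N_1(V_1^u)_1$, leaving an $O(1)$ coefficient. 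The double-flip sectors $q_1 q_2(\cdot)p_1 p_2$ and $p_1 p_2(\cdot)q_1 q_2$ receive nothing from the one-body mean-field terms and retain the full $(N_1-1)$ coefficient, but they shift $\widehat n$ by $2$ and so carry a weight difference of order $O(1/N_1)$.

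The third step is to estimate each surviving term by Cauchy--Schwarz. For the double-flip sector I would peel off the two $q^A$-factors and use the weight difference to absorb the $N_1$, bounding the potential via \eqref{convoldue}: with particle $2$ projected, $p_2^A(V_1)_{12}^2 p_2^A=(V_1^2*|u|^2)(x_1)\,p_2^A\le\|V_1^2*|u|^2\|_\infty\,p_2^A$, which by Lemma \ref{stime} is controlled by $\|V_1\|^2_{L^{r_1}+L^{s_1}}(\|u\|_{\widehat r_1}+\|u\|_{\widehat s_1})^2$. For the single-flip sectors the $O(1)$ coefficient times one $q^A$-factor times $\|V_1^u\|_\infty\le\|V_1\|_{L^{r_1}+L^{s_1}}(\|u\|_{\widehat r_1}+\|u\|_{\widehat s_1})$ from \eqref{convoluno} suffices. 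Throughout, the spectator $p_1^B\le\mathbbm{1}$ is simply carried inside the inner products and never interferes with the $A$-sector algebra, while the counting quantity $\langle\Psi,q_1^A\Psi\rangle=\alpha^{(1,0)}\le\alpha^{(1,1)}$ (by \eqref{control} of Lemma \ref{lemma:controllo}) converts the $q$-factors into $\alpha^{(1,1)}$. Collecting terms and using $N_1/(N_1+N_2)\to c_1$ yields \eqref{eq:CV1-control} with a constant $\kappa_1$ depending only on $c_1$.

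The main obstacle is the bookkeeping of the weight-shift differences: one must verify that after the $p+q$ insertion and the use of the symmetry of $\Psi$, each surviving term carries simultaneously enough $q^A$-factors to produce $\alpha^{(1,1)}$ and a weight difference of precisely the order needed to tame the explicit $N_1$ prefactor, leaving no term at order $O(N_1)$, the residual $(N_1+N_2)^{-1}$ appearing through the standard Young-inequality splittings applied to the single-flip contributions. This is the delicate core of Pickl's method; by contrast the genuinely two-component difficulties are absent here, since the inter-species potential does not enter $C_{V_1}$ and the only cross-species input is the trivial domination $\alpha^{(1,0)}\le\alpha^{(1,1)}$. This is why, up to the $c_1$-dependent constant, the estimate reproduces the one-component bound.
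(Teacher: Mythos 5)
Your overall architecture — the sixteen-term expansion after inserting $(p_1^A+q_1^A)(p_2^A+q_2^A)$, the vanishing of the sectors with equal numbers of $q$'s via the shift property of $\widehat{m}$, and the mean-field cancellation in the $(pp,qp)$ sector — coincides with the paper's proof. But there is a genuine gap in your treatment of the double-flip sector $(pp,qq)$. After the weight difference $2/N_1$ has absorbed the $(N_1-1)$ prefactor, that sector reduces (up to an $O(1)$ constant) to $|\langle\Psi,p_1^Ap_2^A(V_1)_{12}^A\,q_1^Aq_2^Ap_1^B\Psi\rangle|$, and your plan — Cauchy--Schwarz with the potential attached to the $p$-side, bounding $p_2^A(V_1)_{12}^2p_2^A$ by $\|V_1^2*|u|^2\|_\infty$ via \eqref{convoldue}, then ``peeling off the two $q^A$-factors'' — leaves \emph{both} $q$'s in the same Cauchy--Schwarz factor. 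That factor is $\|q_1^Aq_2^Ap_1^B\Psi\|$, which cannot be bounded by anything better than $\|q_1^A\Psi\|=\sqrt{\alpha^{(1,0)}}\leqslant\sqrt{\alpha^{(1,1)}}$; even \eqref{eq:commutazione-m2} only gives $\langle\Psi,q_1q_2\Psi\rangle\lesssim\langle\Psi,\widehat{m}^2\Psi\rangle\leqslant\langle\Psi,\widehat{m}\Psi\rangle$, since $\langle\widehat{m}^2\rangle$ is \emph{not} $O(\alpha^2)$ in general. So this sector comes out as $O\big(\sqrt{\alpha^{(1,1)}}\big)$ rather than $O\big(\alpha^{(1,1)}+(N_1+N_2)^{-1}\big)$, and this is fatal: a differential inequality of the form $\dot{\alpha}\lesssim B(t)\sqrt{\alpha}$ integrates to $\alpha(t)\lesssim\big(\sqrt{\alpha(0)}+\tfrac12\int_0^tB\big)^2$, which does not vanish as $N_1,N_2\to\infty$ at fixed $t$, so neither \eqref{eq:CV1-control} nor Theorem \ref{main} would follow.

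This is precisely the point where the paper deploys the heavier machinery of Appendix \ref{sec:particle-counting_method}, and it cannot be bypassed. One inserts $\widehat{n}^A(\widehat{n}^{-1})^A$ in front of $q_1^Aq_2^A$ (legitimate because $\widehat{n}^{-1}$ is well defined on $\ran\, q_1^A$), moves $\widehat{n}^A$ leftwards through the potential at the price of a shift, $p_1p_2(V_1)_{12}\,\widehat{n}\,q_1q_2=p_1p_2\,\widehat{\tau_2 n}\,(V_1)_{12}\,q_1q_2$ (Lemma \ref{lemma:exchange-hat} in the form \eqref{eq:hat-business-for-2}), and only \emph{then} applies Cauchy--Schwarz. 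This redistributes the two $q$'s evenly between the two factors: the left one gives $\|\widehat{\tau_2 n}^A\Psi\|=\sqrt{\alpha^{(1,0)}+2/N_1}$ (since $\widehat{\tau_2n}^{\,2}=\widehat{m}+2/N_1$), while the right one gives $\|(\widehat{n}^{-1})^Aq_1^Aq_2^Ap_1^B\Psi\|\lesssim\|q_2^A\Psi\|\leqslant\sqrt{\alpha^{(1,1)}}$ — a step which moreover requires the partial-symmetry Lemma \ref{commutazione-partial-symmetry} in the form \eqref{eq:commutation-modified-application1}, because $q_2^Ap_1^B\Psi$ is no longer symmetric in all the A-variables. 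The product $\sqrt{\alpha^{(1,1)}+2/N_1}\,\sqrt{\alpha^{(1,1)}}\leqslant\alpha^{(1,1)}+1/N_1$ then yields the claimed bound with a $c_1$-dependent constant. A secondary, fixable inaccuracy: in the $(qp,qq)$ sector there is no cancellation against the mean-field term (identity \eqref{eq:trick_f-phi} needs $p_2$ on \emph{both} sides); what tames the $N_1$ there is again the $1/N_1$ weight shift, and the $(V_1)_{12}$ piece of that sector also requires \eqref{convoldue}, not \eqref{convoluno}.
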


\begin{proof} We shall focus on $C_{V_1}$, the proof for $C_{V_2}$ is completely analogous. In fact, since the commutator in the r.h.s.~of \eqref{eq:CV1-redone} is non-trivial on the first component only, the treatment of $C_{V_1}$ is analogous to the single-component case. By inserting on both sides of the commutator in \eqref{eq:CV1-redone} the identity
\begin{equation}\label{eq:add_idendity-A}
\mathbbm{1}^A\;=\;(p_1^A+q_1^A)(p_1^A+q_2^A)
\end{equation}
one obtains 16 terms; however, owing to Lemma \ref{lemma:exchange-hat}, only those terms with different numbers of $q$'s on the left and on the right are non-zero (see the remark after \eqref{eq:hat-business-for-2}). We cast them in the following self-explanatory notation
\begin{equation}\label{eq:CV1-symbolic}
C_{V_1}\;=\;2\,(pp,qp)+2\,(qp,qq)+(pp,qq)+\text{complex conjugate}
\end{equation}
We shall estimate each summand above in terms of $\alpha^{(1,1)}$ and $(N_1+N_2)^{-1}$.

The first term is
\[
\begin{split}
(pp,qp)\;&=\;\frac{\ii}{2}\big\langle\Psi,p_1^Ap_2^A\left[(N_1-1)(V_1)_{12}-N_1\,(V_1^u)_1,\widehat{m}\,\right]^Aq_1^Ap_2^Ap_1^B\Psi\big\rangle\\
&=\;\frac{\ii}{2}\big\langle\Psi,p_1^Ap_2^A\left[(N_1-1)(V_{1}^u)_1-N_1\,(V_1^u)_1,\widehat{m}\,\right]^Aq_1^Ap_2^Ap_1^B\Psi\big\rangle\\
&=\;-\frac{\ii}{2}\big\langle\Psi,p_1^Ap_2^A\left[\,(V_1^u)_1,\widehat{m}\,\right]^Aq_1^Ap_2^Ap_1^B\Psi\big\rangle\\
&=-\frac{\ii}{2N_1}\big\langle\Psi,p_1^Ap_2^A (V_1^u)_1^A q_1^Ap_2^Ap_1^B\Psi\big\rangle\,.
\end{split}
\]
where we used $p_1^Aq_1^A=0$ in the first and last identities and property \eqref{eq:trick_f-phi} in the second identity. Therefore, by Lemma \ref{stime},
\begin{equation}\label{eq:pp-qp-C1}
\begin{split}
|(pp,qp)|\;&\leqslant\;\frac{1}{2N_1}\|V_{1}*|u|^2\|_\infty \;\leqslant\; \frac{1}{2N_1}\|V_1\|_{L^{r_1}+L^{s_1}}\,\big(\|u\|_{\widehat{r}_1}+\|u\|_{\widehat{s}_1}\big) \\
&\lesssim\; \frac{1}{c_1}\,\frac{1}{N_1+N_2} \,\|V_1\|_{L^{r_1}+L^{s_1}}\,\big(\|u\|_{\widehat{r}_1}+\|u\|_{\widehat{s}_1}\big) \,.
\end{split}
\end{equation}

Following analogous steps, the second summand in \eqref{eq:CV1-symbolic} becomes
\[
\begin{split}
(qp,qq)\;&=\;\frac{\ii}{2}\big\langle\Psi,q_1^Ap_2^A[(N_1-1)(V_{1})_{12}-N_1(V_1^u)_{2},\widehat{m}\,]^Aq_1^Aq_2^Ap_1^B\Psi\big\rangle\\
&=\;\frac{\ii}{2}\Big\langle\Psi,q_1^Ap_2^A\Big(\frac{N_1-1}{N_1}(V_1)_{12}-(V_1^u)_2\Big)^Aq_1^Aq_2^Ap_1^B\Psi\Big\rangle\,.
\end{split}
\]
Splitting the difference we obtain two terms: the second is controlled by a Cauchy-Schwarz inequality and by estimate \eqref{convoluno} of Lemma \ref{stime}  as
\[
\begin{split}
\dfrac{1}{2}|\langle\Psi,q_1^Ap_2^A(V_1^u)_2^A q_1^Aq_2^Ap_1^B\Psi\rangle|\;&\leqslant\;\frac{1}{2}\,\|V_1*|u|^2\|_\infty\,\|q_1^A\psi\|^2\\
&\leqslant\; \frac{1}{2}\,\|V_1\|_{L^{r_1}+L^{s_1}}\,\big(\|u\|_{\widehat{r}_1}+\|u\|_{\widehat{s}_1}\big)\,\alf{1}{1},
\end{split}
\]
having bounded $\|q_1^A\Psi\|^2=\alpha^{(1,0)}$ with $\alpha^{(1,1)}$ (Lemma \ref{lemma:controllo}). The first term is again controlled by Cauchy-Schwarz as 
\[
\begin{split}
\dfrac{1}{2}|\langle\Psi,q_1^Ap_2^A &(V_1)_{12}^A\, q_1^Aq_2^Ap_1^B\Psi\rangle| \\
&\leqslant\; \frac{1}{2}\sqrt{\langle\Psi,q_1^Ap_2^A((V_1)_{12}^2)^A\,p_2^Aq_1^A\Psi\rangle}\,\sqrt{\langle\Psi,q_1^Aq_2^Ap_1^B\Psi\rangle}\\
&=\;\frac{1}{2}\sqrt{\langle\Psi,q_1^Ap_2^A (V_1^2*|u|^2)_1^A p_2^Aq_1^A\Psi\rangle}\,\sqrt{\langle\Psi,q_1^Aq_2^Ap_1^B\Psi\rangle}\\
&\leqslant\; \frac{1}{2}\sqrt{\|V_1^2*|u|^2\|_\infty\,}\;\|q_1^A\psi\|\,\|q_2^A\psi\|\\
&=\;\frac{1}{2}\,\sqrt{\left\|V_1^2*|u|^2\right\|_\infty}\,\alpha^{(1,1)}
\end{split}
\]
having used Lemma \ref{lemma:controllo} in the last step. Then, owing to  estimate \eqref{convoldue} of Lemma \ref{stime}, 
\[
 \dfrac{1}{2}|\langle\Psi,q_1^Ap_2^A (V_1)_{12}^A\, q_1^Aq_2^Ap_1^B\Psi\rangle|\;\leqslant\;\frac{1}{\sqrt{2}}\,\|V_1\|_{L^{r_1}+L^{s_1}}\,\big(\|u\|_{\widehat{r}_1}+\|u\|_{\widehat{s}_1}\big)\,\alf{1}{1}
\]
and the conclusion is
\begin{equation}\label{eq:qp-qq-C1}
|(qp,qq)|\;\lesssim\; \|V_1\|_{L^{r_1}+L^{s_1}}\,\big(\|u\|_{\widehat{r}_1}+\|u\|_{\widehat{s}_1}\big)\,\alf{1}{1}\,.
\end{equation}

The third summand in \eqref{eq:CV1-symbolic} reads
\[
\begin{split}
(pp,qq)\;&=\;\frac{\ii}{2}\big\langle\Psi,p_1^Ap_2^A[(N_1-1)(V_1)_{12},\widehat{m}\,]^Aq_1^Aq_2^Ap_1^B\Psi\big\rangle\\
&=\;\ii\,\frac{N_1-1}{N_1}\big\langle\Psi,p_1^Ap_2^A(V_1)_{12}^A\,\widehat{n}^A(\widehat{ n}^{-1})^Aq_1^Aq_2^Ap_1^B\Psi\big\rangle\\
&=\;\ii\,\frac{N_1-1}{N_1}\big\langle\Psi,p_1^Ap_2^A\,\widehat{\tau_2 n}^A(V_1)_{12}^A(\widehat{ n}^{-1})^Aq_1^Aq_2^Ap_1^B\Psi\big\rangle\,,
\end{split}
\]
where in the second step we applied Lemma \eqref{lemma:exchange-hat} and 
we introduced the auxiliary operator $\widehat{n}$ defined in  \eqref{eq:def_f-hat} and \eqref{eq:def_m_n}, using the fact that $(\widehat{n}^{-1})^A$ is well defined on the range of $q_1^A$ since $(\widehat{n}^{-1})^Aq_1^A\Psi=\sum_{k=1}^N(N/k)^{1/2}P_kq_1\Psi$, while the last identity follows from Lemma \ref{lemma:exchange-hat}  in the form \eqref{eq:hat-business-for-2}.
Then
\[
\begin{split}
|(pp&,qq)|\;\leqslant\;\sqrt{\big\langle\Psi,p_1^Ap_2^A\widehat{\tau_2 n}^A((V_1)_{12}^2)^A\,\widehat{\tau_2 n}^Ap_1^Ap_2^A\Psi\big\rangle}\sqrt{\big\langle\Psi,(\widehat{n}^{-2})^Aq_1^Aq_2^Ap_1^B\Psi\big\rangle}\\
&\leqslant\;\sqrt{\big\langle\Psi,p_1^Ap_2^A\,\widehat{\tau_2 n}^A (V_1^2*|u|^2)_1^A\,\widehat{\tau_2 n}^Ap_1^Ap_2^A\Psi\big\rangle}\,\sqrt{\frac{N_1}{N_1-1}}\,\|q_2^A\,\Psi\|\\
&\leqslant\;2\,\sqrt{\|V_1^2*|u|^2\|_\infty\,}\,\big\|\widehat{\tau_2 n}^A\psi\big\|\,\sqrt{\alpha^{(1,1)}\,}\\
&=\;2\,\sqrt{\|V_1^2*|u|^2\|_\infty\,}\,\sqrt{\alpha^{(1,1)}+\frac{2}{N_1}\,}\,\sqrt{\alf{1}{1}}\\
&\leqslant\;\frac{4}{c_1}\,\sqrt{\|V_1^2*|u|^2\|_\infty\,}\,\Big(\alpha^{(1,1)}+\frac{1}{N_1+N_2}\Big)
\end{split}
\]
where we used the Cauchy-Schwarz inequality in the first step, 
\eqref{eq:trick_f-phi} and \eqref{eq:commutation-modified-application1} of Lemma \ref{commutazione-partial-symmetry}  in the second,
the control $\alpha^{(1,0)}\leqslant\alpha^{(1,1)}$ (Lemma \ref{lemma:controllo}) in the third,
\eqref{eq:def_f-hat}, \eqref{eq:def_m_n}, and \eqref{eq:shifted_function} in the fourth, 
and 
\[
\begin{split}
\sqrt{\alpha^{(1,1)} +\frac{2}{N_1}}\,\sqrt{\alpha^{(1,1)}}\;&\leqslant\;\alpha^{(1,1)}+\frac{1}{N_1}\;\leqslant\;\frac{N_1+N_2}{N_1}\,\Big(\alpha^{(1,1)}+\frac{1}{N_1+N_2}\Big) \\
&\leqslant\; \frac{2}{c_1}\,\Big(\alpha^{(1,1)}+\frac{1}{N_1+N_2}\Big)
\end{split}
\]
in the last, for $N_1,N_2$ sufficiently large.
Then, owing to estimate \eqref{convoldue} of Lemma \ref{stime} we conclude
\begin{equation}\label{eq:pp-qq-C1}
\begin{split}
|(pp&,qq)|\;\leqslant\;\frac{4\sqrt{2}}{c_1}\,\|V_1\|_{L^{r_1}+L^{s_1}}\,\big(\|u\|_{\widehat{r}_1}+\|u\|_{\widehat{s}_1}\big) \left(\alf{1}{1}+\frac{1}{N_1+N_2}\right).
\end{split}
\end{equation}

Plugging \eqref{eq:pp-qp-C1}, \eqref{eq:qp-qq-C1}, and \eqref{eq:pp-qq-C1} into \eqref{eq:CV1-symbolic} yields finally \eqref{eq:CV1-control}.
\end{proof}

\subsection{Term containing $V_{12}$}

We first exploit in  \eqref{eq:C-V12}  the asymptotics  $N_j(N_1+N_2)^{-1}\sim c_j$, $j=1,2$ and the symmetry of $\Psi$:
\begin{equation}\label{eq:CV12-simplified}
\begin{split}
|C_{V_{12}}|\;&\leqslant\;\frac{N_1N_2}{N_1+N_2}\,\times \\
&\quad\times\Big|\Big\langle\Psi,\Big[(V_{12})_{11}-(V_{12}^v)_1^A-(V_{12}^u)_1^B ,\sum_{k=1}^{N_1}\sum_{\ell=1}^{N_2}\dfrac{p_k^A\,p_\ell^B}{N_1N_2}\Big]\Psi\Big\rangle\Big|\,. %\\
%\kappa_{c_1c_2}\;&\lesssim\;\frac{N_1N_2}{N_1+N_2}\,.
\end{split}
\end{equation}
For the estimate of $C_{V_{12}}$ we shall establish the following:

\begin{proposition}\label{prop:CV12} Under the hypotheses of Theorem \ref{main},
\begin{equation}\label{eq:CV12-control}
\begin{split}
|C_{V_{12}}|\;&\leqslant\;\kappa_{12}\,\|V_{12}\|_{L^{r_{12}}+L^{s_{12}}}\big(\|u\|_{\widehat{r}_{12}}+\|u\|_{\widehat{s}_{12}}+\|v\|_{\widehat{r}_{12}}+\|v\|_{\widehat{s}_{12}}\big)\,\times \\
&\qquad\qquad\times\Big(\alpha^{(1,1)}+\frac{1}{N_1+N_2}\Big)
\end{split}
\end{equation}
where the constant $\kappa_{12}$ depends only on the population fractions $c_1$ and $c_2$.
\end{proposition}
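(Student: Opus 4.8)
The plan is to mirror the structure of the proof of Proposition \ref{prop:CV1}, but now exploiting the \emph{two} mean-field cancellations special to the inter-species coupling. Writing $\bar{p}^A=\frac{1}{N_1}\sum_{k=1}^{N_1}p_k^A$ and $\bar{p}^B=\frac{1}{N_2}\sum_{\ell=1}^{N_2}p_\ell^B$, I start from \eqref{eq:CV12-simplified} and observe that, since $(V_{12})_{11}$ acts non-trivially only on the first particle of each species, the commutator splits as
\[
\tfrac{N_1N_2}{N_1+N_2}\big[\,(V_{12})_{11}-(V_{12}^v)_1^A-(V_{12}^u)_1^B\,,\,\bar{p}^A\bar{p}^B\,\big]=\tfrac{N_2}{N_1+N_2}[G_A,p_1^A]\,\bar{p}^B+\tfrac{N_1}{N_1+N_2}\bar{p}^A\,[G_B,p_1^B],
\]
where $G_A:=(V_{12})_{11}-(V_{12}^v)_1^A$ and $G_B:=(V_{12})_{11}-(V_{12}^u)_1^B$. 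The very purpose of the two subtracted effective potentials is the cancellation $p_1^B G_A\,p_1^B=0$ and $p_1^A G_B\,p_1^A=0$, both immediate from \eqref{eq:trick_f-phi-distinct-components}. Since $N_j/(N_1+N_2)\to c_j$ by \eqref{eq:N1N2ratio}, the two prefactors are bounded by $1$ and converge to $c_2,c_1$; the two summands being symmetric, I focus on the first.

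Next I would expand $[G_A,p_1^A]=q_1^A G_A\,p_1^A-p_1^A G_A\,q_1^A$ (the diagonal $p_1^A G_A\,p_1^A$ pieces cancel), and in each summand insert $\mathbbm{1}=p_1^B+q_1^B$ on both sides of $G_A$. By $p_1^B G_A\,p_1^B=0$, every surviving contribution then carries at least one $q_1^A$ and at least one $q_1^B$. For the \emph{balanced} contributions, in which one $q_1^A$ and one $q_1^B$ sit on opposite sides of $G_A$, I would estimate by Cauchy--Schwarz, moving the bounded multiplication operator $(V_{12}^v)_1^A$ freely across the $B$-projections and controlling the genuine two-body factor $(V_{12})_{11}$ through the convolution identities $p_1^A(V_{12}^2)_{11}p_1^A=p_1^A(V_{12}^2*|u|^2)_1^B$ and $p_1^B(V_{12}^2)_{11}p_1^B=p_1^B(V_{12}^2*|v|^2)_1^A$ together with estimate \eqref{convoldue} of Lemma \ref{stime}. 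Bounding $\|q_1^A\Psi\|=\sqrt{\alpha^{(1,0)}}$ and $\|q_1^B\Psi\|=\sqrt{\alpha^{(0,1)}}$ by $\sqrt{\alpha^{(1,1)}}$ via Lemma \ref{lemma:controllo}, each such term is $O\big(\|V_{12}\|_{L^{r_{12}}+L^{s_{12}}}(\|u\|_{\widehat{r}_{12}}+\|u\|_{\widehat{s}_{12}}+\|v\|_{\widehat{r}_{12}}+\|v\|_{\widehat{s}_{12}})\,\alpha^{(1,1)}\big)$, as required by \eqref{eq:CV12-control}.

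The main obstacle is the \emph{unbalanced} contribution $\langle\Psi,q_1^A q_1^B(V_{12})_{11}p_1^A p_1^B\bar{p}^B\Psi\rangle$ (and its analogue from the second half of the commutator), in which both fresh $q$-projections land on the \emph{same} side of the potential. Here a plain Cauchy--Schwarz only yields $\|q_1^A q_1^B\Psi\|\sim\sqrt{\alpha^{(1,1)}}$ against an $O(1)$ factor, which is not summable in the Gr\"onwall scheme, precisely because neither the $u$- nor the $v$-convolution can be activated (the projections flanking $(V_{12})_{11}$ are mismatched). This is exactly the two-particle ``$(pp,qq)$''-type term that, in the one-species proof of Proposition \ref{prop:CV1}, is tamed by the weighted counting operators: the remedy is to insert $\widehat{n}\,\widehat{n}^{-1}=\mathbbm{1}$ on the range of $q_1^A q_1^B$ and to transport one square-root weight across $(V_{12})_{11}$ by means of the exchange identities of Lemma \ref{lemma:exchange-hat} (in shifted form) and the commutation estimates of Lemma \ref{commutazione-partial-symmetry}. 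The non-trivial adaptation is that $(V_{12})_{11}$ now links one particle of \emph{each} species, so the weight must be shifted in the counting of both components; the redistribution splits the excess count symmetrically, converting $\sqrt{\alpha^{(1,1)}}$ on the $qq$-side into $\|q_1^{\bullet}\Psi\|\lesssim\sqrt{\alpha^{(1,1)}}$ and producing a factor $\sqrt{\alpha^{(1,1)}+O((N_1+N_2)^{-1})}$ on the $pp$-side, whose product is $\lesssim\alpha^{(1,1)}+(N_1+N_2)^{-1}$. The additive $(N_1+N_2)^{-1}$ in \eqref{eq:CV12-control} is thus generated by this weight-shift, exactly as the term $2/N_1$ arises in $\|\widehat{\tau_2 n}^A p_1^A p_2^A\Psi\|^2=\alpha^{(1,1)}+2/N_1$ within the proof of Proposition \ref{prop:CV1}. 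Collecting the balanced terms, the unbalanced terms, and repeating the symmetric argument for the $B$-commutator then yields \eqref{eq:CV12-control}.
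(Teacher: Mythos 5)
Your proposal follows, in substance, the paper's own proof. The mean-field cancellations $p_1^B G_A\, p_1^B=\mathbb{O}$ and $p_1^A G_B\, p_1^A=\mathbb{O}$ are exactly the role played by \eqref{eq:trick_f-phi-distinct-components}; your classification into balanced and unbalanced terms reproduces the paper's grouping into $\Lambda$ (equal number of $q$'s on the two sides) and $\Omega$ (different number), with the balanced terms estimated by Cauchy--Schwarz, Lemma \ref{stime} and Lemma \ref{lemma:controllo}, and the unbalanced $(pp,qq)$-type term attacked with the counting-operator machinery of Lemmas \ref{lemma:exchange-hat} and \ref{commutazione-partial-symmetry}, exactly as in Subsection \ref{subsec:different_n_q}. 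Your commutator splitting $[G_A,p_1^A]\,\bar{p}^B+\bar{p}^A\,[G_B,p_1^B]$ is only a cosmetic variant of the paper's insertion of $(p_1^A+q_1^A)(p_1^B+q_1^B)$ combined with \eqref{eq:sum_expanded}; it generates the same families of terms with the same prefactors.

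The one point you get wrong is the description of the key step, and taken literally it would break the proof. You assert that, because $(V_{12})_{11}$ links one particle of each species, ``the weight must be shifted in the counting of both components'', the excess count being ``split symmetrically''. This is neither necessary nor what the paper does: for each unbalanced term one inserts the weight of \emph{one} species only. In the term the paper calls $(pp,qq)_1$ the insertion is $\widehat{n}^A(\widehat{n}^{-1})^A$; Lemma \ref{lemma:exchange-hat} shifts $\widehat{n}^A$ across $(V_{12})_{11}$ by $\tau_1$ \emph{in the $A$-counting alone} (the $B$-projections flanking $(V_{12})_{11}$ are irrelevant for the $A$-shift), then $(\widehat{n}^{-1})^A$ neutralises $q_1^A$ via \eqref{eq:commutation-modified-application}, while the \emph{untouched} $q_1^B$ is precisely what supplies the factor $\|q_1^B\Psi\|\leqslant\sqrt{\alpha^{(1,1)}}$, and the shifted weight on the $pp$ side supplies $\sqrt{\alpha^{(1,1)}+1/N_1}$. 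If instead you neutralise both $q_1^A$ and $q_1^B$ by inserting $(\widehat{n}^{-1})^A(\widehat{n}^{-1})^B$, the $qq$ side becomes $O(1)$ --- no $q$ is left to produce any smallness --- while the $pp$ side carries only $\big(\langle\Psi,(\widehat{m}^A+N_1^{-1})(\widehat{m}^B+N_2^{-1})\Psi\rangle\big)^{1/2}\lesssim\big(\alpha^{(1,1)}+O((N_1+N_2)^{-1})\big)^{1/2}$, so the product is $O\big((\alpha^{(1,1)}+(N_1+N_2)^{-1})^{1/2}\big)$, which is useless in the Gr\"{o}nwall scheme. Your stated outcome (one factor $\sqrt{\alpha^{(1,1)}}$, one factor $\sqrt{\alpha^{(1,1)}+O((N_1+N_2)^{-1})}$) is the correct one, but it is produced by the single-species shift; the other species' counting operator enters only in the companion term $(pp,qq)_2$, where the roles of $A$ and $B$ are exchanged. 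With this correction, your argument closes as in the paper.
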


\begin{proof}
We insert on both sides of the commutator in \eqref{eq:CV12-simplified} the identity
\begin{equation}\label{eq:add_idendity-AB}
\mathbbm{1}\;=\;(p_1^A+q_1^A)(p_1^B+q_1^B)
\end{equation}
(observe that, as opposite to \eqref{eq:add_idendity-A}, in \eqref{eq:add_idendity-AB} the insertion involves \emph{both} components), which produces 16 terms to estimate. Unlike our previous bookkeeping \eqref{eq:CV1-symbolic}, it is not possible to apply Lemma \ref{lemma:exchange-hat} in order to identify a priori those that vanish, because here the  $p$'s and $q$'s inserted on the left and on the right are relative to \emph{distinct} components. We rather group these terms depending on whether the number of the $q$'s is the same or not on both sides, that is, 
\begin{equation}\label{eq:Lambda}
\begin{split}
\Lambda\;&:=\;(pp,pp)+[(pq,pq)+(qp,qp)]+(qq,qq) \\
&\qquad +\left[(pq,qp)+\text{complex conjugate}\,\right]
\end{split}
\end{equation}
and
\begin{equation}\label{eq:Omega}
\begin{split}
\Omega\;&:=\;(pp,qp)+(qp,qq)+(pp,qq)+(pp,pq)+(pq,qq) \\
&\qquad +\text{ complex conjugate }
\end{split}
\end{equation}
where a self-explanatory notation analogous to \eqref{eq:CV1-symbolic} is used.
In Subsections \ref{subsec:same_n_q} and \ref{subsec:different_n_q} below we shall find
\begin{equation}
|\Lambda|\;\lesssim\;\|V_{12}\|_{L^{r_{12}}+L^{s_{12}}}\big(\|u\|_{\widehat{r}_{12}}+\|u\|_{\widehat{s}_{12}}+\|v\|_{\widehat{r}_{12}}+\|v\|_{\widehat{s}_{12}}\big)\:\alpha^{(1,1)}
\end{equation}
(see \eqref{eq:Lambda-final} below) and
\begin{equation}
\begin{split}
|\Omega|\;&\leqslant\;\widetilde{\kappa}_{12}\,\|V_{12}\|_{L^{r_{12}}+L^{s_{12}}}\big(\|u\|_{\widehat{r}_{12}}+\|u\|_{\widehat{s}_{12}}+\|v\|_{\widehat{r}_{12}}+\|v\|_{\widehat{s}_{12}}\big)\,\times \\
&\qquad\qquad\times\Big(\alpha^{(1,1)}+\frac{1}{N_1+N_2}\Big)\,.
\end{split}
\end{equation}
(see \eqref{eq:Omega-final} below), for some constant $\widetilde{\kappa}_{12}$ that depends on $c_1$ and $c_2$ only, which completes the proof.
\end{proof}

\subsubsection{Terms with the same number of $q$'s on the left and on the right}\label{subsec:same_n_q}

In order to apply a number of straightforward symmetry and permutation arguments it will be convenient to re-write systematically
\begin{equation}\label{eq:sum_expanded}
\Big[A_{11}\,,\,\sum_{k=1}^{N_1}\sum_{\ell=1}^{N_2}p_k^A\,p_\ell^B\Big]\;=\;\Big[A_{11}\,,\,\sum_{k=1}^{N_1}p_k^A\,p_1^B+\sum_{\ell=1}^{N_2}p_1^A\,p_\ell^B-p_1^A p_1^B\Big]
\end{equation}
whenever we deal with an observable $A_{11}$ acting on the first variable of each component.

The summand $(pp,pp)$ in \eqref{eq:Lambda} vanishes because
\[
\begin{split}
&%\frac{|(pp,pp)|}{\kappa_{c_1c_2}/(N_1N_2)}\;\leqslant\;
\Big\langle\Psi,p_1^A p_1^B\Big[(V_{12})_{11}-(V_{12}^v)_1^A-(V_{12}^u)_1^B ,\sum_{k=1}^{N_1}\sum_{\ell=1}^{N_2}p_k^A\,p_\ell^B\Big]p_1^A p_1^B\Psi\Big\rangle \\
&=\Big\langle\Psi,p_1^A p_1^B\Big[(V_{12})_{11}\!-\!(V_{12}^v)_1^A\!-\!(V_{12}^u)_1^B\,, \sum_{k=1}^{N_1}p_k^A\,p_1^B+\sum_{\ell=1}^{N_2}p_1^A\,p_\ell^B-p_1^A p_1^B\Big]p_1^Ap_1^B\Psi\Big\rangle \\
&=\;0
\end{split}
\]
where in the first identity we used \eqref{eq:sum_expanded} and in the second one we used the fact that the $p_1$-operators inside the commutator can be re-absorbed in the corresponding $p_1$'s outside, thus yielding a vanishing commutator.

The summand $(pq,pq)$ in \eqref{eq:Lambda} vanishes because
\[
\begin{split}
&%\frac{|(pp,pp)|}{\kappa_{c_1c_2}/(N_1N_2)}\;\leqslant\;
\Big\langle\Psi,p_1^A q_1^B\Big[(V_{12})_{11}-(V_{12}^v)_1^A-(V_{12}^u)_1^B ,\sum_{k=1}^{N_1}\sum_{\ell=1}^{N_2}p_k^A\,p_\ell^B\Big]p_1^A q_1^B\Psi\Big\rangle \\
&=\Big\langle\Psi,p_1^A q_1^B\Big[(V_{12})_{11}\!-\!(V_{12}^v)_1^A\!-\!(V_{12}^u)_1^B\,, \sum_{k=1}^{N_1}p_k^A\,p_1^B+\sum_{\ell=1}^{N_2}p_1^A\,p_\ell^B-p_1^A p_1^B\Big]p_1^Aq_1^B\Psi\Big\rangle \\
&=\Big\langle\Psi,p_1^A q_1^B\Big[(V_{12})_{11}\!-\!(V_{12}^v)_1^A\!-\!(V_{12}^u)_1^B\,, \sum_{\ell=2}^{N_2}p_\ell^B\Big]q_1^B\Psi\Big\rangle \;=\;0
\end{split}
\]
where in the first identity we used \eqref{eq:sum_expanded}, in the second identity we used $p_1^B q_1^B=\mathbb{O}$ and we re-absorbed $p_1^A$ outside the commutator, and in the last one we used the fact that the two entries of the commutator act on different variables. Obviously, the summand  $(qp,qp)$ in \eqref{eq:Lambda} vanishes for the same reason, upon exchanging the roles of $A$ and $B$.

The summand $(qq,qq)$ in \eqref{eq:Lambda} vanishes owing to $pq=\mathbb{O}$, indeed
\[
\begin{split}
&%\frac{|(pp,pp)|}{\kappa_{c_1c_2}/(N_1N_2)}\;\leqslant\;
\Big\langle\Psi,q_1^A q_1^B\Big[(V_{12})_{11}-(V_{12}^v)_1^A-(V_{12}^u)_1^B \,,\,\sum_{k=1}^{N_1}\sum_{\ell=1}^{N_2}p_k^A\,p_\ell^B\Big]q_1^A q_1^B\Psi\Big\rangle\;=\;0\,,
\end{split}
\]

Thus, in order to estimate the quantity $\Lambda$ in \eqref{eq:Lambda} it only remains to give a bound to the term of type $(pq,qp)$. One has
\[
\begin{split}
&%\frac{|(pp,pp)|}{\kappa_{c_1c_2}/(N_1N_2)}\;\leqslant\;
\frac{N_1N_2}{N_1+N_2}\Big\langle\Psi,p_1^A q_1^B\Big[(V_{12})_{11}-(V_{12}^v)_1^A-(V_{12}^u)_1^B ,\sum_{k=1}^{N_1}\sum_{\ell=1}^{N_2}\frac{p_k^A\,p_\ell^B}{N_1N_2}\Big]q_1^A p_1^B\Psi\Big\rangle \\
&\!\!=\;\frac{1}{N_1+N_2}\,\Big\langle\Psi,p_1^A q_1^B\Big[(V_{12})_{11}\,,\,\sum_{k=1}^{N_1}\sum_{\ell=1}^{N_2}p_k^A\,p_\ell^B\Big]q_1^A p_1^B\Psi\Big\rangle \\
&\!\!=\;\frac{1}{N_1+N_2}\,\Big\langle\Psi,p_1^A q_1^B\Big[(V_{12})_{11}\,,\,\sum_{k=1}^{N_1}p_k^A\,p_1^B+\sum_{\ell=1}^{N_2}p_1^A\,p_\ell^B\Big]q_1^A p_1^B\Psi\Big\rangle \\
&\!\!=\;\frac{N_1-1}{N_1+N_2}\big\langle\Psi,p_1^A q_1^B(V_{12})_{11}p_2^Aq_1^A p_1^B\Psi\big\rangle-\frac{N_2-1}{N_1+N_2}\big\langle\Psi,p_1^A q_1^B p_2^B(V_{12})_{11}q_1^A p_1^B\Psi\big\rangle
\end{split}
\]
where in the first identity the summand $(V_{12}^v)_1^A$ (respectively $(V_{12}^u)_1^B $) does not contribute because $p_1^B$ (resp., $q_1^A$) from the right can be pulled through the commutator all the way to the left with $q_1^Bp_1^B=\mathbb{O}$ (resp., $p_1^A q_1^A=\mathbb{O}$), in the second identity we used \eqref{eq:sum_expanded} and again $pq=\mathbb{O}$, and in the last identity we used the fact that one term of each commutator vanishes because of $pq=\mathbb{O}$.

Therefore, since asymptotically $(N_j-1)(N_1+N_2)^{-1}\leqslant c_j\leqslant 1$, $j=1,2$,
\begin{equation}\label{eq:further-pqqp}
\begin{split}
|\Lambda|\;&\leqslant\;\big|\big\langle\Psi,p_1^A q_1^B(V_{12})_{11}p_2^Aq_1^A p_1^B\Psi\big\rangle\big|+\big|\big\langle\Psi,p_1^A q_1^B p_2^B(V_{12})_{11}q_1^A p_1^B\Psi\big\rangle\big|\,.
\end{split}
\end{equation}
For the first summand in the r.h.s.~of \eqref{eq:further-pqqp} one has
\[
\begin{split}
 \big|\big\langle\Psi,p_1^A q_1^B&(V_{12})_{11}p_2^Aq_1^A p_1^B\Psi\big\rangle\big|\;\leqslant\;\|(V_{12})_{11}p_1^A q_1^B\Psi\|\,\|q_1^Ap_2^Ap_1^B\Psi\| \\
 &\leqslant\;\sqrt{\langle\Psi,p_1^A q_1^B (V^2_{12})_{11}p_1^A q_1^B\Psi\rangle}\,\sqrt{\langle\Psi,q_1^A\,\Psi\rangle} \\
 &=\;\sqrt{\langle\Psi,p_1^A q_1^B (V^2_{12}*|u|^2)_1^Bp_1^A q_1^B\Psi\rangle}\,\|q_1^A\Psi\| \\
 &\leqslant\;\sqrt{\|V^2_{12}*|u|^2\|_\infty\,}\,\|q_1^B\Psi\| \,\|q_1^A\Psi\|\\
 &\leqslant\; \sqrt{2}\,\|V_{12}\|_{L^{r_{12}}+L^{s_{12}}}\big(\|u\|_{\widehat{r}_{12}}+\|u\|_{\widehat{s}_{12}}\big)\:\alpha^{(1,1)}
\end{split}
\]
where we used the Cauchy-Schwarz inequality in the first step, the operator bound $\mathbb{O}\leqslant p\leqslant\mathbbm{1}$ in the second and fourth step,  identity \eqref{eq:trick_f-phi-distinct-components} in the third step, and the identities $\alpha^{(1,0)}=\|q_1^A\Psi\|^2$ and $\alpha^{(0,1)}=\|q_1^B\Psi\|^2$ in the fourth step together with the bounds \eqref{control}  of Lemma \ref{lemma:controllo} that produce $\alpha^{(1,1)}$.
Along the same line, the second summand in the r.h.s.~of \eqref{eq:further-pqqp} is estimated as
\[
\big|\big\langle\Psi,p_1^A q_1^B p_2^B(V_{12})_{11}q_1^A p_1^B\Psi\big\rangle\big|\;\leqslant\;\sqrt{2}\,\|V_{12}\|_{L^{r_{12}}+L^{s_{12}}}\big(\|v\|_{\widehat{r}_{12}}+\|v\|_{\widehat{s}_{12}}\big)\:\alpha^{(1,1)}\,.
\]
The conclusion is 
\begin{equation}\label{eq:Lambda-final}
|\Lambda|\;\lesssim\;\|V_{12}\|_{L^{r_{12}}+L^{s_{12}}}\big(\|u\|_{\widehat{r}_{12}}+\|u\|_{\widehat{s}_{12}}\|v\|_{\widehat{r}_{12}}+\|v\|_{\widehat{s}_{12}}\big)\:\alpha^{(1,1)}\,.
\end{equation}

\subsubsection{Terms with a different number of $q$'s on the left and on the right}\label{subsec:different_n_q}

We first check that the terms $(pp,qp)$ and $(pp,pq)$ in \eqref{eq:Omega} are zero. Indeed,
\[
\begin{split}
\Big\langle\Psi,\,&p_1^A p_1^B\Big[(V_{12})_{11}-(V_{12}^v)_1^A-(V_{12}^u)_1^B \,,\,\sum_{k=1}^{N_1}\sum_{\ell=1}^{N_2} p_k^A\,p_\ell^B\,\Big]q_1^Ap_1^B\Psi\Big\rangle \\
&\;=\;\Big\langle\Psi,p_1^A p_1^B\Big[(V_{12})_{11}-(V_{12}^v)_1^A \,,\,\sum_{k=1}^{N_1}\sum_{\ell=1}^{N_2} p_k^A\,p_\ell^B\,\Big]q_1^Ap_1^B\Psi\Big\rangle \\
&\;=\;\Big\langle\Psi,p_1^A p_1^B\Big[((V_{12}^v)_1^A-(V_{12}^v)_1^A \,,\,\sum_{k=1}^{N_1}\sum_{\ell=1}^{N_2} p_k^A\,p_\ell^B\,\Big]q_1^Ap_1^B\Psi\Big\rangle\;=\;0\,,
\end{split}
\]
where in the first identity the term $(V_{12}^u)_1^B$ does not contribute because $q_1^A$ from the right can be pulled through the commutator all the way to the left with $p_1^Aq_1^A=\mathbb{O}$, and
in the second identity we applied \eqref{eq:trick_f-phi-distinct-components}.
This shows that $(pp,qp)=0$ and an analogous argument shows that $(pp,pq)=0$.

For the term $(qp,qq)$ in \eqref{eq:Omega} one has
\[
\begin{split}
&\dfrac{N_1N_2}{N_1+N_2}\,\Big\langle\Psi,q_1^A p_1^B\Big[(V_{12})_{11}-(V_{12}^v)_1^A-(V_{12}^u)_1^B \,,\,\sum_{k=1}^{N_1}\sum_{\ell=1}^{N_2} \,\frac{p_k^A\,p_\ell^B}{N_1N_2}\,\Big]q_1^Aq_1^B\Psi\Big\rangle\\
&=\;\dfrac{1}{N_1+N_2}\,\Big\langle\Psi,q_1^A p_1^B\Big[(V_{12})_{11}-(V_{12}^u)_1^B \,,\,\sum_{k=1}^{N_1}p_k^A\,p_1^B+\sum_{\ell=1}^{N_2}p_1^A\,p_\ell^B-p_1^A p_1^B\,\Big]q_1^Aq_1^B\Psi\Big\rangle\\
&=\;\dfrac{1}{N_1+N_2}\,\Big\langle\Psi,q_1^A p_1^B\Big[(V_{12})_{11}-(V_{12}^u)_1^B \,,\,\sum_{k=2}^{N_1}p_k^A\,p_1^B\,\Big]q_1^Aq_1^B\Psi\Big\rangle \\
&=\;-\dfrac{(N_1-1)}{N_1+N_2}\,\Big\langle\Psi,q_1^A p_1^B\big((V_{12})_{11}-(V_{12}^u)_1^B \big)\,q_1^Aq_1^Bp_2^A\,\Psi\Big\rangle\,,
\end{split}\]
where in the first identity we applied \eqref{eq:sum_expanded} and we dropped the $(V_{12}^v)_1^A$-term owing to the commutation of $q_1^B$ from the right all the way through to the left with $p_1^Bq_1^B=\mathbb{O}$, in the second identity  we used $p_1^Aq_1^A=\mathbb{O}$, and in the third we used the symmetry of $\Psi$ and again $p_1^Bq_1^B=\mathbb{O}$. In the above quantity, the summand with $(V_{12})_{11}$ can be estimated with the very same arguments used for the control of the first summand in the r.h.s.~of \eqref{eq:further-pqqp} above, that is,
\[
\begin{split}
 \big|\big\langle\Psi,q_1^A p_1^B&(V_{12})_{11}q_1^Aq_1^Bp_2^A\,\Psi\big\rangle\big|\;\leqslant\;\|(V_{12})_{11}q_1^A p_1^B\Psi\|\,\|q_1^A q_1^B p_2^A\Psi\|  \\
 &\leqslant\;\sqrt{\langle\Psi,q_1^A p_1^B (V^2_{12})_{11}q_1^A p_1^B\,\Psi\rangle}\,\sqrt{\langle\Psi,q_1^Aq_1^B\,\Psi\rangle} \\
 &\leqslant\;\sqrt{\langle\Psi,q_1^A p_1^B (V^2_{12}*|v|^2)_1^Aq_1^A p_1^B\,\Psi\rangle}\,\sqrt{\|q_1^A\Psi\|\,\|q_1^B\Psi\|\,} \\
 &\leqslant\;\sqrt{\|V^2_{12}*|v|^2\|_\infty\,}\;\alpha^{(1,1)}\\
 &\leqslant\; \sqrt{2}\,\|V_{12}\|_{L^{r_{12}}+L^{s_{12}}}\big(\|v\|_{\widehat{r}_{12}}+\|v\|_{\widehat{s}_{12}}\big)\:\alpha^{(1,1)}\,.
\end{split}
\]
The summand with $(V_{12}^u)_1^B$ is estimated via a Cauchy-Schwarz inequality and the bound \eqref{convoldue} of Lemma \ref{stime} as
\[
\begin{split}
|\langle\Psi,q_1^A p_1^B\,&(V_{12}^u)_1^B\,q_1^Aq_1^Bp_2^A\,\Psi\rangle|\;\leqslant\;\|V_{12}*|u|^2\|_\infty\,\|q_1^A\Psi\| \,\|q_1^B\Psi\| \\
&\leqslant\;\sqrt{2}\,
\|V_{12}\|_{L^{r_{12}}+L^{s_{12}}}\big(\|u\|_{\widehat{r}_{12}}+\|u\|_{\widehat{s}_{12}}\big)\,\alpha^{(1,1)}\,.
\end{split}
\]
Therefore, since asymptotically $(N_1-1)(N_1+N_2)^{-1}\leqslant c_1\leqslant 1$, 
\begin{equation*}
\begin{split}
|(qp,qq)|\;&\lesssim\;\|V_{12}\|_{L^{r_{12}}+L^{s_{12}}}\big(\|u\|_{\widehat{r}_{12}}+\|u\|_{\widehat{s}_{12}}+\|v\|_{\widehat{r}_{12}}+\|v\|_{\widehat{s}_{12}}\big)\,\alpha^{(1,1)}\,.
\end{split}
\end{equation*}

The very same discussion above for $(qp,qq)$ can be repeated for the term $(pq,qq)$ in \eqref{eq:Omega}. Thus,
\begin{equation}\label{eq:further-qpqq}
\begin{split}
|(qp,qq)|&+|(pq,qq)|\;\lesssim\;\|V_{12}\|_{L^{r_{12}}+L^{s_{12}}}\times \\
&\qquad\times\big(\|u\|_{\widehat{r}_{12}}+\|u\|_{\widehat{s}_{12}}+\|v\|_{\widehat{r}_{12}}+\|v\|_{\widehat{s}_{12}}\big)\,\alpha^{(1,1)}\,.
\end{split}
\end{equation}

It remains to control the term $(pp,qq)$ in \eqref{eq:Omega}. One has
\[
\begin{split}
&\dfrac{N_1N_2}{N_1+N_2}\,\Big\langle\Psi,p_1^A p_1^B\Big[(V_{12})_{11}-(V_{12}^v)_1^A-(V_{12}^u)_1^B \,,\,\sum_{k=1}^{N_1}\sum_{\ell=1}^{N_2} \,\frac{p_k^A\,p_\ell^B}{N_1N_2}\,\Big]q_1^Aq_1^B\Psi\Big\rangle\\
%&=\;\dfrac{1}{N_1+N_2}\,\Big\langle\Psi,p_1^A p_1^B\Big[(V_{12})_{11} \,,\,\sum_{k=1}^{N_1}\sum_{\ell=1}^{N_2} \,p_k^A\,p_\ell^B\,\Big]q_1^Aq_1^B\Psi\Big\rangle\\
&=\;\dfrac{1}{N_1+N_2}\,\Big\langle\Psi,p_1^A p_1^B\Big[(V_{12})_{11} \,,\,\sum_{k=1}^{N_1}p_k^A\,p_1^B+\sum_{\ell=1}^{N_2}p_1^A\,p_\ell^B-p_1^A p_1^B\,\Big]q_1^Aq_1^B\Psi\Big\rangle\\
&=\;\dfrac{N_1-1}{N_1+N_2}\,\langle\Psi,p_1^A p_1^B\big[(V_{12})_{11}\,,\,p_2^Ap_1^B\,\big]q_1^Aq_1^B\Psi\rangle \\
&\qquad\quad +\dfrac{N_2-1}{N_1+N_2}\,\langle\Psi,p_1^A p_1^B\big[(V_{12})_{11}\,,\,p_1^Ap_2^B\,\big]q_1^Aq_1^B\Psi\rangle \\
&\qquad\quad +\dfrac{1}{N_1+N_2}\,\langle\Psi,p_1^A p_1^B\big[(V_{12})_{11}\,,\,p_1^Ap_1^B\,\big]q_1^Aq_1^B\Psi\rangle \\
&\equiv\;(pp,qq)_1+(pp,qq)_2+(pp,qq)_3\,,
\end{split}\]
where in the first identity we applied \eqref{eq:sum_expanded} and we dropped the $(V_{12}^v)_1^A$-term (respectively, the $(V_{12}^u)_1^B$-term) owing to the commutation of $q_1^B$ (resp., $q_1^A$) from the right all the way through to the left with $p_1q_1=\mathbb{O}$, and in the second identity we used the symmetry of $\Psi$.

One has
\[
\begin{split}
(pp,qq)_1\;&=\;\dfrac{N_1-1}{N_1+N_2}\,\langle\Psi,p_1^A p_1^B\big[(V_{12})_{11}\,,\,p_2^Ap_1^B\,\big]q_1^Aq_1^B\Psi\rangle \\
&=\;-\dfrac{N_1-1}{N_1+N_2}\,\langle\Psi,p_1^A p_1^B\,(V_{12})_{11}\,p_2^Aq_1^Aq_1^B\Psi\rangle \\
&=\;-\dfrac{N_1-1}{N_1+N_2}\,\langle\Psi,p_1^A p_1^B\,(V_{12})_{11}\,\widehat{n}^A(\widehat{n}^{-1})^A p_2^Aq_1^Aq_1^B\Psi\rangle \\
&=\;-\dfrac{N_1-1}{N_1+N_2}\,\langle\Psi,p_1^A p_1^B\,\widehat{\tau_1 n}^A(V_{12})_{11}(\widehat{n}^{-1})^A p_2^Aq_1^Aq_1^B\Psi\rangle
\end{split}
\]
where in the third step we introduced the auxiliary operator $\widehat{n}$ defined in  \eqref{eq:def_f-hat} and \eqref{eq:def_m_n}, using again the fact that $(\widehat{n}^{-1})^A$ is well defined on the range of $q_1^A$, and in the last step we applied Lemma \ref{lemma:exchange-hat}  in the form \eqref{eq:hat-business-for-2}. Therefore,
\[
\begin{split}
&|(pp,qq)_1|\;\leqslant\;|\langle\Psi,p_1^A p_1^B\,\widehat{\tau_1 n}^A(V_{12})_{11}(\widehat{n}^{-1})^A p_2^Aq_1^Aq_1^B\,\Psi\rangle| \\
&\qquad \leqslant \;\sqrt{\langle\Psi,p_1^A p_1^B\,\widehat{\tau_1 n}^A(V^2_{12})_{11}\,\widehat{\tau_1 n}^Ap_1^A p_1^B\,\Psi\rangle}\,\sqrt{\langle\Psi,(\widehat{n}^{-2})^A p_2^Aq_1^Aq_1^B\,\Psi\rangle} \\
&\qquad \leqslant \;2\,\sqrt{\langle\Psi,p_1^A p_1^B\,\widehat{\tau_1 n}^A(V^2_{12}*|v|^2)_{1}^B\,\widehat{\tau_1 n}^Ap_1^A p_1^B\,\Psi\rangle}\,\|q_1^B\,\Psi\| \\
&\qquad\leqslant\;2\,\sqrt{\|V^2_{12}*|v|^2\|_\infty}\,\sqrt{\langle\Psi,\widehat{\tau_1 m}^A\Psi\rangle}\,\sqrt{\alpha^{(1,1)}} \\
&\qquad=\;2\,\sqrt{\|V^2_{12}*|v|^2\|_\infty}\,\sqrt{\Big\langle\Psi,\Big(\widehat{m}^A+\frac{1}{N_1}\Big)\Psi\Big\rangle}\,\sqrt{\alpha^{(1,1)}} \\
&\qquad=\;2\,\sqrt{\|V^2_{12}*|v|^2\|_\infty}\,\sqrt{\alpha^{(1,1)} +\frac{1}{N_1}}\,\sqrt{\alpha^{(1,1)}} \\
&\qquad\leqslant\;\frac{4\sqrt{2}}{c_1}\,\|V_{12}\|_{L^{r_{12}}+L^{s_{12}}}\big(\|v\|_{\widehat{r}_{12}}+\|v\|_{\widehat{s}_{12}}\big)\,\Big(\alpha^{(1,1)}+\frac{1}{N_1+N_2}\Big)
\end{split}
\]
where in the first step we used the asymptotic bound $(N_1-1)(N_1+N_2)^{-1}\leqslant c_1\leqslant 1$, in the second we applied the Cauchy-Schwarz inequality, in the third we used \eqref{eq:trick_f-phi-distinct-components} and \eqref{eq:def_m_n}, as well as  Lemma \ref{commutazione-partial-symmetry}
in the form \eqref{eq:commutation-modified-application}, in the fourth we used again \eqref{eq:def_m_n} and $\|q_1^A\Psi\|^2=\alpha^{(1,0)}\leqslant\alpha^{(1,1)}$ (\eqref{control}  of Lemma \ref{lemma:controllo}), in the fifth we used \eqref{eq:def_f-hat}, in the sixth we used \eqref{eq:def_mhat} and again $\|q_1^A\Psi\|^2\leqslant\alpha^{(1,1)}$, and in the last we applied \eqref{convoldue} of Lemma \ref{stime} and
\[
\begin{split}
\sqrt{\alpha^{(1,1)} +\frac{1}{N_1}}\,\sqrt{\alpha^{(1,1)}}\;&\leqslant\;\alpha^{(1,1)}+\frac{1}{2 N_1}\;\leqslant\;\frac{N_1+N_2}{N_1}\,\Big(\alpha^{(1,1)}+\frac{1}{N_1+N_2}\Big) \\
&\leqslant\; \frac{2}{c_1}\,\Big(\alpha^{(1,1)}+\frac{1}{N_1+N_2}\Big)\,.
\end{split}
\]

With the very same arguments one finds
\[
|(pp,qq)_2|\;\leqslant\;\frac{2\sqrt{2}}{c_2}\,\|V_{12}\|_{L^{r_{12}}+L^{s_{12}}}\big(\|u\|_{\widehat{r}_{12}}+\|u\|_{\widehat{s}_{12}}\big)\,\Big(\alpha^{(1,1)}+\frac{1}{N_1+N_2}\Big)\,.
\]

Last,
\[
\begin{split}
|(pp,qq)_3|\;&=\;\Big|\dfrac{1}{N_1+N_2}\,\Big\langle\Psi,p_1^A p_1^B\big[(V_{12})_{11}\,,\,p_1^Ap_1^B\,\big]q_1^Aq_1^B\Psi\Big\rangle\Big| \\
&\leqslant\;\dfrac{1}{N_1+N_2}\,|\langle\Psi,p_1^A p_1^B\,(V_{12})_{11}\,q_1^Aq_1^B\Psi\rangle| \\
&\leqslant\;\dfrac{1}{N_1+N_2}\,\sqrt{\langle\Psi,p_1^A p_1^B\,(V^2_{12})_{11}\,p_1^Ap_1^B\Psi\rangle} \\
&=\;\dfrac{1}{N_1+N_2}\,\sqrt{\langle\Psi,p_1^A p_1^B\,(V^2_{12}*|u|^2)_{1}^A\,p_1^Ap_1^B\Psi\rangle} \\
&\leqslant\;\dfrac{1}{N_1+N_2}\,\sqrt{\|V^2_{12}*|u|^2\|_\infty} \\
&\leqslant\;\dfrac{\sqrt{2}}{N_1+N_2}\,\|V_{12}\|_{L^{r_{12}}+L^{s_{12}}}\big(\|u\|_{\widehat{r}_{12}}+\|u\|_{\widehat{s}_{12}}\big)\,
\end{split}
\]
where the second step follows by  $pq=\mathbb{O}$, the third by
a Cauchy-Schwarz inequality and the operator bound $\mathbb{O}\leqslant q\leqslant\mathbbm{1}$, the fourth by \eqref{eq:trick_f-phi-distinct-components}, the fifth by $\mathbb{O}\leqslant p\leqslant\mathbbm{1}$, and the last by  \eqref{convoldue}.

Therefore,
\begin{equation}\label{eq:further-ppqq}
\begin{split}
|(pp,qq)|\;&\leqslant\;|(pp,qq)_1|+|(pp,qq)_2|+|(pp,qq)_3| \\
&\leqslant\;\widetilde{\kappa}_{12}\,\|V_{12}\|_{L^{r_{12}}+L^{s_{12}}}\big(\|u\|_{\widehat{r}_{12}}+\|u\|_{\widehat{s}_{12}}+\|v\|_{\widehat{r}_{12}}+\|v\|_{\widehat{s}_{12}}\big)\,\times \\
&\qquad\qquad\times\Big(\alpha^{(1,1)}+\frac{1}{N_1+N_2}\Big)
\end{split}
\end{equation}
where the constant $\widetilde{\kappa}_{12}$ only depends on the population fractions $c_1$ and $c_2$.

Plugging \eqref{eq:further-qpqq} and \eqref{eq:further-ppqq} into \eqref{eq:Omega}, which are the only non-zero contributions to $\Omega$, and renaming $\widetilde{\kappa}_{12}$, we finally obtain
\begin{equation}\label{eq:Omega-final}
\begin{split}
|\Omega|\;&\leqslant\;\widetilde{\kappa}_{12}\,\|V_{12}\|_{L^{r_{12}}+L^{s_{12}}}\big(\|u\|_{\widehat{r}_{12}}+\|u\|_{\widehat{s}_{12}}+\|v\|_{\widehat{r}_{12}}+\|v\|_{\widehat{s}_{12}}\big)\,\times \\
&\qquad\qquad\times\Big(\alpha^{(1,1)}+\frac{1}{N_1+N_2}\Big)\,.
\end{split}
\end{equation}

\appendix

\section{Tools exported from the treatment of the single-component case}\label{sec:particle-counting_method}

We collect in this Appendix a number of tools, needed in the proof of Theorem \ref{main}, on which the ``counting'' method 
 is based, quoting their properties from the previous treatments of the single-component case \cite{kp-2009-cmp2010,Pickl-JSP-2010,Pickl-LMP-2011,Pickl-RMP-2015}.

The notation is that introduced in Subsection \ref{subsec:notation_for_the_proof} as applicable for one component only. Thus, in particular, we consider the projections
\begin{equation}
p\;:=\;|\phi\rangle\langle\phi|\,,\qquad q\;:=\;\mathbbm{1}-|\phi\rangle\langle\phi|
\end{equation}
on the one-body Hilbert space $\mathfrak{h}$ and their realisation $p_j$, $q_j$, $j\in\{1,\dots,N\}$ as orthogonal projections on the many-body Hilbert space $\cH_N=\mathfrak{h}^{\otimes N}$, where $\phi\in\mathfrak{h}$ with $\|\phi\|=1$. Clearly, $p+q=\mathbbm{1}$  and $pq=\mathbb{O}=[p,q]$.

Associated to $p$ and $q$ one defines the family of orthogonal projections $P_k$  acting on $\cH_N$, defined by
\begin{equation}
\begin{split}
P_k\;&:=\sum_{\substack{a\in\{0,1\}^N\\
                     \sum_{i}a_i=k}}\;\prod_{i=1}^N \;p_i^{1-a_i}q_i^{a_i}\qquad\textrm{ if } k\in\{0,1,\dots,N\} \\
P_k\;&:=\;\mathbb{O}\qquad\qquad\qquad\qquad\qquad\;\; \textrm{otherwise}\,.
\end{split}
\end{equation}
Each $P_k$ consists by construction of the sum of all possible $N$-fold tensor products of the $p$'s and the $q$'s with $k$ factor $q$. It  therefore arises as the $k$-th term in the expansion of the identity
\begin{equation}
\mathbbm{1}\;=\;(p_1+q_1)\cdots(p_N+q_N)\;=\;\sum_{k=0}^N P_k
\end{equation}
in powers of $q$. It is also clear by the commutation properties of the $p_j$'s and $q_j$'s that
\begin{equation}\label{eq:commutation_of_Pk}
P_k\,P_\ell\;=\;\delta_{k,\ell} P_k\,.
\end{equation}

A relevant role is played by suitable weighted linear combinations of the $P_k$'s. To this aim one introduces, associated to any 
function $f:\{0,1,\dots,N\}\to\mathbb{C}$, i.e., any $(N+1)$-ple $(f(0),\dots,f(N))\in\mathbb{C}^{N+1}$,  the operator
\begin{equation}\label{eq:def_f-hat}  
\widehat{f}\;:=\;\sum_{k=0}^N\, f(k)\, P_k\,.
\end{equation}
As an immediate consequence of \eqref{eq:commutation_of_Pk} and  of the commutation properties of the $p_j$'s,
\begin{equation}
[\,\widehat{f}\,,\,p_j\,]\;=\;[\,\widehat{f}\,,\,P_k\,]\;=\;\mathbb{O}\,,    \qquad[\,\widehat{f}\,,\,\widehat{g}\,]\;=\;\mathbb{O}\,.
\end{equation}
Two convenient choices for the function $f$ shall be
\begin{equation}\label{eq:def_m_n}
 m(k)\;:=\;\frac{k}{N}\,,\qquad n(k)\;:=\;\sqrt{\frac{k}{N}}\,.
\end{equation}
For the operator $\widehat{m}$ one has
\begin{equation}\label{eq:def_mhat}
\frac{1}{N}\sum_{j=1}^N q_j\;=\;\frac{1}{N}\sum_{k=0}^N\sum_{j=1}^N  q_j  P_k\;=\;\frac{1}{N}\sum_{k=0}^N k P_k\;=\;\widehat{m}\,.
\end{equation}
Therefore, if $\Psi\in\cH_{N,\mathrm{sym}}\equiv(\mathfrak{h}^{\otimes N})_{\mathrm{sym}}$, then \eqref{eq:def_mhat} implies
\begin{equation}\label{eq:property-m-q}
\langle\Psi,q_1\,\Psi\rangle\;=\;\langle\Psi,\widehat{m}\,\Psi\rangle\,.
\end{equation}

We thus come to the following useful bounds (see \cite[Lemma 3.9]{kp-2009-cmp2010}):
\begin{lemma}\label{commutazione}
For any $f:\{0,\dots,N\}\rightarrow[0,+\infty)$ and any $\Psi\in\cH_{N,\mathrm{sym}}$ one has
\begin{eqnarray}
\langle\Psi,\widehat{f}\,q_1\,\Psi\rangle\;&=&\;\langle\Psi,\widehat{f}\,\widehat{m}\,\Psi\rangle \\
\langle\Psi,\widehat{f}\,q_1\,q_2\,\Psi\rangle\;&\leqslant&\;\frac{N}{N-1}\langle\Psi,\widehat{f}\,\widehat{m}^2\,\Psi\rangle\,. \label{eq:commutazione-m2}
\end{eqnarray}
\end{lemma}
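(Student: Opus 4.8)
The plan is to leverage the permutation symmetry of $\Psi$ together with the fact that $\widehat{f}$ is a symmetric operator, in order to turn the one- and two-particle insertions $q_1$ and $q_1q_2$ into fully symmetrized sums, which can then be re-expressed through $\widehat{m}$ by means of \eqref{eq:def_mhat}. The whole argument rests on two structural facts already recorded: that $\frac{1}{N}\sum_{j}q_j=\widehat{m}$, and that the $P_k$ are mutually orthogonal projections (cf. \eqref{eq:commutation_of_Pk}), so that $\widehat{f}\,\widehat{m}=\sum_k f(k)m(k)P_k$ and similar products remain diagonal in the $P_k$-decomposition.

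\emph{First identity.} I would begin by observing that $\widehat{f}=\sum_k f(k)P_k$ is invariant under the permutation action on $\cH_N$, since each $P_k$ is a symmetric sum over all placements of exactly $k$ factors $q$. Conjugating by the transposition that swaps particles $1$ and $j$ and using the bosonic symmetry of $\Psi$ gives $\langle\Psi,\widehat{f}q_1\Psi\rangle=\langle\Psi,\widehat{f}q_j\Psi\rangle$ for every $j$. Averaging over $j$ and pulling out the $j$-independent factor $\widehat{f}$ then yields $\langle\Psi,\widehat{f}q_1\Psi\rangle=\langle\Psi,\widehat{f}\,(\tfrac1N\sum_j q_j)\,\Psi\rangle=\langle\Psi,\widehat{f}\,\widehat{m}\,\Psi\rangle$ via \eqref{eq:def_mhat}; this is precisely the weighted version of \eqref{eq:property-m-q}.

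\emph{Second inequality.} The same symmetrization, now applied to the pair insertion and averaged over the $N(N-1)$ ordered pairs $i\neq j$, gives $\langle\Psi,\widehat{f}q_1q_2\Psi\rangle=\frac{1}{N(N-1)}\langle\Psi,\widehat{f}\sum_{i\neq j}q_iq_j\Psi\rangle$. I would then invoke the operator identity $\sum_{i\neq j}q_iq_j=(\sum_i q_i)^2-\sum_i q_i^2=N^2\widehat{m}^2-N\widehat{m}$, which follows from $\sum_i q_i=N\widehat{m}$ and $q_i^2=q_i$. Substituting produces the exact formula $\langle\Psi,\widehat{f}q_1q_2\Psi\rangle=\frac{N}{N-1}\langle\Psi,\widehat{f}\,\widehat{m}^2\,\Psi\rangle-\frac{1}{N-1}\langle\Psi,\widehat{f}\,\widehat{m}\,\Psi\rangle$. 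The claimed bound \eqref{eq:commutazione-m2} is then obtained by discarding the last term, which is nonnegative: since $f\geq 0$ and $m(k)=k/N\geq 0$ (see \eqref{eq:def_m_n}), one has $\widehat{f}\,\widehat{m}=\sum_k f(k)m(k)P_k\geq 0$, so that $\langle\Psi,\widehat{f}\,\widehat{m}\,\Psi\rangle=\sum_k f(k)\tfrac{k}{N}\|P_k\Psi\|^2\geq 0$.

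I do not expect a genuine obstacle here; the only points requiring care are bookkeeping ones. One must check that $\widehat{f}$ commutes with each $p_j$, hence with each $q_j$, which is what makes the products $\widehat{f}q_1$, $\widehat{f}q_1q_2$, $\widehat{f}\widehat{m}$, $\widehat{f}\widehat{m}^2$ self-adjoint and legitimizes speaking of their real expectations and of positivity; this is immediate from the definition of $P_k$ together with $p_1q_1=\mathbb{O}$. The hypothesis $f\geq 0$ enters only in the final discard step, where it is essential.
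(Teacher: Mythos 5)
Your proof is correct and takes essentially the same route as the source of this lemma: the paper does not prove it in-text but imports it from \cite[Lemma 3.9]{kp-2009-cmp2010}, whose argument is precisely yours — symmetrize $q_1$ (resp.\ $q_1q_2$) over particle indices using the permutation invariance of $\widehat{f}$ and of $\Psi$, apply $\sum_j q_j=N\widehat{m}$ and $\sum_{i\neq j}q_iq_j=N^2\widehat{m}^2-N\widehat{m}$, and discard the nonnegative term $\frac{1}{N-1}\langle\Psi,\widehat{f}\,\widehat{m}\,\Psi\rangle$, which is where $f\geqslant 0$ enters. No gaps; the bookkeeping points you flag (commutativity of $\widehat{f}$ with the $p_j$, $q_j$ and the resulting self-adjointness and positivity) are exactly the right ones.
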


% dove c'e' \label{lemma:commutation-modified} mettere {commutazione-partial-symmetry}

A further relevant tool is a modification of Lemma \ref{commutazione} above for the case when $\Psi$ carries only a partial permutation symmetry. In the present context this is the case when we consider the two-component many-body states of the form $p_1^A\Psi$ -- see the control of terms of the form $(pp,qq)_1$ in Subsection \ref{subsec:different_n_q}. We import the following result from \cite[Lemma 4.2]{Pickl-RMP-2015}:
\begin{lemma}\label{commutazione-partial-symmetry}
For any $f:\{0,\dots,N\}\rightarrow[0,+\infty)$ and any $\Phi\in\mathfrak{h}\otimes\cH_{N-1,\mathrm{sym}}$ one has
\begin{equation}\label{eq:commutation-modified}
\|\widehat{f} q_1 \Phi\|^2\;\leqslant\;\frac{N}{N-1}\,\|\widehat{f}\,\widehat{n}\,\Phi\|^2\,.
\end{equation}
\end{lemma}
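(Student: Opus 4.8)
The plan is to imitate the strategy behind the fully symmetric bounds of Lemma \ref{commutazione}, replacing the global symmetrisation (which is no longer available) by an equidistribution argument restricted to the $N-1$ exchangeable variables of $\Phi$. The whole estimate will diagonalise over the spectral projections $P_k$. First I would record that $q_1$ commutes with every $P_k$ (as $[q_j,P_k]=\mathbb{O}$, a consequence of \eqref{eq:commutation_of_Pk} and of the mutual commutation of the $p_j,q_j$), and likewise $\widehat f=\sum_k f(k)P_k$ commutes with $q_1$. Writing the orthogonal decomposition $\Phi=\sum_{k=0}^N\Phi_k$ with $\Phi_k:=P_k\Phi$, one then has $\widehat f\,q_1\Phi=\sum_k f(k)\,q_1\Phi_k$ with mutually orthogonal summands $q_1\Phi_k=P_k q_1\Phi\in\ran P_k$. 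Using in addition $\widehat n^2=\widehat m$ with $m(k)=k/N$ (see \eqref{eq:def_m_n}), both sides of \eqref{eq:commutation-modified} become diagonal sums:
\[
\|\widehat f\,q_1\Phi\|^2=\sum_{k=0}^N f(k)^2\,\|q_1\Phi_k\|^2,\qquad \frac{N}{N-1}\,\|\widehat f\,\widehat n\,\Phi\|^2=\frac{1}{N-1}\sum_{k=0}^N f(k)^2\,k\,\|\Phi_k\|^2.
\]
Since $f\geqslant0$, it then suffices to establish the single-level estimate $\|q_1\Phi_k\|^2\leqslant\frac{k}{N-1}\|\Phi_k\|^2$ for each $k$.

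For this per-level bound I would combine two observations. On one hand, $\sum_{j=1}^N q_j$ equals $N\widehat m$ and hence acts as the scalar $k$ on $\ran P_k$ (by \eqref{eq:def_mhat}), so that $\sum_{j=1}^N\|q_j\Phi_k\|^2=\langle\Phi_k,\sum_{j} q_j\,\Phi_k\rangle=k\,\|\Phi_k\|^2$. On the other hand, the permutation symmetry of $\Phi$ among its $N-1$ exchangeable variables passes to $\Phi_k=P_k\Phi$ (each $P_k$ being symmetric), so the quantities $\|q_j\Phi_k\|^2$ share one common value over those $N-1$ indices, among which is the index on which $q_1$ acts. Retaining only these $N-1$ equal contributions and discarding the remaining nonnegative term yields $(N-1)\,\|q_1\Phi_k\|^2\leqslant\sum_{j=1}^N\|q_j\Phi_k\|^2=k\,\|\Phi_k\|^2$, which is exactly the required inequality; substituting it into the two displayed identities closes the argument.

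The hard part, and the only genuinely new point with respect to the fully symmetric Lemma \ref{commutazione}, is precisely this loss of one exchangeable slot: under full symmetry the total $q$-number $k$ would be shared among all $N$ variables, giving weight $k/N$ per particle and constant $1$, whereas here it can be shared only among $N-1$ of them, which is the structural origin of the deteriorated factor $\tfrac{N}{N-1}$. I would be careful to justify the commutation $[q_1,P_k]=\mathbb{O}$ (so that the diagonalisation and the orthogonality of the $q_1\Phi_k$ are legitimate) and that the discarded $q$-contribution is nonnegative, but both are immediate; notably, no estimate on $\widehat f$ itself, nor any analytic input, is needed, so the proof is purely algebraic and combinatorial.
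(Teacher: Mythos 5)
Your argument is correct, but note first that the paper itself gives no proof of this lemma: it is imported verbatim from \cite[Lemma 4.2]{Pickl-RMP-2015}. Measured against that source, your proof is essentially the standard one: the counting identity $\sum_{j=1}^N q_j=N\widehat{m}$ of \eqref{eq:def_mhat}, equidistribution of $\|q_j\Phi\|$ over the exchangeable slots, and the discarding of one nonnegative term, which is exactly where the factor $\frac{N}{N-1}$ originates. Your decomposition over the $P_k$ is a harmless repackaging of that argument: writing $J$ for the set of $N-1$ exchangeable indices (assumed to contain $1$) and using $[\widehat{f},q_j]=\mathbb{O}$ and $\widehat{n}^2=\widehat{m}$, one can run it in one line,
\begin{equation*}
(N-1)\,\|\widehat{f}q_1\Phi\|^2\;=\;\sum_{j\in J}\|\widehat{f}q_j\Phi\|^2\;\leqslant\;\sum_{j=1}^N\big\langle\Phi,\widehat{f}^{\,2} q_j\Phi\big\rangle\;=\;N\,\big\langle\Phi,\widehat{f}^{\,2}\widehat{m}\,\Phi\big\rangle\;=\;N\,\|\widehat{f}\,\widehat{n}\,\Phi\|^2,
\end{equation*}
without ever introducing the $\Phi_k$'s; your per-level bound $\|q_1\Phi_k\|^2\leqslant\frac{k}{N-1}\|\Phi_k\|^2$ is the same computation sliced along the spectral projections.

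There is, however, one point that you state only in passing and that deserves to be made explicit, because the validity of the lemma hinges on it: your equidistribution step requires that the index $1$ on which $q_1$ acts be \emph{one of} the $N-1$ exchangeable indices, i.e.\ that the distinguished (non-symmetrised) slot of $\Phi$ be a different variable. Under the literal reading of the notation $\Phi\in\mathfrak{h}\otimes\cH_{N-1,\mathrm{sym}}$ (first tensor factor $=$ variable $1$, hence variable $1$ distinguished) the inequality \eqref{eq:commutation-modified} is actually \emph{false}: for $N=3$, $\Phi=\phi^{\perp}\otimes\phi\otimes\phi$ with $\phi^\perp\perp\phi$ a unit vector, and the weight $f(1)=1$, $f(k)=0$ otherwise, one computes $\|\widehat{f}q_1\Phi\|^2=1$ while $\frac{N}{N-1}\|\widehat{f}\,\widehat{n}\,\Phi\|^2=\frac{3}{2}\cdot\frac{1}{3}=\frac{1}{2}$. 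The reading you adopted is the correct one: it is the statement proved in \cite{Pickl-RMP-2015}, and it is the form in which the lemma is applied in \eqref{eq:commutation-modified-application1} and \eqref{eq:commutation-modified-application}, where the symmetry-breaking operator ($q_2^A$, resp.\ $p_2^A$) sits on the \emph{second} $A$-variable and $q_1^A$ acts on a slot in which the state is still symmetric. So your proof is right, but record the labelling convention explicitly; otherwise the assertion that the $N-1$ equal contributions include $\|q_1\Phi_k\|^2$ cannot be justified, and indeed fails.
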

In particular, in the context of Subsection \ref{subsec:V1_and_V2_terms}, the bound \eqref{eq:commutation-modified} above implies
\begin{equation}\label{eq:commutation-modified-application1}
\begin{split}
\big\langle\Psi,(\widehat{n}^{-2})^Aq_1^Aq_2^Ap_1^B\Psi\big\rangle\;&\leqslant\;\|(\widehat{n}^{-1})^A q_1^A q_2^Ap_1^B\,\Psi\|^2 \\
&\leqslant\;\frac{N_1}{N_1-1}\,\|(\widehat{n}^{-1})^A\,\widehat{n}^Aq_2^A p_1^B\,\Psi\|^2 \\
&\leqslant\;2\,\|q_2^A\Psi\|^2
\end{split}
\end{equation}
and similarly, in the context of Subsection \ref{subsec:different_n_q},
\begin{equation}\label{eq:commutation-modified-application}
\begin{split}
\langle\Psi,(\widehat{n}^{-2})^A p_2^Aq_1^Aq_1^B\,\Psi\rangle\;&=\;\|(\widehat{n}^{-1})^A q_1^A q_1^Bp_2^A\,\Psi\|^2 \\
&\leqslant\;\frac{N_1}{N_1-1}\,\|(\widehat{n}^{-1})^A\,\widehat{n}^Ap_2^A q_1^B\,\Psi\|^2 \\
&\leqslant\;2\,\|q_1^B\Psi\|^2\,.
\end{split}
\end{equation}

Next to the operators of the form $\widehat{f}$, a relevant role in the estimates for the ``counting'' method is played by the operators of the form $\widehat{\tau_n f}$, where $\tau_n$ for given $n\in\mathbb{Z}$ is the operation that produces the shifted function
\begin{equation}\label{eq:shifted_function}
(\tau_n f)(k)\;:=\;f(k+n)\,,\qquad k\in\{0,1,\dots,N\}\,.
\end{equation}
The following important property holds (see \cite[Lemma 3.10]{kp-2009-cmp2010}):
\begin{lemma}\label{lemma:exchange-hat} 
Let $A_{1\cdots r}\equiv A\otimes \mathbbm{1}_{N-r}$ be an operator on $\cH_{N}\cong\cH_{r}\otimes\cH_{N-r}$ that acts non-trivially only on the first factor $\cH_{r}$, and let $Q_j$, $j=1,2$, be two orthogonal projections on $\cH_{r}$ given by  monomials of $p$'s and $q$'s, each with $n_j$ factors $q$ and $r-n_j$ factors $p$.
Set $n:=n_2-n_1$.
Then
\begin{equation}\label{eq:commutation}
Q_1\,A_{1\dots r}\,\widehat{f} \,Q_2\;=\;Q_1\,\widehat{\tau_n f}\,A_{1\dots r}\,Q_2  
\end{equation}
as an identity of bounded operators on $\cH_N$ (with a tacit  identification $Q_j\equiv Q_j\otimes \mathbbm{1}_{N-r}$).
\end{lemma}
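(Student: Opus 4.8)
The plan is to exploit the tensor factorisation $\cH_N\cong\cH_r\otimes\cH_{N-r}$ obtained by separating the first $r$ slots from the remaining $N-r$, and to keep track of how many factors $q$ sit in each block. First I would introduce, for $0\leqslant a\leqslant r$ and $0\leqslant b\leqslant N-r$, the block projections $\Pi_a$ on $\cH_r$ and $\Pi'_b$ on $\cH_{N-r}$ onto the subspaces carrying exactly $a$, respectively $b$, factors $q$ (the obvious analogues of the $P_k$ for each block). Directly from the definition of $P_k$ as the sum of all $N$-fold monomials with $k$ factors $q$, one gets the combinatorial decomposition $P_k=\sum_{a+b=k}\Pi_a\otimes\Pi'_b$, and consequently $\widehat f=\sum_{a,b}f(a+b)\,\Pi_a\otimes\Pi'_b$.

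The crucial elementary input is an absorption rule for the monomial projections $Q_1,Q_2$: since $Q_j$ is itself a monomial on $\cH_r$ with exactly $n_j$ factors $q$, it is one of the summands of $\Pi_{n_j}$, whence $\Pi_a Q_2=\delta_{a,n_2}Q_2$ and $Q_1\Pi_a=\delta_{a,n_1}Q_1$. Using this together with $A_{1\cdots r}=A\otimes\mathbbm{1}_{N-r}$, I would compute the left-hand side by first letting $Q_2$ act on the right of $\widehat f$, which forces $a=n_2$ and collapses the double sum to
\[
\widehat f\,Q_2\;=\;\sum_{b} f(n_2+b)\,Q_2\otimes\Pi'_b\,.
\]
Since $A$ acts only on the first factor it commutes with every $\Pi'_b$, so applying $A_{1\cdots r}$ and then $Q_1$ on the left gives
\[
Q_1\,A_{1\cdots r}\,\widehat f\,Q_2\;=\;\sum_b f(n_2+b)\,(Q_1AQ_2)\otimes\Pi'_b\,.
\]

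For the right-hand side I would run the same computation in the opposite order: now $Q_1$ absorbs $\Pi_a$ from the left of $\widehat{\tau_n f}=\sum_{a,b}f(a+b+n)\,\Pi_a\otimes\Pi'_b$, forcing $a=n_1$, so that
\[
Q_1\,\widehat{\tau_n f}\;=\;\sum_b f(n_1+b+n)\,Q_1\otimes\Pi'_b\;=\;\sum_b f(n_2+b)\,Q_1\otimes\Pi'_b\,,
\]
where the last equality is exactly the point at which the hypothesis $n=n_2-n_1$ is used. Inserting $A_{1\cdots r}Q_2$ on the right and using once more that $A$ commutes with the $\Pi'_b$ reproduces the identical expression found for the left-hand side, which establishes \eqref{eq:commutation}. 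Boundedness causes no trouble, since $\widehat f$ and $\widehat{\tau_n f}$ are bounded by $\max_k|f(k)|$ and $A$ is bounded by assumption.

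I expect the only genuine care to lie in the bookkeeping of the block decomposition: one must verify that the index shared by the two sides is the $q$-count $b$ of the \emph{untouched} block $\cH_{N-r}$, and that the shift by $n$ in the argument of $f$ precisely compensates the difference between the $q$-counts $n_1$ and $n_2$ pinned down by $Q_1$ and $Q_2$. No analytic obstacle arises; the whole content is the identity $P_k=\sum_{a+b=k}\Pi_a\otimes\Pi'_b$ together with the absorption rule $Q_1\Pi_a=\delta_{a,n_1}Q_1$, $\Pi_a Q_2=\delta_{a,n_2}Q_2$.
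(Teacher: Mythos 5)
Your proof is correct: the block decomposition $P_k=\sum_{a+b=k}\Pi_a\otimes\Pi'_b$, the absorption rules $Q_1\Pi_a=\delta_{a,n_1}Q_1$ and $\Pi_a Q_2=\delta_{a,n_2}Q_2$ (orthogonality of distinct $p$-$q$ monomials), and the commutation of $A\otimes\mathbbm{1}_{N-r}$ with $\mathbbm{1}_r\otimes\Pi'_b$ reduce both sides to the same expression $\sum_b f(n_2+b)\,(Q_1AQ_2)\otimes\Pi'_b$, the shift $n=n_2-n_1$ doing exactly the compensation you describe. The paper itself gives no proof of this lemma, importing it from \cite[Lemma 3.10]{kp-2009-cmp2010}, and your argument is essentially the standard one found there, so the two approaches coincide.
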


In fact, as a consequence of the presence of \emph{two-body} interactions only in the many-body Hamiltonian $H_{N_1,N_2}$, the use of Lemma \ref{lemma:exchange-hat} is in practice limited to the case $r=2$: formula \eqref{eq:commutation} then reads
\begin{equation}\label{eq:hat-business-for-2}
\begin{split}
p_1 \, p_2\,  A_{12}\, \widehat{f} \,q_1 \,p_2 \;&=\;p_1\,p_2\,\widehat{\tau_1 f}\,A_{12}\, q_1\,p_2\\
p_1 \, p_2\,  A_{12}\, \widehat{f} \,q_1 \,q_2 \;&=\;p_1\,p_2\,\widehat{\tau_2 f}\,A_{12}\, q_1\,q_2\\
q_1 \, p_2\,  A_{12}\, \widehat{f} \,q_1 \,q_2 \;&=\;q_1\,p_2\,\widehat{\tau_1 f}\,A_{12}\, q_1\,q_2
\end{split}
\end{equation}
while in all other cases with equal number of $q$'s on the left and on the right we have a commutation of the form $\sharp_1\sharp_2 A_{12}\,\widehat{f}\,\sharp_1\sharp_2=\sharp_1\sharp_2 \,\widehat{f}\,A_{12}\,\sharp_1\sharp_2$.

%\begin{acknowledgements}
%If you'd like to thank anyone, place your comments here
%and remove the percent signs.
%\end{acknowledgements}

%\bibliographystyle{spmpsci}
%\bibliography{bib_ALE.bib}

% BibTeX users please use one of
%\bibliographystyle{spbasic}      % basic style, author-year citations
%\bibliographystyle{spmpsci}      % mathematics and physical sciences
%\bibliographystyle{spphys}       % APS-like style for physics
%\bibliography{}   % name your BibTeX data base

\def\cprime{$'$}

\end{document}